\newtheorem{theorem}{Theorem}[section]
\newtheorem{lemma}[theorem]{Lemma}
\newtheorem{proposition}[theorem]{Proposition}
\newtheorem{corollary}[theorem]{Corollary}
\newtheorem{definition}[theorem]{Definition}
\newtheorem{remark}[theorem]{Remark}
\newtheorem{example}[theorem]{Example}
\newtheorem{algorithm}{Algorithm}[section]
\crefname{equation}{eq.}{equations}
\crefname{definition}{def.}{definitions}
\crefname{section}{sec.}{sections}
\crefname{lemma}{lem.}{lemmata}
\crefname{observation}{obs.}{observations}
\crefname{corollary}{cor.}{corollaries}
\crefname{proposition}{prop.}{propositions}
\crefname{remark}{rem.}{remarks}
\crefname{enumi}{step}{steps}
\crefname{theorem}{thm.}{theorems}
\crefname{algorithm}{alg.}{algorithms}
\crefname{figure}{fig.}{figures}
\DeclareFontFamily{U}{mathx}{\hyphenchar\font45}
\DeclareFontShape{U}{mathx}{m}{n}{
      <5> <6> <7> <8> <9> <10>
      <10.95> <12> <14.4> <17.28> <20.74> <24.88>
      mathx10
      }{}
\DeclareSymbolFont{mathx}{U}{mathx}{m}{n}
\DeclareMathAccent{\widecheck}{0}{mathx}{"71}
\DeclareMathAccent{\wideparen}{0}{mathx}{"75}
  \def\bm{\boldsymbol}
\def\Nz{\mathbb{N}}
\def\N{\mathbb{N}}
\def\Z{\mathbb{Z}}
\def\Q{\mathbb{Q}}
\def\K{\mathbb{K}}
\def\R{\mathbb{R}}
\def\Pr{\mathbb{P}}
\def\P{\mathcal{P}}
\def\b{\mathfrak{b}}
\def\rows{\mathtt{Rows}}
\def\lm{{\mathtt{LM}}}
\def\LM{\lm}
\def\sm{{\prec}}
\def\smh{{\prec_h}}
\def\groebner{Gr\"{o}bner\xspace}
\def\grobner{\groebner}
\def\Modu{\mathrm{M}}
\def\M{\mathcal{M}}
\def\spDeg{\delta}
\def\degHom{{\mathrm{deg}}}
\def\degAff{\delta^A}
\def\dehom{\chi}
\def\hom{\dehom^{-1}}
\newcommand{\mon}[1]{\bm{X}^{#1}}
\newcommand{\monhom}[2]{\bm{X}^{({#1}, {#2})}}
\newcommand{\divs}[2]{{#1} || {#2}}
\newcommand\possiblebreak{\ifhmode\unskip\space\hfil\penalty0\hfilneg\fi}
\newcommand{\Supp}{\texttt{sp}}
\newif\if@borderstar
\def\bordermatrix{\@ifnextchar*{%
  \@borderstartrue\@bordermatrix@i}{\@borderstarfalse\@bordermatrix@i*}%
}
\def\@bordermatrix@i*{\@ifnextchar[{%
  \@bordermatrix@ii}{\@bordermatrix@ii[()]}
}
\def\@bordermatrix@ii[#1]#2{%
  \begingroup
    \m@th\@tempdima8.75\p@\setbox\z@\vbox{%
      \def\cr{\crcr\noalign{\kern 2\p@\global\let\cr\endline }}%
      \ialign {$##$\hfil\kern 2\p@\kern\@tempdima & \thinspace %
      \hfil $##$\hfil && \quad\hfil $##$\hfil\crcr\omit\strut %
      \hfil\crcr\noalign{\kern -\baselineskip}#2\crcr\omit %
      \strut\cr}}%
    \setbox\tw@\vbox{\unvcopy\z@\global\setbox\@ne\lastbox}%
    \setbox\tw@\hbox{\unhbox\@ne\unskip\global\setbox\@ne\lastbox}%
    \setbox\tw@\hbox{%
      $\kern\wd\@ne\kern -\@tempdima\left\@firstoftwo#1%
        \if@borderstar\kern2pt\else\kern -\wd\@ne\fi%
      \global\setbox\@ne\vbox{\box\@ne\if@borderstar\else\kern 2\p@\fi}%
      \vcenter{\if@borderstar\else\kern -\ht\@ne\fi%
        \unvbox\z@\kern-\if@borderstar2\fi\baseline skip}
        \if@borderstar\kern-2\@tempdima\kern2\p@\else\,\fi\right\@secondoftwo#1 $%
    }\null \;\vbox{\kern\ht\@ne\box\tw@}%
  \endgroup
}
\begin{document}

\title{Towards Mixed Gr\"{o}bner Basis Algorithms: \\ the Multihomogeneous and Sparse Case}

\author{Mat\'{i}as R. Bender, Jean-Charles Faug\`ere, and Elias Tsigaridas}

\date{
Sorbonne Universit\'e, \textsc{CNRS}, \textsc{INRIA},\linebreak
    Laboratoire d'Informatique de Paris~6, \textsc{LIP6},
    \'Equipe \textsc{PolSys}, \linebreak
    4 place Jussieu, F-75005, Paris, France
    \linebreak
    \\
    March 2018\footnote{This work was originally published in ISSAC ’18, July 16–19, 2018, New York, USA}
  }


\maketitle

\vspace{-20px}

\begin{abstract}

  One of the biggest open problems in computational algebra is the
  design of efficient algorithms for Gr\"{o}bner basis computations
  that take into account the sparsity of the input polynomials.
  We can perform such computations in the case of unmixed polynomial
  systems, that is systems with polynomials having the same support, using
  the approach of 
  Faug\`ere, Spaenlehauer, and Svartz [ISSAC'14].
%
  We present two algorithms for sparse Gr\"{o}bner bases computations
  for mixed systems.
  The first one computes with mixed sparse systems and exploits the
  supports of the polynomials.
  Under regularity assumptions, it performs no reductions
  to zero.
  For mixed, square, and 0-dimensional multihomogeneous polynomial
  systems, we present a dedicated, and potentially more efficient,
  algorithm that exploits different algebraic properties that performs
  no reduction to zero.  We give an explicit bound for the
  maximal degree appearing in the computations.
\end{abstract}

\paragraph*{Keywords:} Mixed Sparse Gr\"{o}bner Basis;
Gr\"{o}bner Basis;
Multihomogeneous Polynomial System;
Solving Polynomial System;
Sparse Polynomial System;
Toric variety;


\section{Introduction}


\grobner bases are  in the heart of many algebraic
algorithms. One of the most important applications is to solve
0-dimensional polynomial systems.  A common strategy is, first to
compute a \grobner basis in some order, usually degree
lexicographic, deduce from it multiplication maps in the
corresponding quotient ring, and finally recover the lexicographic
order using FGLM \cite{faugere1993efficient}.


  Toric geometry~\cite{cox2011toric} studies the
  geometric and algebraic properties of varieties given by the image
  of monomial maps and systems of \emph{sparse} polynomial equations;
  that is systems with polynomials
  having monomials from a restrictive set.  
  Sparse resultant~\cite{gelfand2008discriminants}, that generalizes
  the classical multivariate resultant, extends these ideas in
  (sparse) elimination theory.  There are a lot of algorithms to
  compute the sparse resultant and to solve sparse systems, for
  example
  see~\cite{sturmfels1994newton,emiris1996complexity,d2002macaulay}.
  For the related problem of fewnomial systems see
  \cite{bihan2011fewnomial}.  Numerical continuation methods can also
  benefit from sparsity~\cite{li1997numerical}, as well as other
  symbolic algorithms~\cite{giusti2001grobner,herrero2013affine}.


  Recently \citet{faugere2014sparse} introduced the first algorithm to
  solve \emph{unmixed} sparse systems, that is systems of sparse 
  polynomials that have the same monomials, using \grobner basis that
  exploits sparsity.
  Their idea is to consider the polytopal algebra associated to the
  supports of the input polynomials.
  Roughly speaking, the polytopal algebra is like the
  standard polynomial algebra, where the variables are the monomials
  in the supports of the input polynomials.
  They compute a \grobner basis
  of the ideal generated by the polynomials, in the polytopal 
  algebra,
  by introducing a matrix F5-like algorithm
  \citep{faugere2002F5,eder2014survey}.
  They homogenize the polynomials and compute a
  \grobner basis degree by degree. By
  dehomogenizing the computed basis, they recover a \grobner basis
  of the original ideal. In the 0-dimensional case, they apply a
  FGLM-like algorithm~\cite{faugere1993efficient} to obtain a lexicographical 
  \grobner basis.
  If the homogenized polynomials form a regular sequence over the
  polytopal algebra, then the algorithm performs no reductions to
  zero. When the system is also 0-dimensional, they bound the
  complexity using the Castelnuovo-Mumford regularity.
  In this case, taking advantage of the sparsity led to large
  speed-ups.
  Hence, our goal is to extend~\cite{faugere2014sparse} to
  \emph{mixed} sparse polynomial systems, i.e. systems where the
  polynomials {\em do not} have necessarily the same monomials.

  
  The Castelnuovo-Mumford regularity is a fundamental invariant in
  algebraic geometry,
  related to the maximal
  degrees appearing in the minimal resolutions and the vanishing of
  the local cohomology. It is related to the complexity
  of computing \grobner basis~\cite{bayer1993can,chardin2007some}.
  The extension of this regularity in the context of toric varieties
  is known as multigraded Castelnuovo-Mumford regularity~\cite{maclagan2003uniform,maclagan_multigraded_2004,botbol_castelnuovo_2017}.


  The multihomogeneous systems form an important subclass of mixed
  sparse systems as they are ubiquitous in applications.  Their
  properties are well understood, for example, the degree (number of
  solutions) of the system~\cite{van1978varieties}, the arithmetic
  Nullstellens\"atze~\cite{d2011heights}, and the (multigraded)
  Castelnuovo-Mumford
  regularity~\cite{ha_regularity_2004,awane2005formes,sidman2006multigraded,botbol_implicitization_2011,botbol_castelnuovo_2017}.
  We can solve these systems using general purpose algorithms based on
  resultants~\cite{emiris1996complexity} and in some cases benefit
  from the existence of determinantal formulas~\cite{sturmfels1994multigraded,weyman1994multigraded},
  or we can use homotopy methods~\cite{hauenstein2015multiprojective,el2017bit}.
  For unmixed bilinear systems, we  compute a \grobner basis~\cite{faugere2011grobner} with no reductions to zero.  Using
  determinantal formulas we can solve mixed bilinear systems with two
  supports using
  eigenvalues/eigenvectors~\cite{bender_2bilinear_2018}.  In the
  unmixed case, \cite{faugere2014sparse} presents bounds for the complexity of
  computing a sparse \grobner basis.
  Our goal is to present a potentially more efficient  algorithm and bounds for
  square mixed multihomogeneous~systems.

\paragraph{Our contribution}
\label{sec:our-contribution}

We present two algorithms to solve $0$-di\-men\-sio\-nal mixed sparse
polynomial systems based on \grobner basis computations. Both of
them, under  assumptions, compute with no reductions to
zero, thus they avoid useless computations.

The first algorithm (\Cref{alg:matrixF5}) takes as input a mixed
sparse system and computes a \emph{sparse \grobner basis}
(\Cref{def:sparse-grobner-basis}).  This is a basis for the
corresponding ideal over a polytopal algebra and has similar
properties to the usual \grobner basis. Using this basis, we 
compute normal forms by a modified division algorithm
(\Cref{thm:divisionConverges}).
The orders for the monomials that we consider take into account the
supports of the polynomials and they \emph{are not} necessarily
monomial orders (\Cref{sec:monomial-orders-well}).  We prove that for
any of these orders and any ideal there is a finite sparse \grobner
basis (\Cref{thm:finiteHomogeneousSGB,thm:finiteAffineSGB}) that we
compute with a matrix F5-like algorithm, that we call $\texttt{M}^2$.
Moreover, we introduce a \emph{sparse F5 criterion} to avoid useless
computations. Under regularity assumptions, we avoid every reduction
to zero (\Cref{thm:sparse-f5}).
When the ideal is 0-dimensional, we can use a sparse \grobner
basis to compute a \grobner basis for unmixed systems introduced in 
\cite{faugere2014sparse} using FGLM.

Our second algorithm, $\texttt{M}_3\texttt{H}$, takes as input a 0-dimensional
square multihomogeneous mixed system, that has no solutions at
infinity. It outputs a monomial basis
and the multiplication map of every affine variable.
Both lie  in the quotient ring of the
dehomogenization of the (input) ideal.
Using the multigraded Castelnuovo-Mumford regularity, we present an
algorithm (\Cref{alg:matrixMultihom}) that avoids all reductions to
zero (\Cref{thm:noRedToZeroMultihom}).
Over $\Pr^{n_1} \times \cdots \times \Pr^{n_r}$, if the input
polynomials have multidegrees
$\bm{d}_1,\dots,\bm{d}_{(n_1 + \dots + n_r)} \in \N^r$,
then
the dimension
of the biggest matrix appearing in the computations is the
number of monomials of multidegree
$  \sum_{i=1}^{ n_1 + \dots + n_r} \bm{d}_i + (1,\dots,1) - (n_1,\dots,n_r) $.
This bounds the maximal degree of the polynomials appearing in the computations and generalizes the classical Macaulay bound \cite{lazard1983grobner},
which we recover for $r=1$.
Using the  multiplication matrices, we can recover the usual \grobner
basis for the dehomogenized ideal via FGLM.

\section{Preliminaries}

Let $\K$ be a field of characteristic 0,
$\bm{y} := (y_0,\dots,y_m)$, and
$\K[\bm{y}] := \K[y_0,\dots,y_m]$.
For $\alpha \in \Nz^{m+1}$, let
$\bm{y}^\alpha := \prod_{i = 0}^m y_i^{\alpha_i}$.
Let $\bar{0} := (0 \dots 0)$.

  \subsection{Semigroup Algebra}
  \label{sec:semigroup-algebra}
 
  An affine semigroup $S$ is a finitely-generated additive
  subsemigroup of $\Z^n$, for some $n \in \N$, such that it contains
  ${0} \in \Z^n$. The semigroup algebra $\K[S]$ is the
  $\K$-algebra generated by $\{\mon{s}, s \in S\}$, where
  $\mon{s} \cdot \mon{t} = \mon{s + t}$. The set of monomials of
  $\K[S]$ is $\{ \mon{s} , s \in S\}$.

  Let $\{a_0,a_1,\dots,a_m\}$ be a set of generators of
  $S \subset \Z^{n}$. Let $e_0\dots e_{m}$ be the canonical basis
  of $\Z^{m+1}$. Consider the homomorphism $\rho : \Z^{m+1} \rightarrow S$ that
  sends $e_i$ to $a_i$, for $0 \leq i \leq m$. Then,
  $\K[S]$ is isomorphic to the quotient ring
  $\K[\bm{y}] / T$, where $T$ is the lattice ideal
  $T := \langle \bm{y}^{u} - \bm{y}^{v} | u,v \in \Nz^{m+1}, \rho(u-v)
  = 0 \rangle$ \cite[Thm~7.3]{miller2004combinatorial}. Moreover, the
  ideal $T$ is prime and $\K[S]$ is an integral
  domain~\cite[Thm~7.4]{miller2004combinatorial}.

  An affine semigroup $S$ is pointed if it does not contain non-zero
  invertible elements, that is for all
  $s, t \in S \setminus \{\bar{0}\}$, $s + t \neq 0$
  \cite[Def~7.8]{miller2004combinatorial}. As in~\cite{faugere2014sparse}, we consider only pointed affine
  semigroups.

  Let $M_1,\dots,M_k \subset \R^n$ be polytopes containing
  $0$. We consider two different semigroups associated to them.
  First, we consider the affine semigroup $(S_{M_1,\dots,M_k},`+`)$
  generated by the elements in $\cup_{i=1}^k (M_i \cap \Z^n)$ with the addition
  over $\Z^n$. Second, we consider the affine semigroup
  $(S_{M_1,\dots,M_k}^h,`+`)$, generated by the elements in
  $\cup_{i=1}^k \{(s,e_i) : s \in M_i \cap \Z^n\}$, 
  with the addition over $\Z^{n+k}$, where
  $e_1,\dots,e_k$ is the standard basis of $\R^k$.

  \subsection{Sparse degree and homogenization}
  \label{sec:sparse-degree-and-homogenization}
  
  Given a monomial $\monhom{s}{d} \in \K[S_{M_1,\dots,M_k}^h]$, we
  define its \emph{degree} as $\degHom(\monhom{s}{d}) := d \in \Nz^k$.
  With this grading, the semigroup algebra $\K[S_{M_1,\dots,M_k}^h]$
  is multigraded by $\Nz^k$ and generated, as a $\K$-algebra, by the
  elements of degrees $e_1,\dots,e_k$, so it is multihomogeneous.
  For each $d \in \Nz^k$, let $\K[S_{M_1,\dots,M_k}^h]_d$ be the
  vector space of the multihomogeneous polynomials in
  $\K[S_{M_1,\dots,M_k}^h]$ of degree $d \in \N^k$.

  We define the dehomogenization of $\monhom{s}{d}$ as the epimorphism
  that takes $\monhom{s}{d} \in \K[S_{M_1 \dots M_k}^h]$ to
  $\dehom(\monhom{s}{d}) = \mon{s} \in \K[S_{M_1 \dots M_k}]$.
  For
  an ideal $I^h$, $\dehom(I^h)$ means that we apply $\dehom$ to the
  elements of $I^h$.
  
  \begin{remark} \label{rmk:biggerDegSameDehom}
    For an ideal $I^h$, for every $f \in I^h \cap \K[S_{M_1,\dots,M_k}^h]_d$, and $D \geq d$,
    component-wise, there is $f' \in I^h \cap \K[S_{M_1,\dots,M_k}^h]_D$ such
    that $\dehom(f) = \dehom(f') \in \dehom(I^h)$.
  \end{remark}
  
  When we work only with one polytope $M$, that is $k=1$, we define
  the \emph{affine degree} of $\mon{s} \in \K[S_M]$,
  $\degAff(\mon{s})$, as the smallest $d \in \N$ such that
  $\monhom{s}{d} \in \K[S_M^h]$.
  We extend this definition to the affine polynomials in $\K[S_M]$ as
  the maximal affine degree of each monomial. That is, for
  $f := \sum_{s \in S_M} c_s \mon{s} \in \K[S_M]$, the affine degree of
  $f$ is $\degAff(f) := \max_{s\in S_M}(\degAff(\mon{s}) : c_s \neq 0)$.
  Let $\K[S_M]_{\leq d}$ be the set of all polynomials in $\K[S_M]$ of
  degree at most $d$.
  The map $\hom : \K[S_M] \rightarrow \K[S_M^h]$ defines the
  homogenization of $f := \sum_{s \in S_M} c_s \mon{s} \in \K[S_M]$,
  where
  $\hom(f) := \sum_{s \in S_M} c_s \monhom{s}{\degAff(f)} \in
  \K[S_M^h]$. Note that this map is not a homomorphism.  For an ideal
  $I$, $\hom(I)$ is the homogeneous ideal generated by applying $\hom$
  to every element of $I$.
  
  Finally, given a polynomial $f \in \K[S_M^h]$ we define its
  \emph{sparse degree} as $\spDeg(f) := \degAff(\dehom(f))$.
  Note that, the degree is always bigger or equal to the sparse
  degree.
  Even though we use the name sparse degree, it does not give a
  graded structure to the $\K$-algebra $\K[S_M^h]$.

  \subsection{Mixed systems and Regularity}
  \label{sec:mixed-regularity}
  
  Consider polytopes $M_1,\dots,M_k$ and a polynomial system
  $(f_1 \dots f_k)$ such that
  $f_i \in \K[S_{M_1,\dots,M_k}^h]_{e_i}$.
%
  We say the system is \emph{regular} if $f_1,\dots,f_k$ form a
  regular sequence over $\K[S_{M_1,\dots,M_k}^h]$.
  Similarly, $( \dehom(f_1)\dots\dehom(f_k) )$, that is the
  dehomogenization of $( f_1,\dots,f_k )$, is~\emph{re\-gular} if
  $(\dehom(f_1)\dots\dehom(f_k))$ form a
  regular sequence
  over $\K[S_{M_1 \dots M_k}]$.

  When all the polytopes are the same these definitions \emph{match}
  the definition of regularity for unmixed systems~\cite{faugere2014sparse}. When every polytope is a $n$-simplex,
  these definitions are related to the standard definition of
  regularity~\cite[Chp.~17]{eisenbud2013commutative}.

  Like in the (standard) homogeneous case, the order of the polynomials
  does not affect the regularity of the system $(f_1.\dots,f_k)$. In
  addition, the dehomogenization preserves the regularity property.
  
  \begin{lemma}
    Consider $f_i \in \K[S_{M_1,\dots,M_k}^h]_{e_i}$ and $\sigma$ a
    permutation of $\{1,\dots,k\}$. If $f_1,\dots,f_k$ is a regular
    sequence over $\K[S_{M_1,\dots,M_k}^h]$, then
    $(\dehom(f_{\sigma_1}), \dots, \dehom(f_{\sigma_k}))$ is a regular sequence
    over $\K[S_{M_1,\dots,M_k}]$.
  \end{lemma}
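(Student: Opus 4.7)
The plan is to decouple the lemma into two independent assertions: permutation invariance of the regular sequence inside $\K[S_{M_1,\ldots,M_k}^h]$, followed by preservation of regularity under $\dehom$. Each $f_i$ is multihomogeneous of degree $e_i$ for the $\N^k$-grading of $\K[S_{M_1,\ldots,M_k}^h]$, and this ring is a Noetherian integral domain (its defining lattice ideal $T$ is prime). In such a multigraded setting the standard graded version of the permutation lemma for regular sequences applies: for any permutation $\sigma$, the sequence $(f_{\sigma_1},\ldots,f_{\sigma_k})$ is again regular in $\K[S_{M_1,\ldots,M_k}^h]$. This reduces the problem to proving that the dehomogenization of an arbitrary regular sequence of multihomogeneous elements of degrees $e_1,\ldots,e_k$ is regular in $\K[S_{M_1,\ldots,M_k}]$.

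To pass from $\K[S_{M_1,\ldots,M_k}^h]$ to $\K[S_{M_1,\ldots,M_k}]$, the plan is to compare the two rings through a localization. For each $i$, since $0\in M_i$, the monomial $y_i:=\monhom{0}{e_i}$ is a nonzero element of $\K[S_{M_1,\ldots,M_k}^h]_{e_i}$. Because $\K[S_{M_1,\ldots,M_k}^h]$ is a domain, the elements $y_1,\ldots,y_k$ are regular, so the localization $R':=\K[S_{M_1,\ldots,M_k}^h][y_1^{-1},\ldots,y_k^{-1}]$ is flat over $\K[S_{M_1,\ldots,M_k}^h]$ and $(f_{\sigma_1},\ldots,f_{\sigma_k})$ remains regular in $R'$. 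Each $y_i$ is now a unit, so multiplying generators by units does not alter the ideal and the normalized sequence $g_i:=f_{\sigma_i}/y_{\sigma_i}$ is still regular in $R'$; crucially, each $g_i$ lies in the degree-$\bar 0$ part $R'_{\bar 0}$ of the $\Z^k$-graded ring $R'$.

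Next, the plan is to identify $R'_{\bar 0}$ with $\K[S_{M_1,\ldots,M_k}]$ through the map induced by $\dehom$, which sends $\monhom{s}{d}/(y_1^{d_1}\cdots y_k^{d_k})\mapsto \mon{s}$. Under this isomorphism $g_i\mapsto \dehom(f_{\sigma_i})$. It remains to argue that regularity of $(g_1,\ldots,g_k)$ in $R'$ descends to regularity in $R'_{\bar 0}$: if $h\cdot g_i\in(g_1,\ldots,g_{i-1})R'_{\bar 0}$, regularity in $R'$ yields $h\in(g_1,\ldots,g_{i-1})R'$, and extracting the degree-$\bar 0$ component of that expression—using that both $h$ and the generators already lie in $R'_{\bar 0}$—gives the membership in $R'_{\bar 0}$ that is needed.

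The main obstacle is the identification $R'_{\bar 0}\cong \K[S_{M_1,\ldots,M_k}]$. This requires tracking the $\Z^k$-grading on the localization $R'$ and verifying that every element of $R'_{\bar 0}$ can be written as a $\K$-linear combination of monomials $\monhom{s}{d}/(y_1^{d_1}\cdots y_k^{d_k})$, together with checking that this correspondence is injective because the only monomial relations in $\K[S_{M_1,\ldots,M_k}^h]$ are those lifted from $\K[S_{M_1,\ldots,M_k}]$ together with the trivial ones relating the $y_i$'s to themselves. Once this isomorphism is established, the remaining steps—flatness of localization, stability of regularity under passing to the degree-$\bar 0$ part, and identification of $g_i$ with $\dehom(f_{\sigma_i})$—are routine.
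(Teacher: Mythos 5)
Your proposal is correct, and it reaches the dehomogenization step by a genuinely different route than the paper. The first move is the same: both you and the paper dispose of the permutation by the graded version of the permutation lemma (the paper cites Bourbaki), reducing to showing that $\dehom(f_1),\dots,\dehom(f_k)$ is regular. For that reduction the paper argues by hand: given a relation $\sum_i \bar g_i\,\dehom(f_i)=0$, it lifts the $\bar g_i$ to multihomogeneous $g_i'$ of one common large degree $D$ using \Cref{rmk:biggerDegSameDehom}, notes that $\dehom$ is injective on $\K[S_{M_1,\dots,M_k}^h]_D$ to obtain a homogeneous syzygy $\sum_i g_i' f_i=0$, applies regularity in $\K[S_{M_1,\dots,M_k}^h]$, and dehomogenizes the conclusion. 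You instead invert the homogenizing monomials $\monhom{0}{e_i}$, use exactness of localization to keep the sequence regular, normalize by these units so the sequence sits in degree $\bar 0$, identify the degree-$\bar 0$ part of the localization with $\K[S_{M_1,\dots,M_k}]$ via $\monhom{s}{d}/(\monhom{0}{e_1})^{d_1}\cdots(\monhom{0}{e_k})^{d_k}\mapsto\mon{s}$, and descend regularity by extracting degree-$\bar 0$ components. Your approach is the standard structural picture (the affine algebra as the degree-zero part of the graded localization at the homogenizing monomials) and makes the mechanism behind the paper's degree-lifting trick explicit; its only nontrivial ingredient is precisely the isomorphism $R'_{\bar 0}\cong\K[S_{M_1,\dots,M_k}]$, and your sketch of it is sound, since $\K[S_{M_1,\dots,M_k}^h]$ is a domain and the projection of the localized semigroup onto its degree-zero part is exactly $S_{M_1,\dots,M_k}$. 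The paper's argument is more elementary and self-contained, needing only \Cref{rmk:biggerDegSameDehom} and injectivity of $\dehom$ on each graded piece. One small remark: like the paper, you verify only the nonzerodivisor conditions and never the properness of the ideals, so both proofs implicitly use the weak notion of regular sequence; this is consistent with the paper's usage.
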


  \begin{proof}
%
    If $f_1,\dots,f_k$ is a regular sequence, then any permutation of
    them it is regular~\cite[\S 9,
    Cor. 2]{bourbaki2007algebre}. Hence, we just have to prove that
    $\dehom(f_1),\dots,\dehom(f_k)$ is a regular sequence.
    For $w \leq k$, consider a polynomial
    $\bar{g}_w \in \K[S_{M_1,\dots,M_k}]$ such that
    $\bar{g}_w \cdot \dehom(f_w) \in \langle \dehom(f_1), \dots,
    \dehom(f_{w-1}) \rangle$. Then, there are polynomials
    $\bar{g}_1,\dots,\bar{g}_{w-1} \in \K[S_{M_1,\dots,M_k}]$ such
    that $\sum_{i=1}^w \bar{g}_i \dehom(f_i) = 0$. As $\dehom$ is an
    epimorphism, for each $\bar{g}_i$, there is
    $g_i \in \K[S_{M_1,\dots,M_k}^h]$ multihomogeneous such that
    $\dehom({g_i}) = \bar{g}_i$.
    Consider a vector $D$, such that $\forall i,j$,
    $D - \deg(f_i) \geq \deg(g_j)$. Then, by
    \Cref{rmk:biggerDegSameDehom}, there are
    multihomogeneous polynomials
    $g_i' \in \K[S_{M_1,\dots,M_k}^h]_{D - \degHom(f_i)}$, such that
    $\dehom({g_i'}) = \bar{g}_i$. Note that, $\dehom$ restricted to \linebreak
    $\K[S_{M_1,\dots,M_k}^h]_D$ is injective. Hence
    $\dehom(\sum_{i=1}^w {g_i'} f_i) =\sum_{i=1}^w \bar{g}_i
    \dehom(f_i) = 0$ implies $\sum_{i=1}^w {g_i'} f_i = 0$. As
    $f_1,\dots,f_w$ is a regular sequence,
    $\small {g_w'} \in \langle f_1, \dots, f_{w-1} \rangle$ and
    $\small \bar{g}_w \in \langle \dehom(f_1), \dots, \dehom(f_{w-1})
    \rangle$.
  \end{proof}

  The proof of existence of regular systems is beyond the scope of
  this paper.  Nevertheless, we can report that we have performed
  several experiments with many different sparse mixed systems, taking
  generic coefficients, and all them  were regular.

  \subsection{Orders for  Monomials}
  \label{sec:monomial-orders-well}
  As in the standard case, a monomial order $<$ for $\K[S]$ is a
  well-order compatible with the multiplication on $\K[S]$, that is
  $\forall s \in S, s \neq 0 \implies \mon{0} < \mon{s}$ and
  $\forall s,r, t \in S, \mon{s} < \mon{r} \implies \mon{s+t} <
  \mon{r+t}$.  These orders exist on $\K[S]$ if and only if $S$ is
  pointed, \cite[Def~3.1]{faugere2014sparse}.
    
  Given any well-order $<$ for $\K[S_M]$, we can extend it to a
  well-order $<_{h}$, the grading of $<$, for $\K[S_M^h]$ as follows:
  \begin{equation} \label{eq:graduateOrder}
    \monhom{s}{d} < \monhom{r}{d'} \iff
    \begin{cases}
      d < d' & \\ d = d' \land \mon{s} < \mon{r}
    \end{cases}
  \end{equation}
  If $<$ is a monomial order, then $<_{h}$ is a monomial
  order too.
  
  Given an ideal $I \subset \K[S_M]$, a common issue is to study the
  vector space $I \cap \K[S_M]_{\leq d}$, i.e. the elements of $I$ of
  degree smaller or equal to $d$. This information allow us, for
  example, to compute the Hilbert Series of the affine ideal.
%
  It is also important for computational reasons. For example, to
  maintain the invariants in the signature-based \groebner basis
  algorithms, as the F5 algorithm~\cite{faugere2002F5,eder2014survey}.

  In our setting, to compute a basis of $I \cap \K[S_M]_{\leq d}$, we
  have to work with an order for the monomials in $\K[S_M]$ that takes
  into account the sparse degree.
  This order, $\prec$, is such that for any
  $\mon{s}, \mon{r} \in \K[S_M]$,
  $\degAff(\mon{s}) < \degAff(\mon{r}) \implies \mon{s} \prec
  \mon{r}$. Unfortunately, for most of the polytopal algebras
  $\K[S_M]$, \emph{there is no monomial order} with this
  property. Therefore, we are forced to work with well-orders that are
  not monomial orders.
  \begin{example}
    Consider the semigroup generated by \linebreak
    $M := \{[0,0], [1,0], [0,1], [1,1]\} \subset \N^2$. Consider a
    monomial order $<$ for $\K[S_M]$. Without loss of generality,
    assume $\mon{[1,0]} < \mon{[0,1]}$. Then,
    $\mon{[2,0]} < \mon{[1,1]} < \mon{[0,2]}$. But,
    $\degAff(\mon{[2,0]}) = 2$ and $\degAff(\mon{[1,1]}) = 1$. So,
    no monomial order on $\K[S_M]$ takes into account the sparse degree.
  \end{example}
  
  Given a monomial order $<_M$ for $\K[S_M]$, we define the
  \emph{sparse order} $\prec$ for $\K[S_M]$ as follows.
  \begin{equation} \label{eq:ordFromMonomial}
    \mon{s} \prec \mon{r} \iff
    \begin{cases}
      \degAff(\mon{s}) < \degAff(\mon{r}) \\
      \degAff(\mon{s}) = \degAff(\mon{r}) \land \mon{s} <_M \mon{r}
    \end{cases}
  \end{equation}

  Let $\prec_{h}$ be the grading of the sparse order of
  $\K[S_M^h]$ (Eq.~\ref{eq:graduateOrder}). We call this order the
  \emph{graded sparse order}.

  \begin{remark} \label{thm:homDehomOrder}
    By definition, these two orders
    are the same for monomials of the same degree. That is,
    $$
    \forall \monhom{s}{d}, \monhom{r}{d} \in \K[S_M^h],
    \, \monhom{s}{d} \prec_{h} \monhom{r}{d} \iff \mon{s} \prec
    \mon{r} \enspace.$$
  \end{remark}
  Usually, this order is not compatible with the
  multiplication. But,
  \begin{lemma} \label{thm:compatMultAff}
    If $\mon{s} \prec \mon{t}$ and
    $\degAff(\mon{r}) + \degAff(\mon{t}) = \degAff(\mon{t} \cdot
    \mon{r})$, then
    $\mon{s} \cdot \mon{r} \prec \mon{t} \cdot \mon{r}$.
  \end{lemma}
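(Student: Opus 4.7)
The plan is to proceed by a straightforward case analysis on the two clauses in the definition of the sparse order (Eq.~\ref{eq:ordFromMonomial}). The single auxiliary fact that I need beforehand is that the affine degree is \emph{subadditive}: for any $\mon{s}, \mon{r} \in \K[S_M]$, one has $\degAff(\mon{s} \cdot \mon{r}) \leq \degAff(\mon{s}) + \degAff(\mon{r})$. This follows directly from the definition of $\degAff$ as the minimal $d$ with $\monhom{s}{d} \in \K[S_M^h]$: if $\monhom{s}{d_s}$ and $\monhom{r}{d_r}$ lie in $\K[S_M^h]$, then so does their product $\monhom{s+r}{d_s+d_r}$, so $\degAff(\mon{s} \cdot \mon{r}) \leq d_s + d_r$.

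Since $\mon{s} \prec \mon{t}$, there are two cases. In the first case, $\degAff(\mon{s}) < \degAff(\mon{t})$. Combining subadditivity with the additivity hypothesis on $\mon{t} \cdot \mon{r}$ gives
\[
\degAff(\mon{s} \cdot \mon{r}) \leq \degAff(\mon{s}) + \degAff(\mon{r}) < \degAff(\mon{t}) + \degAff(\mon{r}) = \degAff(\mon{t} \cdot \mon{r}),
\]
so $\mon{s} \cdot \mon{r} \prec \mon{t} \cdot \mon{r}$ by the first clause of~\eqref{eq:ordFromMonomial}.

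In the second case, $\degAff(\mon{s}) = \degAff(\mon{t})$ and $\mon{s} <_M \mon{t}$. Again by subadditivity, $\degAff(\mon{s} \cdot \mon{r}) \leq \degAff(\mon{s}) + \degAff(\mon{r}) = \degAff(\mon{t} \cdot \mon{r})$, where the equality uses the hypothesis. If this inequality is strict, we conclude $\mon{s} \cdot \mon{r} \prec \mon{t} \cdot \mon{r}$ from the first clause. Otherwise the affine degrees of $\mon{s} \cdot \mon{r}$ and $\mon{t} \cdot \mon{r}$ agree, so we must verify the second clause, namely $\mon{s} \cdot \mon{r} <_M \mon{t} \cdot \mon{r}$. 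This is immediate because $<_M$ is a genuine monomial order on $\K[S_M]$ and is therefore compatible with multiplication.

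No serious obstacle arises; the only care needed is using the additivity hypothesis on $\mon{t} \cdot \mon{r}$ precisely at the points where subadditivity alone would not be enough to compare $\degAff(\mon{s} \cdot \mon{r})$ with $\degAff(\mon{t} \cdot \mon{r})$. The hypothesis is essential because in a general polytopal algebra $\degAff$ is only subadditive, and without it one could have $\degAff(\mon{t} \cdot \mon{r}) < \degAff(\mon{t}) + \degAff(\mon{r})$, which would break the comparison in Case~1.
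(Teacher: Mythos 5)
Your proof is correct and follows essentially the same route as the paper's: subadditivity of $\degAff$ combined with the additivity hypothesis to compare the affine degrees, then compatibility of the monomial order $<_M$ with multiplication in the equal-degree case. The only difference is organizational (you split into the two clauses of the sparse order up front, while the paper runs a single chain of inequalities and then distinguishes the strict/equal cases), which does not change the substance.
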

  \begin{proof}
    Note that $\degAff$ satisfies the triangular inequality,
    $\degAff(\mon{s + r})$ $\leq \degAff(\mon{s}) +
    \degAff(\mon{r})$.
    As $\mon{s} \prec \mon{t}$,
    $\degAff(\mon{s}) \leq \degAff(\mon{t})$.
    By assumption,
    $\degAff(\mon{t}) + \degAff(\mon{r}) = \degAff(\mon{t + r})$.
    So,
    $\degAff(\mon{s + r}) \leq \degAff(\mon{s}) + \degAff(\mon{r}) \leq 
    \degAff(\mon{t}) + \degAff(\mon{r}) \leq \degAff(\mon{t + r})$.
    Hence, either $\degAff(\mon{s + r}) < \degAff(\mon{t + r})$ or
    the sparse degree is the same. In the second case, we conclude
    $\degAff(\mon{s}) = \degAff(\mon{t})$, and so
    $\mon{s} <_M \mon{t}$. As $<_M$ is a monomial order,
    $\mon{s+r} <_M \mon{t+r}$.
    Hence, $\mon{s} \cdot \mon{r} \prec \mon{t} \cdot \mon{r}$.
  \end{proof}
  
  We extend this property to the homogeneous case.
  
  \begin{corollary} \label{thm:compatMultHom}
    If $\monhom{s}{d_s} \prec \monhom{t}{d_t}$ and
    $\spDeg(\monhom{r}{d_r}) + \spDeg(\monhom{t}{d_t}) =
    \spDeg(\monhom{r}{d_r} \cdot \monhom{t}{d_t})$, then
    $\monhom{s}{d_s} \cdot \monhom{r}{d_r} \prec \monhom{t}{d_t} \cdot
    \monhom{r}{d_r}$.
  \end{corollary}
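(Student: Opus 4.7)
The plan is to reduce the corollary to the affine Lemma~\ref{thm:compatMultAff} by exploiting two structural facts: multiplication of homogeneous monomials commutes with dehomogenization ($\monhom{s}{d_s}\cdot\monhom{r}{d_r} = \monhom{s+r}{d_s+d_r}$, whose dehomogenization is $\mon{s+r} = \mon{s}\cdot\mon{r}$), and by Remark~\ref{thm:homDehomOrder} the graded sparse order $\prec_h$ restricted to monomials of equal degree is exactly the sparse order $\prec$ on the dehomogenizations. The proof then splits along the two branches of Eq.~\ref{eq:graduateOrder}, since $\monhom{s}{d_s}\prec_h\monhom{t}{d_t}$ means either $d_s<d_t$ or ($d_s=d_t$ and $\mon{s}\prec\mon{t}$).

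In the first case, $d_s<d_t$ implies immediately $d_s+d_r<d_t+d_r$, so by the first branch of Eq.~\ref{eq:graduateOrder} applied to the products $\monhom{s+r}{d_s+d_r}$ and $\monhom{t+r}{d_t+d_r}$, the inequality is preserved. This case uses nothing beyond the definition of $\prec_h$ and does not require the sparse-degree hypothesis.

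In the second case, where $d_s=d_t$, I would first translate the hypothesis of the corollary into the hypothesis of Lemma~\ref{thm:compatMultAff}. By definition $\spDeg(\monhom{u}{d}) = \degAff(\dehom(\monhom{u}{d})) = \degAff(\mon{u})$, so the additivity assumption $\spDeg(\monhom{r}{d_r}) + \spDeg(\monhom{t}{d_t}) = \spDeg(\monhom{r}{d_r}\cdot\monhom{t}{d_t})$ becomes exactly $\degAff(\mon{r}) + \degAff(\mon{t}) = \degAff(\mon{r+t})$. Combined with $\mon{s}\prec\mon{t}$ (obtained from $\monhom{s}{d_s}\prec_h\monhom{t}{d_t}$ via Remark~\ref{thm:homDehomOrder}, since $d_s=d_t$), Lemma~\ref{thm:compatMultAff} yields $\mon{s+r}\prec\mon{t+r}$. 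Finally, since $d_s+d_r = d_t+d_r$, a second application of Remark~\ref{thm:homDehomOrder} lifts this back to $\monhom{s+r}{d_s+d_r} \prec_h \monhom{t+r}{d_t+d_r}$, which is the claim.

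The only subtlety, and the step that deserves a careful sentence in the written proof, is checking that $\spDeg$ of a monomial coincides with $\degAff$ of its dehomogenization so that the hypothesis transports cleanly; everything else is a mechanical case split. There is no real obstacle since the affine lemma has already done the combinatorial work on $\degAff$, and the homogeneous degree component contributes only the trivial comparison between $d_s+d_r$ and $d_t+d_r$.
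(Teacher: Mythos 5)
Your proof is correct, and it follows exactly the route the paper intends (the paper states \Cref{thm:compatMultHom} without proof, as an immediate consequence of \Cref{thm:compatMultAff}): split on whether $d_s < d_t$ or $d_s = d_t$ via \Cref{eq:graduateOrder}, note that the sparse-degree hypothesis dehomogenizes to the $\degAff$-additivity hypothesis of \Cref{thm:compatMultAff}, and lift the conclusion back with \Cref{thm:homDehomOrder}. Nothing is missing; the observation that $\spDeg(\monhom{u}{d}) = \degAff(\mon{u})$ is indeed the only point worth writing out.
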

  
\section{Sparse Gr\"{o}bner Basis (\lowercase{s}GB)}

We want to define and compute \groebner bases in $\K[S_M]$ and
$\K[S_M^h]$ with respect to a (graded) sparse order.
As these orders are not compatible with the multiplication, not all
the standard definitions of \groebner basis are equivalent.
For example, the set of leading monomials of an ideal in $\K[S_M]$
does not necessarily form an ideal.
We say that a set of generators $G$ of an ideal $I \subset \K[S_M]$ is
a sparse \groebner basis with respect to an order $\prec$, if for each
$f \in I$, there is a $g \in G$ such that $\LM_{\prec}(g)$ divides
$\LM_{\prec}(f)$. Similarly for $\K[S_M^h]$.

This definition has a drawback: The multivariate polynomial division
algorithm might not terminate.
This can happen when $\LM_{\prec}(f) = \mon{t} \cdot \LM_{\prec}(g)$ and
$\LM_{\prec}(f) \prec \LM_{\prec}(\mon{t} \cdot g)$.
Then, the reduction step ``increases'' the leading monomial, so that
the algorithm does not necessarily terminates. We can construct
examples where we have a periodic sequence of reductions.
%
%
%
%
To avoid this problem, we redefine the division relation.
   
\begin{definition}[Division relation] \label{def:division}
  For any $\monhom{s}{d_s},\monhom{r}{d_r} \in \K[S_M^h]$, we say that
  $\monhom{s}{d_s}$ divides $\monhom{r}{d_r}$, and write
  $\divs{\monhom{s}{d_s}}{\monhom{r}{d_r}}$, if
  there is a $\monhom{t}{d_t} \in \K[S_M^h]$ such that
  $\monhom{s}{d_s} \cdot \monhom{t}{d_t} = \monhom{r}{d_r}$ and
  $\spDeg(\monhom{s}{d_s}) + \spDeg(\monhom{t}{d_t}) =
  \spDeg(\monhom{r}{d_r})$.
  Similarly, for  $\mon{s},\mon{r} \in \K[S_M]$, we say
  that $\mon{s}$ divides $\mon{r}$, and write
  $\divs{\mon{s}}{\mon{r}}$, if $\divs{\hom(\mon{s})}{\hom(\mon{r})}$.
\end{definition}

\begin{remark} \label{rmk:leadingMonomialAndDivision}
  If $\divs{LM_{\smh}(f)}{\monhom{s}{d_s}}$,
  then there is a $\monhom{t}{d_t} \in \K[S_M^h]$ such that
  $\monhom{s}{d_s} = \monhom{t}{d_t} \cdot LM_{\smh}(f) =  
  LM_{\smh}(\monhom{t}{d_t} \cdot f)$, by \Cref{thm:compatMultAff}. Similarly over $\K[S_M]$.
\end{remark}

We define the sparse \groebner bases (sGB) as follows.

\begin{definition}[sparse \groebner bases] \label{def:sparse-grobner-basis}
  Given a (graded) sparse order $\prec$, see \Cref{eq:ordFromMonomial}, and
  an ideal $I \subset \K[S_M]$, respectively $I \subset \K[S_M^h]$, a
  set $sGB(I) \subset I$ is a sparse \groebner basis (sGB) if it
  generates $I$ and for any $f \in I$ there is some $g \in sGB(I)$ such
  that $\divs{\LM_{\prec}(g)}{\LM_{\prec}(f)}$.
\end{definition}

With this definition, each step in the division algorithm reduces the
leading monomial (\Cref{rmk:leadingMonomialAndDivision}), and so the division
algorithm always terminates, see e.g. \cite[Thm.~2.3.3,Prop.~2.6.1]{cox1992ideals}.

\begin{lemma} \label{thm:divisionConverges}
  Let $f \in \K[S_M]$ and $G$ be a set of polynomials in
  $\K[S_M]$. Using our definition of division relation
  (\Cref{def:division}), the multivariate division algorithm~\cite[Thm.~2.3.3]{cox1992ideals} for the division of $f$ by $G$, with
  respect to the order $\sm$, terminates.
  Moreover, if $G$ is a sGB of an ideal $I$ with respect to $\sm$ and
  $f \equiv f' \mod I$, then the remainder division algorithm for $f$
  and $f'$ is the same and unique for any sGB.
\end{lemma}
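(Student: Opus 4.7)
The plan is to prove the two claims separately: termination by showing that every reduction step strictly drops the $\prec$-leading monomial in a well-order, and uniqueness via the standard Gröbner-basis argument applied with the new division relation from Definition~\ref{def:division}.

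For termination I would argue as follows. Suppose the division algorithm reduces a term $c\,\mon{r}$ of the current remainder using $g\in G$ with $\divs{\LM_\prec(g)}{\mon{r}}$. By Definition~\ref{def:division} there exists $\mon{t}\in \K[S_M]$ with $\mon{r} = \mon{t}\cdot \LM_\prec(g)$ \emph{and} $\degAff(\mon{t}) + \degAff(\LM_\prec(g)) = \degAff(\mon{r})$. This additivity is exactly the hypothesis needed for Corollary~\ref{thm:compatMultHom} (used in the affine version via homogenization), so Remark~\ref{rmk:leadingMonomialAndDivision} gives $\LM_\prec(\mon{t}\cdot g)=\mon{t}\cdot \LM_\prec(g) = \mon{r}$. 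Subtracting an appropriate scalar multiple of $\mon{t}\cdot g$ therefore cancels $\mon{r}$ and introduces only monomials strictly $\prec$-smaller than $\mon{r}$. Since $\mon{r}\preceq$ the leading monomial of the current remainder, the leading monomial is strictly decreased at every step. Because $\prec$ is a well-order (it refines $\degAff\in\N$ by a genuine monomial order, cf.~Eq.~\ref{eq:ordFromMonomial}), the sequence of leading monomials stabilizes after finitely many steps, hence the algorithm terminates.

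For uniqueness, assume $G$ is a sGB of $I$ and $f\equiv f'\bmod I$. Let $r$ and $r'$ be the remainders obtained by running the division algorithm on $f$ and $f'$ with respect to $G$. By construction, no monomial of $r$ or $r'$ is divisible, in the sense of Definition~\ref{def:division}, by any $\LM_\prec(g)$ for $g\in G$. Since $f-f'\in I$ and the quotients differ by an element of $I$, the difference $r-r'$ lies in $I$. If $r-r'\neq 0$, its $\prec$-leading monomial is a monomial appearing (after collection) in $r$ or in $r'$; because $G$ is a sGB, $\LM_\prec(r-r')$ must be divisible by some $\LM_\prec(g)$, contradicting the reducedness of $r$ and $r'$. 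Hence $r=r'$. Independence from the chosen sGB follows by the same argument applied to two distinct sGBs of $I$ and the same polynomial $f=f'$.

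The main obstacle is the termination step: in the usual Gröbner-basis proof one relies on the monomial order being compatible with multiplication, which fails here for $\prec$. The delicate point is therefore to ensure that the reducer $\mon{t}\cdot g$ actually has leading monomial equal to $\mon{r}$, and this is precisely what the modified definition of divisibility (together with Corollary~\ref{thm:compatMultHom}) delivers. Once that is in place, both termination and uniqueness follow by standard manipulations.
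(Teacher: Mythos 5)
Your proposal is correct and follows essentially the same route as the paper: the paper's proof consists precisely of invoking Remark~\ref{rmk:leadingMonomialAndDivision} (i.e.\ that the modified divisibility forces $\LM_\sm(\mon{t}\cdot g)=\mon{t}\cdot\LM_\sm(g)$ via \Cref{thm:compatMultAff}) to see that each reduction step strictly decreases the leading monomial in the well-order $\sm$, and then citing \cite[Thm.~2.3.3, Prop.~2.6.1]{cox1992ideals} mutatis mutandis for termination and for the uniqueness of the remainder, exactly the two arguments you spell out.
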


\begin{proof}
  By \Cref{rmk:leadingMonomialAndDivision}, each step in the division
  algorithm reduces the leading monomial. The proof follows, mutatis
  mutandis, from~\cite[Thm.~2.3.3,Prop.~2.6.1]{cox1992ideals}.
%
\end{proof}

Our next goal is to prove that for every ideal and sparse order, there
is a finite sGB. A priori, this is not clear from the Noetherian
property of $\K$ as $\LM_{\prec}(I)$ is not an ideal. Our
strategy is to prove that over $\K[S_M^h]$ there is always a finite
sparse \groebner basis, and then extend this result to $\K[S_M]$. We
show that this sGB is related to a standard \groebner basis over some
Noetherian ring, so it is finite.

  \subsection{Finiteness of  sparse \groebner Bases}

  \noindent
  \textbf{Homogeneous case.}

  Let $<_M$ be a monomial order for $\K[S_M]$ and $\sm$ the sparse
  order related to $<_M$, \Cref{eq:ordFromMonomial}. Consider $\smh$
  the graded sparse order related to $\sm$ over $\K[S_M^h]$,
  \Cref{eq:graduateOrder}.
   
  Consider the lattice ideal $T$ from
  \Cref{sec:semigroup-algebra}. This ideal $T$ is homogeneous and the
  algebra $\K[S_M^h]$ is isomorphic to $\K[\bm{y}] / T$ as a graded
  algebra.
  Let $\widetilde{\psi} : \K[\bm{y}] / T \rightarrow \K[S_M^h]$ and
  $\widetilde{\phi} : \K[S_M^h] \rightarrow \K[\bm{y}] / T$ be the
  isomorphisms related to $\K[S_M^h] \cong \K[\bm{y}] / T$, such that
  they are inverse of each other and
  $\widetilde{\psi}(\monhom{0}{1}) = y_0$.
  We extend $\widetilde{\psi}$ to
  $\psi : \K[\bm{y}] \rightarrow \K[S_M^h]$, where
  $\psi(\bm{y}^\alpha)$ is the image, under $\widetilde{\psi}$, of
  $\bm{y}^\alpha$ modulo $T$. The map $\psi$ is a 0-graded
  epimorphism.

  For $\bm{y}^\alpha \in \K[\bm{y}]$, let $\deg(\bm{y}^\alpha,y_0)$ be
  the degree of $\bm{y}^\alpha$ with respect to $y_0$ and
  $\deg(\bm{y}^\alpha)$ be the total degree.
  Given a (standard) monomial order $\widetilde{<}$ for $\K[\bm{y}]$,
  consider the graded monomial order $<_y$ for $\K[\bm{y}]$ defined as
  follows,
  {\small 
    \begin{equation}
      \bm{y}^a <_y \bm{y}^b \iff
      \begin{cases}
        \deg(\bm{y}^a) < \deg(\bm{y}^b) & \!\\
        \deg(\bm{y}^a) = \deg(\bm{y}^b) & \land \quad \deg(\bm{y}^a, y_0) > \deg(\bm{y}^b, y_0) \\
        \deg(\bm{y}^a) = \deg(\bm{y}^b) & \land \quad \deg(\bm{y}^a, y_0) = \deg(\bm{y}^b, y_0) \quad  \land \\
        & \qquad \psi(\bm{y}^{a}) <_M \psi(\bm{y}^{b}) \\
        \deg(\bm{y}^a) = \deg(\bm{y}^b) & \land \quad \deg(\bm{y}^a, y_0) = \deg(\bm{y}^b, y_0) \quad \land \\
        & \qquad \psi(\bm{y}^{a}) = \psi(\bm{y}^{b}) \quad \land \quad \bm{y}^a \, \widetilde{<} \, \bm{y}^b
      \end{cases}
    \end{equation} 
  }
  This order is a monomial order, because it is a total order,
  $\bm{y}^{{0}}$ is the unique smallest monomial (it is
  the only one of degree $0$), and it is compatible with the
  multiplication (every case is compatible).

  For each $f \in \K[\bm{y}]$, we define $\eta$ as the normal form
  (the remainder of the division algorithm) of $f$ with respect to the
  ideal $T$ and the monomial order $<_y$.
  Recall that $\eta = \eta \circ \eta$ and
  $\mathtt{coker}(\eta) \cong \K[\bm{y}] / T$.
  We notice that for each poset in $\K[\bm{y}] / T$, $\eta$ assigns
  the same normal form to all the elements that it
  contains. Therefore, we abuse notation, and we also use $\eta$ to
  denote the map $\K[\bm{y}] / T \to \K[\bm{y}]$ that maps each poset
  to this unique normal form.
  As $T$ is homogeneous, $\eta$ is a 0-graded map.
  We extend $\widetilde{\phi}$ to
  $\phi : \K[S_M^h] \rightarrow \K[\bm{y}]$ as
  $\phi := \eta \circ \widetilde{\phi}$. This map is 0-graded and
  linear, but not a homomorphism.
  It holds $\psi \circ \phi = Id$ and $\phi \circ \psi = \eta$.

  \begin{center}
    \begin{tikzpicture}[->,>=stealth',auto,node distance=6cm,
      thick,main node/.style={font=\sffamily\small\bfseries},every
      node/.style={font=\sffamily\small,align=center,rectangle,
        rounded corners, fill=white}]
      
      \node[main node] (T) {$\K[\bm{y}] / T$};
      \node[main node] (S) [right of=T] {$\K[S_M^h]$};
      \node[main node] (Y) [left of=T]  {$\K[\bm{y}]$};
      
      \path[]
      (Y) edge[bend right=7]  node [below=-0.25] {$\mod T$} (T)
      (T) edge[bend left=-7]  node [above=-0.25] {$\eta$} (Y)
      (S) edge[bend left=-6.5]  node [above=-0.23] {$\widetilde{\phi}$} (T)
      (T) edge[bend right=6.5]  node [below=-0.21] {$\widetilde{\psi}$} (S)
      (Y) edge[bend right=15]  node [below=-0.25] {$\psi$} (S)
      (S) edge[bend left=-15]  node [above=-0.25] {$\phi$} (Y);
    \end{tikzpicture} 
  \end{center}
  
  \begin{theorem} \label{thm:homogeneousStandardSparse}
    Let $I^h \subset \K[S_M^h]$ be a homogeneous ideal and consider the
    homogeneous ideal
    ${J}^h := \langle \phi(I^h) + T \rangle \subset \K[\bm{y}]$. If the
    \groebner base of $J^h$ with respect to $<_y$ is $GB_{<_y}(J^h)$,
    then $\psi(GB_{<_y}(J^h))$ is a sparse \groebner base of $I^h$ with
    respect to $\smh$.
  \end{theorem}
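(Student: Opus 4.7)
The plan is to verify the two conditions of \Cref{def:sparse-grobner-basis} for $G := \psi(GB_{<_y}(J^h))$: that $G$ generates $I^h$, and that for every $f\in I^h$ some $g\in G$ satisfies $\divs{\LM_{\smh}(g)}{\LM_{\smh}(f)}$. Generation is straightforward: since $\psi$ is a surjective ring homomorphism with $\ker\psi = T$ and $\psi\circ\phi = Id$, we have $\psi(J^h) = \psi(\langle\phi(I^h) + T\rangle) = \langle\psi(\phi(I^h))\rangle = I^h$, so $\psi$ sends generators of $J^h$ to generators of $I^h$.

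The crux is a normal-form lemma: if $\bm{y}^\alpha$ is in $<_y$-normal form modulo $T$ and $\psi(\bm{y}^\alpha) = \monhom{s}{d}$, then $\deg(\bm{y}^\alpha,y_0) = d - \spDeg(\monhom{s}{d})$. I would prove this in three steps: (a) a $T$-reduction strictly decreases a monomial under $<_y$, so normal forms are the minima of their cosets; (b) by the tiered definition of $<_y$, being minimal within a fixed total degree is equivalent to having maximal $y_0$-degree; (c) among representatives of the coset of $\monhom{s}{d}$, the maximum $y_0$-degree is exactly $d - \spDeg(\monhom{s}{d})$, since $\psi(y_0) = \monhom{0}{1}$ contributes to total but not to sparse degree. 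Combined with the construction of $<_y$, this lemma shows that $\psi$ restricts to an order-preserving bijection from normal-form monomials under $<_y$ onto $\K[S_M^h]$-monomials under $\smh$.

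Given $f\in I^h$, set $\bar f := \phi(f) = \eta(\widetilde\phi(f)) \in J^h$; every monomial of $\bar f$ is then a normal form. The ordinary Gröbner basis property yields $g' \in GB_{<_y}(J^h)$ with $\LM_{<_y}(g')$ standardly dividing $\LM_{<_y}(\bar f)$. Divisors of normal-form monomials are themselves normal forms (a $T$-reduction on the divisor would extend to one on the dividend), so $\LM_{<_y}(g')$ is a normal form. Applying the order-preserving bijection gives $\psi(\LM_{<_y}(\bar f)) = \LM_{\smh}(f)$ and, using that any non-leading term $\bm{y}^\beta$ of $g'$ satisfies $\eta(\bm{y}^\beta) \leq_y \bm{y}^\beta <_y \LM_{<_y}(g')$ and hence $\psi(\bm{y}^\beta) \smh \psi(\LM_{<_y}(g'))$, also $\psi(\LM_{<_y}(g')) = \LM_{\smh}(\psi(g'))$. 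Writing $\LM_{<_y}(\bar f) = \LM_{<_y}(g')\cdot\bm{y}^t$ with $\bm{y}^t$ also a normal form (as a divisor of a normal form), the normal-form lemma applied to all three monomials converts the additivity of total and of $y_0$-degrees into the additivity of sparse degrees, which is precisely the extra condition in \Cref{def:division}. Hence $\divs{\LM_{\smh}(\psi(g'))}{\LM_{\smh}(f)}$, as required.

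The main obstacle is the normal-form lemma and the accompanying ordering verification: one has to confirm that the tiered structure of $<_y$ (total degree; then reversed $y_0$-degree; then $<_M$ via $\psi$; then a tie-breaker) faithfully mirrors the structure of $\smh$ (total degree; then $\spDeg$; then $<_M$), and that on normal forms the $y_0$-degree measures exactly the gap between total and sparse degree, so that the standard multiplicative additivity on the $\bm y$-side transfers to the sparse-degree additivity required by \Cref{def:division}.
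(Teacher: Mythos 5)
Your proposal is correct and takes essentially the same route as the paper: generation of $I^h$ via $\psi\circ\phi=\mathrm{Id}$ and $\psi(T)=0$, then transfer of the Gr\"obner divisibility through the key identity $\deg(\eta(\bm{y}^\alpha),y_0)=\deg(\bm{y}^\alpha)-\spDeg(\psi(\bm{y}^\alpha))$ (the paper's \cref{thm:spDegAndY0}), the order preservation of \cref{thm:etaSmallerImpliesPsiSmaller}, and the divisibility statements of \cref{thm:divisionOnYandOnPolytope,thm:divisionLeadingImpliesDivisionLeading}. Your only variation is in proving that key identity — characterizing $<_y$-normal forms as coset minima, hence of maximal $y_0$-degree, instead of the paper's direct computation using that $<_y$ acts degree-reverse-lexicographically with respect to $y_0$ — which is the same idea phrased differently.
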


  To prove the theorem we need the following lemmas.

  \begin{lemma} \label{thm:spDegAndY0}
    For all $\bm{y}^\alpha \in \K[\bm{y}]$,
    $\small \deg(\eta(\bm{y}^\alpha), y_0) = \deg(\bm{y}^{\alpha}) -
    \spDeg(\psi(\bm{y}^\alpha))$.
  \end{lemma}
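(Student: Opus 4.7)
The plan is to reduce the statement to a combinatorial fact about the equivalence class of $\bm{y}^\alpha$ in $\K[\bm{y}]/T$. Since $T$ is a (homogeneous) lattice ideal generated by binomials $\bm{y}^u - \bm{y}^v$ with $\rho(u) = \rho(v)$, the reductions performed by the division algorithm modulo $T$ replace one monomial by another monomial in the same equivalence class. Hence $\eta(\bm{y}^\alpha)$ is itself a monomial $\bm{y}^\beta$, lying in the class $[\bm{y}^\alpha]_T$, and it is the $<_y$-minimum element of that class. In particular $\psi(\bm{y}^\beta) = \psi(\bm{y}^\alpha)$, and because $T$ is homogeneous (equivalently, $\psi$ is $0$-graded) $\deg(\bm{y}^\beta) = \deg(\bm{y}^\alpha) =: d$.

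Next, I would unpack the structure of $<_y$ within a single equivalence class. All elements of $[\bm{y}^\alpha]_T$ share the same total degree $d$ and the same image $\psi(\bm{y}^\alpha) = \monhom{s}{d}$, so the first and third clauses of the definition of $<_y$ are constant on the class. The only discriminating clause among monomials of fixed degree and fixed $\psi$-image is the second: higher $y_0$-degree is smaller. Consequently, the $<_y$-minimum monomial in $[\bm{y}^\alpha]_T$ is precisely the one maximizing $\deg(\cdot, y_0)$, and therefore
\[
  \deg(\eta(\bm{y}^\alpha), y_0) \;=\; \max\bigl\{\beta_0 : \bm{y}^\beta \in [\bm{y}^\alpha]_T\bigr\}.
\]

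Now I would translate this maximum into the sparse degree. Writing the generators of $S_M^h$ as $(0,1)$ (corresponding to $y_0$) and $(s_i,1)$ for $s_i \in (M \cap \Z^n) \setminus \{0\}$ (corresponding to $y_i$, $i \geq 1$), the equation $\psi(\bm{y}^\beta) = \monhom{s}{d}$ becomes the pair of conditions $\sum_{i=0}^{m} \beta_i = d$ and $\sum_{i=1}^{m} \beta_i s_i = s$. Maximizing $\beta_0 = d - \sum_{i\geq 1} \beta_i$ over such $\bm{y}^\beta$ is equivalent to minimizing $\sum_{i\geq 1} \beta_i$ subject to expressing $s$ as a nonnegative integer combination of the $s_i$'s. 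By the definition of $\degAff$ (minimum $d'$ such that $\monhom{s}{d'} \in \K[S_M^h]$, which since $0 \in M$ equals the minimum number of nonzero lattice points of $M$ summing to $s$), this minimum is exactly $\degAff(\mon{s}) = \spDeg(\monhom{s}{d}) = \spDeg(\psi(\bm{y}^\alpha))$.

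Combining these observations gives $\deg(\eta(\bm{y}^\alpha), y_0) = d - \spDeg(\psi(\bm{y}^\alpha)) = \deg(\bm{y}^\alpha) - \spDeg(\psi(\bm{y}^\alpha))$, as required. The main subtlety I anticipate is the third step: verifying that the minimum number of generators needed to express $s$ really is the sparse degree, which requires using $0 \in M$ to see that padding with zero generators can always raise $d'$ without changing $s$, so the minimization is well-posed and matches $\degAff(\mon{s})$ on the nose. Everything else is a bookkeeping exercise with the definition of $<_y$ and the normal form under a binomial ideal.
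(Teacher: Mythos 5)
Your proof is correct, and it takes a genuinely different route from the paper's. The paper argues through the maps already set up: it writes $\psi(\bm{y}^\alpha)=\hom(\mon{s})\cdot\monhom{0}{d-\bar{d}}$ with $d=\deg(\bm{y}^\alpha)$ and $\bar{d}=\spDeg(\psi(\bm{y}^\alpha))$, applies $\phi$ and the fact that $<_y$ behaves like a degree reverse lexicographic order with respect to $y_0$ to get $\eta(\bm{y}^\alpha)=\phi(\hom(\mon{s}))\cdot y_0^{\,d-\bar{d}}$, and then rules out $y_0\mid\phi(\hom(\mon{s}))$ by a contradiction with the minimality built into the homogenization, which yields the $y_0$-degree count. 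You instead work fiber-wise in $\K[\bm{y}]$: reductions modulo the lattice ideal $T$ send a monomial to a congruent monomial (this rests on the standard fact that a reduced Gr\"obner basis of a lattice ideal consists of binomials, which you should cite or justify), so $\eta(\bm{y}^\alpha)$ is a monomial in the fiber of $\rho$ over $\psi(\bm{y}^\alpha)$, and it is the $<_y$-minimum of that fiber --- a claim you only assert, though the one-liner is easy (a strictly smaller congruent monomial would make $\eta(\bm{y}^\alpha)$ the leading monomial of a binomial in $T$, contradicting that it is a normal form). Since total degree and $\psi$-image are constant on a fiber, the only discriminating clause of $<_y$ there is the reversed comparison of $y_0$-degrees, so $\deg(\eta(\bm{y}^\alpha),y_0)$ is the maximal $y_0$-exponent in the fiber; and that maximum is $d-\degAff(\mon{s})$ because $\degAff(\mon{s})$ is exactly the least number of generators $(s_i,1)$ with $s_i\neq 0$ needed to represent $s$, the generator $(0,1)$ (i.e.\ $y_0$) absorbing the slack $d-\degAff(\mon{s})\geq 0$. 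What your route buys is that it makes explicit two facts the paper leaves implicit (normal forms of monomials are monomials, and they are fiber-minimal) and it replaces the paper's contradiction step about $y_0\nmid\phi(\hom(\mon{s}))$ with a direct combinatorial computation from the definition of $\degAff$; the paper's version is shorter because it reuses $\phi$, $\psi$ and the identity $\phi\circ\psi=\eta$. Both proofs hinge on the same design feature of $<_y$: among monomials of equal total degree and equal image in $\K[S_M^h]$, a higher power of $y_0$ means a smaller monomial.
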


  \begin{proof}
    Let $\monhom{s}{d} := \psi(\bm{y}^\alpha)$ and
    $\bar{d} = \spDeg(\monhom{s}{d})$.
    Note that $d = \deg(\bm{y}^\alpha)$, because $\psi$ is 0-graded.
    We can write
    $\psi(\bm{y}^\alpha) = \hom(\mon{s}) \cdot
    \monhom{0}{d - \bar{d}}$.
    Recall that $\phi \circ \psi = \eta$.
    Applying $\phi$ to the previous equality we get,
    $\eta(\bm{y}^\alpha) = \eta(\bar{\phi}(\hom(\mon{s})) \cdot
    \bar{\phi}(\monhom{0}{d-\bar{d}})) =
    \eta(\bar{\phi}(\hom(\mon{s})) \cdot y_0^{d - \bar{d}})$.
    Note that the order $>_y$ acts as the degree reverse
    lexicographical with respect to $y_0$, hence
    $\eta(\bar{\phi}(\hom(\mon{s})) \cdot y_0^{d - \bar{d}}) =
    \phi(\hom(\mon{s})) \cdot y_0^{d - \bar{d}}$.
    If $y_0$ divides $\phi(\hom(\mon{s}))$, then there is a monomial
    $\bm{y}^\beta$ such that
    $y_0 \cdot \bm{y}^\beta = \phi(\hom(\mon{s}))$, and so,
    $\psi(y_0 \cdot \bm{y}^\beta) =
    \psi(\phi(\hom(\mon{s})))$.
    As $\psi \circ \phi = Id$ and $\psi$ is a $0$-graded epimorphism,
    then $\monhom{0}{1} \cdot \psi(\bm{y}^\beta) = \hom(\mon{s})$, but
    this is not possible by definition of homogenization
    (\Cref{sec:sparse-degree-and-homogenization}).
    Hence, $\deg(\phi(\hom(\mon{s})), y_0) = 0$ and
    $\deg(\eta(\bm{y}^\alpha),y_0) = 0 + d - \bar{d}$.
  \end{proof}
  
  \begin{corollary} \label{thm:spDegAndY0versionPolytope}
    For all $\monhom{s}{d} \in \K[S_M^h]$, it holds $$\spDeg(\monhom{s}{d}) = d -
    \deg(\phi(\monhom{s}{d}),y_0).$$
  \end{corollary}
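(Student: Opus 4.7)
The plan is to deduce this directly from \Cref{thm:spDegAndY0} by exhibiting a monomial preimage of $\monhom{s}{d}$ under $\psi$ and then pushing the equation through the identity $\phi \circ \psi = \eta$.

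First, I would pick a monomial $\bm{y}^\alpha \in \K[\bm{y}]$ with $\psi(\bm{y}^\alpha) = \monhom{s}{d}$ and $\deg(\bm{y}^\alpha) = d$. Such an $\alpha$ exists because $\psi$ is a $0$-graded epimorphism and the $\K$-algebra $\K[S_M^h]$ is generated in degree one by $\{\psi(y_i)\}_{i=0}^{m}$ (these are the degree-one generators of the semigroup algebra, with $\psi(y_0) = \monhom{0}{1}$); any degree-$d$ monomial of $\K[S_M^h]$ is a product of $d$ of them, hence the image under $\psi$ of some monomial of total degree $d$ in $\K[\bm{y}]$.

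Next, I would apply $\phi$ to the equation $\psi(\bm{y}^\alpha) = \monhom{s}{d}$. Since $\phi \circ \psi = \eta$, this yields $\phi(\monhom{s}{d}) = \eta(\bm{y}^\alpha)$. Feeding this into \Cref{thm:spDegAndY0} gives
\[
\deg(\phi(\monhom{s}{d}), y_0) = \deg(\eta(\bm{y}^\alpha), y_0) = \deg(\bm{y}^\alpha) - \spDeg(\psi(\bm{y}^\alpha)) = d - \spDeg(\monhom{s}{d}),
\]
and rearranging produces the claimed identity.

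There is no serious obstacle here; the only delicate point is ensuring that the preimage $\bm{y}^\alpha$ can be chosen with total degree exactly $d$, which is guaranteed by the $0$-gradedness of $\psi$ together with the fact that $\K[S_M^h]$ is generated in degree one. Once the preimage is in hand, the result is an immediate reformulation of the previous lemma.
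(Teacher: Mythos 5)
Your proof is correct and follows exactly the route the paper intends: the corollary is stated without proof as an immediate consequence of \Cref{thm:spDegAndY0}, obtained by choosing a monomial preimage $\bm{y}^\alpha$ of $\monhom{s}{d}$ (which exists of degree $d$ since $S_M^h$ is generated in degree one and $\psi$ is $0$-graded) and using $\phi\circ\psi=\eta$. Your care about the preimage being a monomial of total degree exactly $d$ is the only point needing attention, and you handle it correctly.
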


  As $\psi$ and $\phi$ are $0$-graded maps, by \Cref{thm:spDegAndY0}
  and \Cref{thm:spDegAndY0versionPolytope}, they preserve the order.

  \begin{corollary} \label{thm:etaSmallerImpliesPsiSmaller}
    $\eta(y^{\alpha}) <_y \eta(y^{\beta}) \implies \psi(y^{\alpha}) \smh
    \psi(y^\beta)$.
  \end{corollary}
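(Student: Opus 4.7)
The plan is to do case analysis on the four cascading clauses in the definition of the monomial order $<_y$, translating each into the corresponding clause of $\smh$ using \Cref{thm:spDegAndY0,thm:spDegAndY0versionPolytope,thm:homDehomOrder}. Before the case split I would record two bookkeeping facts: since $T$ is a lattice (binomial) ideal, the normal form of a monomial with respect to $<_y$ is again a monomial, so I may write $\eta(\bm{y}^\alpha) = \bm{y}^{\alpha'}$ and $\eta(\bm{y}^\beta) = \bm{y}^{\beta'}$; and since $\psi$ factors through $\K[\bm{y}]/T$, we have $\psi \circ \eta = \psi$, so $\psi(\bm{y}^{\alpha'}) = \psi(\bm{y}^\alpha)$ and likewise for $\beta$. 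Because $\psi$ and $\eta$ are $0$-graded, the total degree $\deg(\bm{y}^\alpha)$ coincides with the $\smh$-degree of $\psi(\bm{y}^\alpha)$.

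Then I would handle the clauses in turn. Clause~1 (strict total degree) maps directly onto the degree comparison in the grading formula \Cref{eq:graduateOrder}. For Clause~2 (equal total degree, strict $\deg(\cdot, y_0)$), \Cref{thm:spDegAndY0} rewrites $\deg(\bm{y}^{\alpha'}, y_0) > \deg(\bm{y}^{\beta'}, y_0)$ as $\spDeg(\psi(\bm{y}^\alpha)) < \spDeg(\psi(\bm{y}^\beta))$; then \Cref{thm:homDehomOrder} reduces the comparison to $\sm$ on the dehomogenizations, and the sparse-degree clause of \Cref{eq:ordFromMonomial} concludes $\psi(\bm{y}^\alpha) \smh \psi(\bm{y}^\beta)$. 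For Clause~3 (the previous quantities also equal, strict $<_M$), the same reduction via \Cref{thm:homDehomOrder} together with the $<_M$-clause of \Cref{eq:ordFromMonomial} close the argument.

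The main obstacle is ruling out Clause~4, whose trigger condition includes $\psi(\bm{y}^{\alpha'}) = \psi(\bm{y}^{\beta'})$, that is $\bm{y}^{\alpha'} - \bm{y}^{\beta'} \in T$. Since both sides are already normal forms of their class modulo $T$ under $<_y$, uniqueness of normal forms forces $\bm{y}^{\alpha'} = \bm{y}^{\beta'}$, contradicting the assumed strict $<_y$. A secondary care point is the sign in Clause~2 --- higher $y_0$-degree corresponds to \emph{lower} sparse degree --- but \Cref{thm:spDegAndY0} already delivers this inversion, so no further calculation is required.
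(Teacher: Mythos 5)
Your proposal is correct and is essentially the paper's argument: the paper states this corollary without proof as an immediate consequence of $\psi$, $\eta$ being $0$-graded together with \Cref{thm:spDegAndY0,thm:spDegAndY0versionPolytope}, and your clause-by-clause translation of $<_y$ into \Cref{eq:graduateOrder,eq:ordFromMonomial} (with Clause~4 excluded by uniqueness of $\eta$-normal forms) is just the detailed version of that same reduction.
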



  
  \begin{lemma}  \label{thm:divisionOnYandOnPolytope}
    $\bm{y}^{\alpha} | \phi(\monhom{s}{d}) \implies \divs{\psi(\bm{y}^{\alpha})
    }{\monhom{s}{d}}$.
  \end{lemma}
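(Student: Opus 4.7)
The plan is to exhibit the quotient monomial explicitly. Since $T$ is a binomial (lattice) ideal, the $\eta$-normal form of any monomial in $\K[\bm y]$ is again a monomial, so $\phi(\monhom{s}{d})$ is a single monomial. Writing the hypothesis as $\phi(\monhom{s}{d}) = \bm y^{\alpha}\cdot \bm y^{\gamma}$, I will take $\monhom{t}{d_t} := \psi(\bm y^\gamma)$ as the witness for $\divs{\psi(\bm y^\alpha)}{\monhom{s}{d}}$. The equality $\psi(\bm y^\alpha)\cdot \monhom{t}{d_t} = \monhom{s}{d}$ is then immediate, because $\psi$ is a $\K$-algebra homomorphism and $\psi\circ\phi = \mathrm{Id}$: indeed $\psi(\bm y^\alpha)\cdot\psi(\bm y^\gamma) = \psi(\bm y^\alpha\bm y^\gamma) = \psi(\phi(\monhom{s}{d})) = \monhom{s}{d}$.

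The substantive content is verifying the additivity of the sparse degree, namely $\spDeg(\psi(\bm y^\alpha)) + \spDeg(\psi(\bm y^\gamma)) = \spDeg(\monhom{s}{d})$. Applying \Cref{thm:spDegAndY0} to $\bm y^\alpha$ and $\bm y^\gamma$ and \Cref{thm:spDegAndY0versionPolytope} to $\monhom{s}{d}$, and using that total degrees and $y_0$-degrees are additive along the product $\bm y^\alpha\bm y^\gamma = \phi(\monhom{s}{d})$, the desired identity reduces to showing
\[
\deg(\eta(\bm y^\alpha),y_0) + \deg(\eta(\bm y^\gamma),y_0) = \deg(\bm y^\alpha,y_0) + \deg(\bm y^\gamma,y_0).
\]
So the proof reduces to the claim that both $\bm y^\alpha$ and $\bm y^\gamma$ are already in $\eta$-normal form.

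The main obstacle is this normal-form claim, which I would settle by a short contradiction argument exploiting the design of the order $<_y$. Among monomials representing the same coset modulo $T$, the order $<_y$ (after the total degree, which is constant inside a coset) prefers those with larger $y_0$-degree, so the $\eta$-normal form of a monomial is the representative of its $T$-class with maximal $\deg(\cdot,y_0)$. Suppose, towards a contradiction, that $\bm y^\alpha$ is not a normal form. Then there exists $\bm y^{\alpha'}$ with $\psi(\bm y^{\alpha'}) = \psi(\bm y^\alpha)$ and $\deg(\bm y^{\alpha'},y_0) > \deg(\bm y^\alpha,y_0)$. Multiplying by $\bm y^\gamma$ produces a monomial $\bm y^{\alpha'}\bm y^\gamma$ that lies in the same $T$-class as $\phi(\monhom{s}{d})$, yet has strictly larger $y_0$-degree; this contradicts the fact that $\phi(\monhom{s}{d})$ is itself the $\eta$-normal form. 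The same argument applies to $\bm y^\gamma$, which completes the proof.
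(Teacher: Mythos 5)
Your overall route is the same as the paper's: factor $\phi(\monhom{s}{d})=\bm{y}^{\alpha}\cdot\bm{y}^{\gamma}$, take $\psi(\bm{y}^{\gamma})$ as the witness for the division, and reduce the sparse-degree additivity, via \Cref{thm:spDegAndY0} and \Cref{thm:spDegAndY0versionPolytope}, to a statement about the $y_0$-degrees of $\bm{y}^{\alpha}$, $\bm{y}^{\gamma}$ and their normal forms; the paper does exactly this, asserting that $\eta(\bm{y}^{\alpha})=\bm{y}^{\alpha}$ and $\eta(\bm{y}^{\gamma})=\bm{y}^{\gamma}$ because $\phi(\monhom{s}{d})$ is itself a normal form.

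However, one step in your justification of the normal-form claim is incorrect as stated: from ``$\bm{y}^{\alpha}$ is not a normal form'' you infer the existence of $\bm{y}^{\alpha'}$ in the same $T$-class with \emph{strictly larger} $y_0$-degree. That does not follow, because inside a class (where the total degree and the $\psi$-image are constant) the order $<_y$ breaks ties after the $y_0$-degree by the auxiliary order $\widetilde{<}$; a monomial can fail to be the normal form while already having the maximal $y_0$-degree of its class. For instance, for the segment $M=[0,3]$ the monomials $y_1y_3$ and $y_2^2$ lie in the same class, both with $y_0$-degree $0$, and the $\widetilde{<}$-larger of the two is not a normal form. For the same reason, ``the representative with maximal $\deg(\cdot,y_0)$'' need not be unique. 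The flaw is easily repaired, since you never need full normal-formness but only the equalities $\deg(\eta(\bm{y}^{\alpha}),y_0)=\deg(\bm{y}^{\alpha},y_0)$ and $\deg(\eta(\bm{y}^{\gamma}),y_0)=\deg(\bm{y}^{\gamma},y_0)$: the normal form is the $<_y$-minimum of its class, hence maximizes the $y_0$-degree there, so $\deg(\eta(\bm{y}^{\alpha}),y_0)\geq\deg(\bm{y}^{\alpha},y_0)$; if this inequality were strict, your multiplication trick applied to $\bm{y}^{\alpha'}:=\eta(\bm{y}^{\alpha})$ would produce a class member of $\phi(\monhom{s}{d})$ of strictly larger $y_0$-degree, contradicting that $\phi(\monhom{s}{d})$ is a normal form. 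Alternatively, the full claim $\eta(\bm{y}^{\alpha})=\bm{y}^{\alpha}$ is true for the standard reason the paper uses implicitly: monomials in normal form are exactly the standard monomials, i.e.\ those outside $\langle \LM_{<_y}(T)\rangle$, and this set is closed under taking divisors, so any divisor of the normal form $\phi(\monhom{s}{d})$ is itself in normal form.
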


   \begin{proof}
     Let $\bm{y}^\beta$ such that $\bm{y}^{\alpha} \cdot
     \bm{y}^{\beta} = \phi(\monhom{s}{d})$, so $\psi(\bm{y}^{\alpha})
     \cdot \psi(\bm{y}^{\beta}) = \monhom{s}{d}$.
     As $\eta$ is a normal form,
     $\eta(\phi(\monhom{s}{d})) = \phi(\monhom{s}{d})$ and then,
     $\eta(\bm{y}^\alpha) = \bm{y}^\alpha$ and
     $\eta(\bm{y}^\beta) = \bm{y}^\beta$.
     Hence, by \Cref{thm:spDegAndY0versionPolytope},
     $\spDeg(\psi(\bm{y}^{\alpha} \cdot \bm{y}^{\beta})) =
     \deg(\bm{y}^{\alpha} \cdot \bm{y}^{\beta}) -
     \deg(\eta(\bm{y}^{\alpha} \cdot \bm{y}^{\beta}),y_0) =
     \deg(\bm{y}^{\alpha}) - \deg(\eta(\bm{y}^{\alpha}), y_0) +
     \deg(\bm{y}^{\beta}) - \deg(\eta(\bm{y}^{\beta}), y_0) =
     \spDeg(\psi(\bm{y}^\alpha)) + \spDeg(\psi(\bm{y}^\beta))$, by \Cref{thm:spDegAndY0}.
   \end{proof}
       
   \begin{corollary}
     \label{thm:divisionLeadingImpliesDivisionLeading}
     For all $f \in \K[S_M^h]$, for all $g \in \K[\bm{y}]$, it holds
     $$\LM_{<_y}(\eta(g)) | \LM_{<_y}(\phi(f)) \implies
     \divs{\LM_{\smh}(\psi(g))}{\LM_{\smh}(f)}.$$
   \end{corollary}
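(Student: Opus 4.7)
The plan is to reduce the claim to \Cref{thm:divisionOnYandOnPolytope} by transferring leading monomials across the maps $\psi$ and $\phi$. Set $\bm{y}^\alpha := \LM_{<_y}(\eta(g))$ and $\bm{y}^\beta := \LM_{<_y}(\phi(f))$. I would prove three things in sequence: (i) $\LM_{\smh}(\psi(g)) = \psi(\bm{y}^\alpha)$; (ii) $\LM_{\smh}(f) = \psi(\bm{y}^\beta)$; (iii) $\divs{\psi(\bm{y}^\alpha)}{\psi(\bm{y}^\beta)}$. Combining (i), (ii), (iii) immediately yields the conclusion.

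For (i), the key observation is that $\psi$ factors through $\K[\bm{y}]/T$, so $g \equiv \eta(g) \pmod{T}$ gives $\psi(g) = \psi(\eta(g))$. Writing $\eta(g) = \sum_i c_i\, \bm{y}^{\gamma_i}$, each $\bm{y}^{\gamma_i}$ is a standard monomial with respect to $<_y$, hence $\eta(\bm{y}^{\gamma_i}) = \bm{y}^{\gamma_i}$. Because $\widetilde{\psi}$ is an isomorphism between $\K[\bm{y}]/T$ and $\K[S_M^h]$, distinct standard monomials lie in distinct cosets of $T$ and therefore map under $\psi$ to distinct monomials of $\K[S_M^h]$: no cancellation occurs in $\sum_i c_i\, \psi(\bm{y}^{\gamma_i})$. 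Applying \Cref{thm:etaSmallerImpliesPsiSmaller} to the pairs $(\bm{y}^{\gamma_i}, \bm{y}^\alpha)$ shows that the $<_y$-maximal monomial survives as the $\smh$-maximal one, proving (i). The same argument, applied to $f = \psi(\phi(f))$ and to the expansion of $\phi(f) = \eta(\widetilde{\phi}(f))$ into standard monomials, proves (ii).

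For (iii), I use that $\bm{y}^\beta$ appears in $\phi(f)$, which is already a normal form, so $\eta(\bm{y}^\beta) = \bm{y}^\beta$ and hence $\phi(\psi(\bm{y}^\beta)) = \eta(\bm{y}^\beta) = \bm{y}^\beta$. The hypothesis $\bm{y}^\alpha \mid \bm{y}^\beta$ then reads $\bm{y}^\alpha \mid \phi(\psi(\bm{y}^\beta))$, and \Cref{thm:divisionOnYandOnPolytope} applied to the monomial $\psi(\bm{y}^\beta) \in \K[S_M^h]$ delivers $\divs{\psi(\bm{y}^\alpha)}{\psi(\bm{y}^\beta)}$. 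The delicate point in the proof is the no-cancellation argument in (i) and (ii): it is specific to the set of monomials appearing in a normal form, for which distinct monomials automatically lie in distinct cosets of $T$; without restricting to normal forms this injectivity of $\psi$ on monomials would fail, and $\psi$ could collapse the alleged leading term with smaller ones.
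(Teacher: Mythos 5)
Your proposal is correct and follows essentially the same route as the paper: both establish $\LM_{\smh}(\psi(g)) = \psi(\LM_{<_y}(\eta(g)))$ and $\LM_{\smh}(f) = \psi(\LM_{<_y}(\phi(f)))$ via \Cref{thm:etaSmallerImpliesPsiSmaller} and then conclude with \Cref{thm:divisionOnYandOnPolytope}. The only difference is that you spell out details the paper leaves implicit --- the no-cancellation/injectivity of $\psi$ on the standard monomials of a normal form, and the identity $\phi(\psi(\bm{y}^\beta)) = \eta(\bm{y}^\beta) = \bm{y}^\beta$ needed to invoke the division lemma --- which is a faithful elaboration rather than a different argument.
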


   \begin{proof}
     By \Cref{thm:etaSmallerImpliesPsiSmaller},
     {\small$\psi(\LM_{<_y}(\eta(g))) = \LM_{\smh}(\psi(g))$} and 
     {\small$\psi(\LM_{<_y}(\phi(f)))$} \linebreak $= \LM_{<_y}(\psi(\phi(f))) =
     \LM_{\smh}(f)$. The proof follows from
     \Cref{thm:divisionOnYandOnPolytope}.
   \end{proof}
   
   \begin{proof}[Proof of \Cref{thm:homogeneousStandardSparse}]
     Consider $f \in I^h$, then $\phi(f) \in J^h$. Hence, there are
     $g_1,\dots,g_k \in GB_{<_y}(J^h)$ and
     $p_1,\dots,p_k \in \K[\bm{y}]$ such that
     $\phi(f) = \sum_{i=1}^k p_i \cdot g_i$.
     As $\psi \circ \phi = Id$ and $\psi$ is an epimorphism such that
     $\psi(T) = 0$, then
     $\psi(\phi(f)) = f = \sum_{i=1}^k \psi(p_i) \cdot \psi(g_i)$ and
     $\psi(g_i),\dots,\psi(g_k) \in I^h$.
     Hence, $\psi(GB_{<_y}(J^h))$ generates $I^h$.
     
     The set $GB_{<_y}(J^h)$ is a \groebner basis, then there is a
     $g \in GB_{<_y}(J^h)$ such that
     $\LM_{<_y}(g) | \LM_{<_y}(\phi(f))$.
     As $\phi(f) = \eta(\phi(f))$,
     $\eta(\LM_{<_y}(\phi(f))) = \LM_{<_y}(\phi(f))$ and
     $\eta(\LM_{<_y}(g)) = \LM_{<_y}(g)$.
     As $\eta$ is a normal form wrt $<_y$,
     $\eta(\LM_{<_y}(g)) = \LM_{<_y}(\eta(g))$.
     By
     \Cref{thm:divisionLeadingImpliesDivisionLeading},
     $\divs{\LM_{\smh}(\psi(g))}{\LM_{\smh}(f)}$. Hence,
     $\psi(GB_{<_y}(J^h))$ is a sGB for $I^h$ with
     respect to $\smh$.
   \end{proof}
   
   \begin{corollary}
     \label{thm:finiteHomogeneousSGB}
     Given an ideal $I^h \subset \K[S_M^h]$ and a graded sparse order
     $\smh$, its sGB with respect to this order is
     finite.
   \end{corollary}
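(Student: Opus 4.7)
The plan is to obtain finiteness as an immediate corollary of \Cref{thm:homogeneousStandardSparse} combined with Hilbert's basis theorem. First, I would recall that $\K[\bm{y}] = \K[y_0,\dots,y_m]$ is Noetherian, so every homogeneous ideal admits a \emph{finite} \groebner basis with respect to any monomial order. Applying this to the monomial order $<_y$ and the homogeneous ideal $J^h := \langle \phi(I^h) + T\rangle$, we obtain a finite set $GB_{<_y}(J^h)$.

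Next, I would invoke \Cref{thm:homogeneousStandardSparse} directly: the image $\psi(GB_{<_y}(J^h))$ is a sparse \groebner basis of $I^h$ with respect to the graded sparse order $\smh$. Since $\psi$ is a set-theoretic map, the image of a finite set is finite, yielding a finite sGB of $I^h$.

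There is no real obstacle here beyond verifying that the construction of $J^h$ and the hypotheses of \Cref{thm:homogeneousStandardSparse} are available for an \emph{arbitrary} homogeneous ideal $I^h \subset \K[S_M^h]$; but this is built into the statement of the theorem, since $\phi$ is defined on all of $\K[S_M^h]$ and $T$ is fixed. The only subtlety worth noting explicitly in the write-up is that $\smh$ is arbitrary among graded sparse orders, which is matched by the fact that $<_y$ was defined from the underlying monomial order $<_M$ (via \Cref{eq:graduateOrder} and the definition of $<_y$), so the construction produces an sGB with respect to exactly the given $\smh$.
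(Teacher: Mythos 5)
Your proposal is correct and follows the paper's own argument: finiteness of $GB_{<_y}(J^h)$ from Noetherianity of $\K[\bm{y}]$, then \Cref{thm:homogeneousStandardSparse} to transport it via $\psi$ to a finite sGB of $I^h$ with respect to $\smh$. Nothing is missing.
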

     
   \begin{proof}
     In \Cref{thm:homogeneousStandardSparse} we construct
     $sGB_{\smh}(I^h)$ from a (standard) \groebner basis of an
     ideal of $\K[\bm{y}]$, finite as $\K[\bm{y}]$ is Noetherian.
   \end{proof}


   \noindent
   \textbf{Non-homogeneous case.}
   Let $\sm$ be a sparse order for $\K[S_M]$.

   \begin{lemma}
     \label{thm:dehomGBofHomIdealIsGB}
     Let $I^h \subset \K[S_M^h]$ be a homogeneous ideal. Let $\smh$
     be the graded sparse order for $\K[S_M^h]$ related to
     $\sm$. Then, $\dehom(sGB_{\smh}(I^h))$ is a sparse \groebner
     Basis for $\dehom(I^h)$ with respect to $\sm$.
   \end{lemma}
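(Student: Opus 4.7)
My plan is to verify the two conditions of \Cref{def:sparse-grobner-basis} for $\dehom(sGB_{\smh}(I^h))$: generation of $\dehom(I^h)$, and the leading monomial division property with respect to $\sm$. Generation follows immediately from $\dehom$ being a surjective $\K$-algebra homomorphism, so the image of a generating set of $I^h$ generates $\dehom(I^h)$.

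The key observation for the division property is that $\dehom$ commutes with $\LM$ on homogeneous polynomials: if $H \in \K[S_M^h]_D$ is homogeneous, all of its monomials share degree $D$, so by \Cref{thm:homDehomOrder} the order $\smh$ restricted to them coincides with $\sm$ on their dehomogenizations, giving $\dehom(\LM_{\smh}(H)) = \LM_{\sm}(\dehom(H))$. I would therefore reduce everything to the homogeneous case on both sides. On the right, given $f \in \dehom(I^h)$, I would pick any preimage in $I^h$, split it into homogeneous components, and apply \Cref{rmk:biggerDegSameDehom} to each component to lift them to a common high degree $D$; summing the lifts yields a homogeneous $F \in I^h \cap \K[S_M^h]_D$ with $\dehom(F) = f$. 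On the left, since $\smh$ is graded by the total degree and $I^h$ is homogeneous, the homogeneous components of each $g \in sGB_{\smh}(I^h)$ lie in $I^h$, and $\LM_{\smh}(g)$ is attained by the top-degree component; replacing each $g$ by its homogeneous components preserves both generation and the division property, so we may assume $sGB_{\smh}(I^h)$ consists of homogeneous polynomials.

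Applying the sGB property to $F$ produces a homogeneous $g \in sGB_{\smh}(I^h)$ with $\divs{\LM_{\smh}(g)}{\LM_{\smh}(F)}$. Unpacking \Cref{def:division}, there is $\monhom{u}{d_u}$ with $\LM_{\smh}(g)\cdot\monhom{u}{d_u} = \LM_{\smh}(F)$ and the corresponding sparse degrees summing. Since $\spDeg(\monhom{s}{d}) = \degAff(\dehom(\monhom{s}{d}))$ by definition and $\dehom$ is a ring homomorphism, dehomogenizing this identity gives exactly the pair of conditions in \Cref{def:division} witnessing $\divs{\dehom(\LM_{\smh}(g))}{\dehom(\LM_{\smh}(F))}$ in $\K[S_M]$; rewriting via the homogeneous identity above yields $\divs{\LM_{\sm}(\dehom(g))}{\LM_{\sm}(f)}$, as required.

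The main obstacle I anticipate is the reduction to homogeneous elements of the sGB: the commutation $\dehom(\LM_{\smh}(g)) = \LM_{\sm}(\dehom(g))$ can fail for non-homogeneous $g$, since a lower-degree homogeneous component may contribute, after dehomogenization, a monomial whose $\degAff$ exceeds that of the image of the top-degree component's leading term. The graded structure of $\smh$ handles this cleanly, and once the reduction is in place the remainder is a routine translation between $\spDeg$ in $\K[S_M^h]$ and $\degAff$ in $\K[S_M]$ through \Cref{def:division}.
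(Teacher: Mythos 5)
Your argument is essentially the paper's: generation is immediate from the surjectivity of $\dehom$, and the division property follows from the commutation $\dehom(\LM_{\smh}(g)) = \LM_{\sm}(\dehom(g))$ for homogeneous $g$ combined with the fact that the sparse degree is independent of the homogeneous degree, which is exactly how the paper dehomogenizes the two conditions of \cref{def:division}. The only difference is bookkeeping: you explicitly produce a homogeneous preimage of $\bar{f}$ via \cref{rmk:biggerDegSameDehom} and reduce to a homogeneous sGB, both of which the paper silently assumes; just note that your ``replace $g$ by its homogeneous components'' step changes the set whose dehomogenization is being claimed to be an sGB, so strictly speaking it proves the lemma for sGBs consisting of homogeneous elements --- which is precisely the case the paper's own proof (and the sGB constructed in \cref{thm:homogeneousStandardSparse}) covers.
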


   \begin{proof}
     The set $\dehom(sGB_{\smh}(I^h))$ generates $\dehom(I^h)$.
     Note that for homogeneous polynomials, $\LM_{\smh}$ commutes with
     the dehomogenization, that is for any homogeneous polynomial
     $g \in \K[S_M^h]$, $\LM_{\sm}(\dehom(g)) = \dehom(\LM_{\smh}(g))$.
%
%
     Consider $\bar{f} \in \dehom(I^h)$, then there is an $f \in I^h$
     such that $f = \dehom(\bar{f})$.
     In addition, there is $g \in sGB_{\smh}(I^h)$ such that
     $\divs{\LM_{\smh}(g)}{\LM_{\smh}(f)}$. Let
     $\monhom{s}{d} \in \K[S_M^h]$ such that
     $\LM_{\smh}(g) \cdot \monhom{s}{d} = \LM_{\smh}(f)$ and
     $\spDeg(\LM_{\smh}(g)) + \spDeg(\monhom{s}{d}) =
     \spDeg(\LM_{\smh}(f))$.
     The sparse degree $\spDeg$ is independent of the homogeneous
     degree, so
     $\spDeg(\dehom(\LM_{\smh}(g))) + \spDeg(\mon{s}) =
     \spDeg(\dehom(\LM_{\smh}(f)))$. Hence,
     $\spDeg(\LM_{\sm}(\dehom(g))) + \spDeg(\mon{s}) =
     \spDeg(\LM_{\sm}(\bar{f}))$ and
     $\LM_{\sm}(\dehom(g)) \cdot \mon{s} = \LM_{\sm}(\bar{f})$, so \linebreak
     $\divs{\LM_{\sm}(\dehom(g))}{\LM_{\sm}(\bar{f})}$ and
     $\dehom(sGB_{\smh}(I^h))$ is a sGB of $\dehom(I^h)$ wrt $\sm$.
   \end{proof}

   \begin{corollary} \label{thm:finiteAffineSGB}
     The  sGB of $I \subset \K[S_M]$ with respect
     to $\sm$ is finite.
   \end{corollary}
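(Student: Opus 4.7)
The plan is to reduce the affine case to the homogeneous case that we already handled in \Cref{thm:finiteHomogeneousSGB}, and then transport a finite sparse \groebner basis back through the dehomogenization using \Cref{thm:dehomGBofHomIdealIsGB}.

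Concretely, given the affine ideal $I \subset \K[S_M]$ and the sparse order $\sm$, first I would form the homogeneous ideal $I^h := \langle \hom(f) : f \in I \rangle \subset \K[S_M^h]$, and consider the graded sparse order $\smh$ attached to $\sm$ via \Cref{eq:graduateOrder}. By \Cref{thm:finiteHomogeneousSGB}, the ideal $I^h$ admits a \emph{finite} sparse \groebner basis $sGB_{\smh}(I^h)$ with respect to $\smh$, because it is extracted from an ordinary \groebner basis of $\langle \phi(I^h) + T\rangle$ in the Noetherian ring $\K[\bm{y}]$ via \Cref{thm:homogeneousStandardSparse}.

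Next, I would apply \Cref{thm:dehomGBofHomIdealIsGB} to the homogeneous ideal $I^h$: it tells us that $\dehom(sGB_{\smh}(I^h))$ is a sparse \groebner basis of $\dehom(I^h)$ with respect to $\sm$. It remains to identify $\dehom(I^h)$ with $I$. One inclusion is immediate, since for any $f \in I$ we have $\dehom(\hom(f)) = f$, so $I \subseteq \dehom(I^h)$. For the other inclusion, any element of $I^h$ is a finite sum $\sum_i g_i \cdot \hom(f_i)$ with $f_i \in I$ and $g_i \in \K[S_M^h]$; applying $\dehom$, which is a ring epimorphism, yields $\sum_i \dehom(g_i) \cdot f_i \in I$. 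Hence $\dehom(I^h) = I$, and $\dehom(sGB_{\smh}(I^h))$ is a sparse \groebner basis of $I$ with respect to $\sm$. Finiteness is preserved because $\dehom$ sends a finite set to a finite set.

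The only subtle point is the equality $\dehom(I^h) = I$; while standard for the classical homogenization, here one must be slightly careful because $\hom$ is not a homomorphism (cf. \Cref{sec:sparse-degree-and-homogenization}), so I cannot simply say that $I^h$ is generated as an ideal by $\hom(I)$ in the naive sense. The argument above sidesteps this by using the definition of $I^h$ as the ideal generated by $\{\hom(f) : f \in I\}$ and the fact that $\dehom$ is a $\K$-algebra epimorphism, which is enough to push every generator-times-coefficient combination down to $I$.
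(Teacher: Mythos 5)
Your proposal is correct and follows essentially the same route as the paper: homogenize $I$, invoke \Cref{thm:finiteHomogeneousSGB} for finiteness of the homogeneous sGB, and push it down via \Cref{thm:dehomGBofHomIdealIsGB}. The only difference is that you spell out the identity $\dehom(\hom(I)) = I$ (which the paper states without proof), and your justification of it is sound.
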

   \begin{proof}
     For $\hom(I)$, the homogenization of $I$, $\dehom(\hom(I)) = I$.
     So by \Cref{thm:dehomGBofHomIdealIsGB}
     $\dehom(sGB_{\prec}(\hom(I)))$ is a sGB of $I$ and is finite by
     \Cref{thm:finiteHomogeneousSGB}.
   \end{proof}

   \subsection{Computing sparse \groebner Bases}

   \noindent
   \textbf{Homogeneous case.}
   To compute a sGB of a homogeneous ideal
   $I^h := \langle f_1,\dots,f_k \rangle$ with respect to $\smh$, we
   introduce the $D$-sparse \groebner bases
   \cite[Sec.~III.B]{lazard1983grobner}. A $D$-sparse \groebner
   basis of $I^h$ is a finite set of polynomials $\mathcal{J}^h \subset I^h$
   such that for each $f \in I^h$ with $\degHom(f) \leq D$, it holds
   $f \in \langle \mathcal{J}^h \rangle$ and there is a $g \in \mathcal{J}^h$ such that
   $\divs{\LM_\smh(g)}{\LM_{\smh}(f)}$. For big enough $D$, for
   example equal to the maximal degree in the polynomials in
   $sGB_\smh(I^h)$, a $D$-sparse \groebner basis is a sparse
   \groebner basis. The \emph{witness degree} of $I^h$ is the minimal
   $D$ such that a $D$-sparse \groebner basis is a sGB.
   We compute $D$-sparse \groebner bases by using linear algebra.
   
   \begin{definition} \label{thm:defMacaulayMatrix}
     A Macaulay matrix $\mathcal{M}$ is a matrix whose columns are
     indexed by monomials in $\K[S_M^h]$ and the rows by polynomials
     in $\K[S_M^h]$.
     The set of monomials that index the columns contain all the
     monomial in the supports of the polynomials of the rows.
     For a monomial $m$ in a polynomial $f$, the entry in the matrix
     indexed by $(m,f)$ is the coefficient of the monomial $m$ in $f$.
     We define $\mathtt{Columns}(\mathcal{M})$ as the sequence of the
     monomials of $\mathcal{M}$ in the order that they index the
     columns.
     We define $\rows(\mathcal{M})$ as the set of \emph{non-zero}
     polynomials that index the rows of $\mathcal{M}$.
   \end{definition}
     
   If we apply a row operation to a Macaulay matrix, we obtain a new
   Macaulay matrix, where we replace one of the polynomials (that is
   one of the rows) by linear combinations of some of them.
   We say that we have a \emph{reduction to zero}, if after we perform
   a row operation, the resulting row is zero.
   As observed by~\citet{lazard1983grobner}, if we sort the columns in
   decreasing order by $\smh$, we can compute a \groebner basis
   using Gaussian elimination.  The proof of the following lemma
   follows from \cite{lazard1983grobner}.

   \begin{lemma} \label{thm:computeWithMacaulayMatrix}
     Consider the ideal
     $I^h := \langle f_1,\dots,f_k \rangle \subset \K[S_M^h]$.
     Let $\M_D$ be the Macaulay matrix whose columns are all the
     monomials in $\K[S_M^h]_D$ sorted in decreasing order by $\smh$,
     and the rows are all the products of the form
     $\monhom{s}{D - \degHom(f_i)} \cdot f_i \in
   \K[S_M^h]_{D}$.
   Let $\widetilde{\M_D}$ be the matrix obtained by applying Gaussian
   elimination to $\M_D$ to obtain a reduced row echelon form.
     Then, the polynomials in $\bigcup_{i=1}^D \rows(\widetilde{\M_i})$
     form a $D$-sparse \groebner basis.
     Moreover, if we only consider the set of polynomials whose
     leading monomial can not be divided by the leading monomial of a
     polynomial obtained in smaller degree, that is
     \[ \bigcup_{i=1}^D \{f \in \rows(\widetilde{\M_i}) :
       (\nexists \, g \in \bigcup_{j=1}^{i-1} \rows(\widetilde{\M_j})) \;
       \divs{\LM_\smh(g)}{\LM_\smh(f)} \}, \]
     then this subset is a $D$-sparse \groebner basis too.
   \end{lemma}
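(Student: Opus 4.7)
The plan is to decompose the statement into three claims: (i) the full union $\mathcal{J}^h := \bigcup_{i=1}^D \rows(\widetilde{\M_i})$ generates $I^h_{\leq D}$; (ii) every nonzero $f \in I^h_{\leq D}$ has a witness $g \in \mathcal{J}^h$ with $\divs{\LM_\smh(g)}{\LM_\smh(f)}$; and (iii) the reduced subset inherits both properties.

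For (i), fix a nonzero homogeneous $f \in I^h$ of degree $d \leq D$. Since $I^h$ is generated by the homogeneous polynomials $f_1,\dots,f_k$ and $f$ is homogeneous, I can write $f = \sum_i p_i f_i$ with each $p_i \in \K[S_M^h]_{d - \degHom(f_i)}$. Expanding each $p_i$ in its monomial basis expresses $f$ as a $\K$-linear combination of rows of $\M_d$. Gaussian elimination preserves the row space, so $f$ lies in the $\K$-linear span of $\rows(\widetilde{\M_d}) \subseteq \mathcal{J}^h$, and in particular $f \in \langle \mathcal{J}^h \rangle$.

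For (ii), I use the pivot structure of RREF together with the fact that the columns of $\widetilde{\M_d}$ are sorted in decreasing order under $\smh$. In this situation the pivot column of each row is exactly its leading monomial under $\smh$. The standard linear-algebra fact is: any nonzero $v$ in the row span has its leftmost nonzero coordinate at some pivot column (otherwise all coefficients of $v$ in the unique RREF expansion, given by the pivot entries, would vanish). Applied to $f$ from step (i), this produces $g \in \rows(\widetilde{\M_d})$ with $\LM_\smh(g) = \LM_\smh(f)$, and choosing $\monhom{t}{d_t} = \monhom{0}{0}$ in \Cref{def:division} witnesses $\divs{\LM_\smh(g)}{\LM_\smh(f)}$.

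For (iii), denote the reduced subset by $\mathcal{J}^h_{red}$. The leading-monomial property follows from transitivity of $\divs{\cdot}{\cdot}$: if $g \in \mathcal{J}^h \setminus \mathcal{J}^h_{red}$, some earlier $g' \in \rows(\widetilde{\M_j})$ (with $j<i$) satisfies $\divs{\LM_\smh(g')}{\LM_\smh(g)}$, whence $\divs{\LM_\smh(g')}{\LM_\smh(f)}$. Transitivity is the delicate point: if $\monhom{a}{d_a}\cdot\LM_\smh(g')=\LM_\smh(g)$ and $\monhom{b}{d_b}\cdot\LM_\smh(g)=\LM_\smh(f)$ with additive sparse degrees, then combining these and using the triangular inequality $\spDeg(\monhom{a}{d_a}\cdot\monhom{b}{d_b}) \leq \spDeg(\monhom{a}{d_a})+\spDeg(\monhom{b}{d_b})$ together with the reverse inequality forced by additivity yields equality, giving the required witness. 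Once leading-monomial divisibility is in hand, generation follows by applying the division algorithm of \Cref{thm:divisionConverges} to $f \in I^h_{\leq D}$ using divisors in $\mathcal{J}^h_{red}$: termination is guaranteed, and the remainder $r$ has no leading monomial sparse-divisible by any element of $\mathcal{J}^h_{red}$, so $r=0$ and $f \in \langle \mathcal{J}^h_{red}\rangle$.

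The main obstacle is making transitivity of the sparse division relation precise within the degree-graded setting, since the sparse degree is not itself a grading on $\K[S_M^h]$; this is where \Cref{thm:compatMultAff,thm:compatMultHom} are invoked.
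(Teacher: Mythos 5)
Your proof is correct and takes essentially the route the paper itself relies on: the paper offers no argument of its own, deferring to Lazard's observation that the row space of $\M_d$ equals $[I^h]_d$ and that echelon form with columns sorted by $\smh$ exposes leading monomials, which is exactly your steps (i)--(ii), while your step (iii) correctly supplies the sparse-specific points the citation leaves implicit (reflexivity and transitivity of the relation $\divs{\cdot}{\cdot}$ via the triangle inequality for $\spDeg$, and termination of the modified division from \Cref{thm:divisionConverges}). The only detail to make explicit is that in (iii) the witness $g'$ of a discarded row may itself have been discarded, so the transitivity argument must be iterated, descending strictly in degree until an element of the reduced set is reached; likewise note that in the homogeneous setting every intermediate polynomial in the division keeps degree $d \leq D$, so the divisibility property remains applicable and forces the remainder to vanish.
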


   \noindent
   \textbf{Non-homogeneous case.}

   Given an ideal
   $I := \langle \bar{f_1}\dots \bar{f_r} \rangle \subset \K[S_M]$,
   we homogenize the polynomials and use
   \Cref{thm:computeWithMacaulayMatrix} to compute a sparse
   \groebner basis with respect to $\smh$. By
   \Cref{thm:dehomGBofHomIdealIsGB}, if we dehomogenize the computed basis, we
   obtain a sparse \groebner basis with respect to $\sm$ of $I$.
   Instead of homogenizing all polynomials $\bar{f_i}$ simultaneously,
   we consider an iterative approach, which, under regularity
   assumptions, involves only full-rank matrices, and hence avoids all
   reductions to zero.
   The following lemma allows us to compute a sparse \grobner basis in the
   homogeneous case, from the non-homogeneous one.
   \begin{lemma}
     If $G$ is a sGB of $I$ with respect to $\sm$, then
     $G^{h} := \hom(G)$ is a sGB of
     $\langle \hom(I) \rangle$ with respect to $\smh$.
   \end{lemma}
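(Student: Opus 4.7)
The plan is to verify the two defining properties of a sparse \groebner basis from \Cref{def:sparse-grobner-basis}: that $G^h$ generates the ideal $\langle \hom(I) \rangle$, and that for every nonzero $f \in \langle \hom(I) \rangle$ there is some $g \in G^h$ with $\divs{\LM_{\smh}(g)}{\LM_{\smh}(f)}$.

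For the generation property I would pick $h \in I$ arbitrary and invoke the division algorithm of \Cref{thm:divisionConverges} with respect to $G$, obtaining a terminating sequence of reductions $h = h^{(0)} \to h^{(1)} \to \cdots \to h^{(N)} = 0$ where $h^{(j+1)} = h^{(j)} - c_j \mon{r_j}\, g_{i_j}$ with $g_{i_j} \in G$ and $\divs{\LM_{\sm}(g_{i_j})}{\LM_{\sm}(h^{(j)})}$. A direct computation using that both sides are homogeneous of degree $\degAff(h^{(j)})$ and dehomogenize to $h^{(j)}$ yields the identity
\[ \hom(h^{(j)}) = c_j\,\hom(\mon{r_j})\cdot\hom(g_{i_j}) + \monhom{0}{k_j}\,\hom(h^{(j+1)}), \]
with $k_j := \degAff(h^{(j)}) - \degAff(h^{(j+1)}) \geq 0$. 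Backward induction starting from $\hom(h^{(N)}) = 0$ then gives $\hom(h) \in \langle G^h \rangle$. Since $h \in I$ was arbitrary, $\langle \hom(I) \rangle \subseteq \langle G^h \rangle$, and the reverse inclusion is immediate from $G \subseteq I$.

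For the division property I would use that $\smh$ refines $\degHom$, so any nonzero $f \in \langle \hom(I) \rangle$ satisfies $\LM_{\smh}(f) = \LM_{\smh}(f_{D^*})$, where $f_{D^*}$ is the nonzero homogeneous component of $f$ of largest $\degHom$-degree $D^*$. Since $\langle \hom(I) \rangle$ is $\degHom$-homogeneous, $f_{D^*}$ lies in it and $\dehom(f_{D^*}) \in I$. Applying the sGB property of $G$ to $\dehom(f_{D^*})$ produces $g \in G$ and $\mon{r} \in \K[S_M]$ with $\LM_{\sm}(g) \cdot \mon{r} = \LM_{\sm}(\dehom(f_{D^*}))$ and affine degrees adding. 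Writing $D' := \degAff(\dehom(f_{D^*}))$, the factorization $f_{D^*} = \monhom{0}{D^* - D'} \cdot \hom(\dehom(f_{D^*}))$ together with \Cref{thm:homDehomOrder} let me check that $\hom(\mon{r}) \cdot \monhom{0}{D^* - D'}$ is the cofactor witnessing $\divs{\LM_{\smh}(\hom(g))}{\LM_{\smh}(f)}$ in the sense of \Cref{def:division}: the relevant sparse degrees add to $\degAff(\LM_{\sm}(g)) + \degAff(\mon{r}) = D' = \spDeg(\LM_{\smh}(f))$.

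The main obstacle, I expect, is the bookkeeping in this second step: $\hom$ does not commute with $\LM$ on non-homogeneous elements, and $\langle \hom(I) \rangle$ may contain homogeneous elements whose $\degHom$-degree strictly exceeds the affine degree of their dehomogenization, so $\LM_{\smh}(f_{D^*})$ typically equals $\monhom{0}{D^* - D'} \cdot \hom(\LM_{\sm}(\dehom(f_{D^*})))$ rather than $\hom(\LM_{\sm}(\dehom(f_{D^*})))$ itself. One must verify that this extra $\monhom{0}{\cdot}$ factor is absorbed into the cofactor without disturbing the sparse-degree additivity required by \Cref{def:division}. An analogous but milder care is needed in the generation step to track how the $\monhom{0}{k_j}$ padding propagates through the induction.
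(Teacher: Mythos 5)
Your proof is correct, and its divisibility half is essentially the paper's own argument: you use the same factorization $f_{D^*} = \monhom{0}{D^*-D'}\cdot \hom(\dehom(f_{D^*}))$, the commutation $\LM_{\smh}(\hom(\bar g)) = \hom(\LM_{\sm}(\bar g))$, and the fact that multiplying by $\monhom{0}{1}$ changes neither the sparse degree nor the leading monomial; your explicit reduction to the top homogeneous component $f_{D^*}$ only makes precise a step the paper leaves implicit, since its factorization literally applies to homogeneous $f$. Where you genuinely diverge is the generation half: the paper obtains it as a corollary of the divisibility property, by dividing an element of $\langle \hom(I)\rangle$ by $\hom(G)$ with the modified division algorithm (\Cref{thm:divisionConverges}) and noting the remainder is zero, whereas you prove it independently by homogenizing the affine top-reduction of $h \in I$ by $G$ and propagating the $\monhom{0}{k_j}$ padding through a backward induction. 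Both routes work; yours needs the extra bookkeeping you flag (the identity at each step follows from degree counting plus injectivity of $\dehom$ on each graded piece, using that the division relation forces $\degAff(\mon{r_j}\, g_{i_j}) = \degAff(\mon{r_j}) + \degAff(g_{i_j})$, so both sides really are homogeneous of degree $\degAff(h^{(j)})$), but it is self-contained and does not invoke the homogeneous division algorithm, while the paper's route is shorter once divisibility is established.
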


   \begin{proof}
     First note that the homogenization commutes with the leading
     monomial, that is $\forall \bar{g} \in \K[S_M]$,
     $\LM_{\smh}(\hom(\bar{g})) \!=\! \hom(\LM_{\sm}(\bar{g}))$.
     Let $f \in \langle \hom(I) \rangle$. We can write $f$ as
     $\monhom{0}{\degHom(f) - \spDeg(f)} \cdot
     \hom(\dehom(f))$. Consider $\bar{g} \in G$ such that
     $\divs{\LM_\sm(\bar{g})}{\LM_\sm(\dehom(f))}$.
     By definition (\Cref{def:division}),
     $\divs{\hom(\LM_\sm(\bar{g}))}{\hom(\LM_\sm(\dehom(f)))}$, and by
     commutativity, it holds that
     $\divs{\LM_\smh(\hom(\bar{g}))}{\LM_\smh(\hom(\dehom(f))}$.
     The sparse degree and the leading monomials with respect to
     $\smh$ are invariants under the multiplication by
     $\monhom{0}{1}$. Hence,
     $\divs{\LM_\smh(\hom(\bar{g}))}{\LM_\smh(f)}$.
     To conclude, we have to prove that $G^{h}$ is a basis of
     $\langle \hom(I) \rangle$. As for each $f \in \hom(I)$ there is a
     $\bar{g} \in G$ such that
     $\divs{\LM_\smh(\hom(\bar{g}))}{\LM_\smh(f)}$.
     Thus, the remainder of the division algorithm
     (\Cref{thm:divisionConverges}) is zero, and so we obtain a
     representation of $f$ in the basis $\hom(G)$.
   \end{proof}

   \begin{corollary}
     \label{thm:algorithmCorrect}
     Let $I \subset \K[S_M]$ be an (non-homogeneous) ideal and consider the
     (non-homogeneous) polynomial $\bar{f} \in \K[S_M]$. Let $G$ be a (non-homogeneous)
     sGB of $I$ wrt $\sm$ and $G^h_{\bar{f}}$ be a (homogeneous) sGB
     of $\langle \hom(G) + \hom(\bar{f}) \rangle$ wrt $\smh$. Then,
     $\dehom(G^h_{\bar{f}})$ is a (non-homogeneous) sGB of $\langle I + \bar{f} \rangle$
     wrt $\sm$.
   \end{corollary}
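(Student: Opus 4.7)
The plan is to reduce the corollary to \Cref{thm:dehomGBofHomIdealIsGB} applied to the homogeneous ideal $\widetilde{J} := \langle \hom(G) \cup \{\hom(\bar{f})\} \rangle \subset \K[S_M^h]$. This ideal is homogeneous because all its generators are homogeneous, and by hypothesis $G^h_{\bar{f}}$ is a sGB of $\widetilde{J}$ with respect to $\smh$. Thus \Cref{thm:dehomGBofHomIdealIsGB} immediately gives that $\dehom(G^h_{\bar{f}})$ is a sGB of $\dehom(\widetilde{J})$ with respect to $\sm$, and the entire proof reduces to establishing the identity $\dehom(\widetilde{J}) = \langle I + \bar{f} \rangle$.

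For the inclusion $\dehom(\widetilde{J}) \subseteq \langle I + \bar{f} \rangle$, I use that $\dehom$ is a ring epimorphism, so it sends the ideal $\widetilde{J}$ into the ideal generated by the dehomogenizations of its generators, namely $G \cup \{\bar{f}\}$; since $G$ generates $I$, this ideal coincides with $\langle I + \bar{f} \rangle$. For the reverse inclusion, any $j \in \langle I + \bar{f} \rangle$ may be written as $j = \sum_i a_i g_i + q\, \bar{f}$ with $g_i \in G$ and $a_i, q \in \K[S_M]$; the element $\widetilde{j} := \sum_i \hom(a_i)\,\hom(g_i) + \hom(q)\,\hom(\bar{f})$ then lies in $\widetilde{J}$, because each summand is a $\K[S_M^h]$-multiple of a generator of $\widetilde{J}$, and $\dehom(\widetilde{j}) = j$ follows from the multiplicativity of $\dehom$ together with $\dehom \circ \hom = \mathrm{Id}_{\K[S_M]}$.

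The only subtle point is that $\hom$ is not a ring homomorphism, as recalled in \Cref{sec:sparse-degree-and-homogenization}. The argument sidesteps this obstacle by never multiplying under $\hom$: I only use $\hom$ to lift individual generators of $\langle I + \bar{f} \rangle$ into $\K[S_M^h]$, while the ideal structure is transported back by the genuine homomorphism $\dehom$. In particular, no invocation of \Cref{rmk:biggerDegSameDehom} for degree padding is needed here, since distinct summands of $\widetilde{j}$ are permitted to have distinct homogeneous degrees inside $\widetilde{J}$.
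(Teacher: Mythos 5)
Your proof is correct. The paper offers no written proof: the statement is placed as an immediate consequence of the lemma directly preceding it (that $\hom(G)$ is a sGB of $\langle \hom(I) \rangle$ wrt $\smh$) combined with \Cref{thm:dehomGBofHomIdealIsGB}, i.e.\ one identifies $\langle \hom(G) + \hom(\bar{f})\rangle$ with $\langle \hom(I) + \hom(\bar{f})\rangle$ and then dehomogenizes. You take a slightly different route: you apply \Cref{thm:dehomGBofHomIdealIsGB} directly to $\widetilde{J} = \langle \hom(G) + \hom(\bar{f})\rangle$ and verify by hand that $\dehom(\widetilde{J}) = \langle I + \bar{f}\rangle$, using only that $\dehom$ is a surjective ring homomorphism with $\dehom \circ \hom = \mathrm{Id}$ and that $G$ generates $I$. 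This bypasses the preceding lemma entirely, and it also exposes that the corollary uses the hypothesis ``$G$ is a sGB of $I$'' only through ``$G$ generates $I$''; the paper's route keeps the lemma in the loop because that lemma is needed anyway for the algorithmic machinery (e.g.\ \Cref{thm:givenGBformMacMatrix,thm:sparse-f5} require a sGB of the homogeneous ideal), whereas your argument is marginally more elementary and self-contained. Your handling of the failure of $\hom$ to be a homomorphism---lifting only generators and transporting the ideal structure back through the genuine homomorphism $\dehom$---is precisely the point where a sloppier argument would break, and you treat it correctly.
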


   \Cref{thm:algorithmCorrect} supports an iterative algorithm to
   compute a sGB of $I$.  For each $i \leq n$, let
   $I_i := \langle \bar{f}_1, \dots, \bar{f}_i \rangle$ and
   $G_i := sGB_{\sm}(I_i)$. Consider
   $I_i^h := \langle \hom(G_{i-1}) + \hom(\bar{f_i}) \rangle$. By
   \Cref{thm:algorithmCorrect}, we can consider $G_i$ as
   $\dehom(sGB_{\smh}(I_i^h))$. To compute $sGB_{\smh}(I_i^h)$ we use
   \Cref{thm:defMacaulayMatrix}.

   Many rows of the Macaulay matrices reduces to zero during the
   Gaussian elimination procedure. We can adapt the F5 criterion
   \cite{faugere2002F5,eder2014survey} to identify these rows and
   avoid them.

   \begin{lemma}
     \label{thm:givenGBformMacMatrix}
     Let $G$ be a sGB of the homogeneous ideal
     $I^h$ wrt $\smh$.
     Let $\mathcal{N} \subset \K[S^h_M]_D$ be the set of monomials of
     degree $D$ such that for each of them there is a polynomial in
     $G$ whose leading term divides it, that is
     $ \mathcal{N}= \left\{ \monhom{s}{D} \in \K[S_M^h]_D \,:\,
       \exists \, g \in G \text{ s.t. }
       \divs{\LM_\smh(g)}{\monhom{s}{D}} \right\} .  $
     To each $\monhom{s}{D} \in \mathcal{N}$ associate only one
     polynomial $g \in G$, such that
     $\divs{\LM_\smh(g)}{\monhom{s}{D}}$. Let $\mathcal{R}$ be the set
     formed by the polynomials
     $\frac{\monhom{s}{D}}{\LM_\smh(g)} \cdot g$ where $g$ is the
     polynomial associated to $\monhom{s}{D} \in \mathcal{N}$.

     Consider the Macaulay matrix $\M'_D$ with columns indexed by the monomials in
     $\K[S_M^h]_D$ in decreasing order  w.r.t. $\smh$ and rows
     indexed by $\mathcal{R}$.
     Let $\widetilde{\M'_d}$ be the Macaulay matrix obtained after
     applying Gaussian elimination to $\M'_d$ to obtain a reduced row
     echelon form.
     Then, $\rows(\widetilde{M'_D}) = \rows(\widetilde{M_D})$, where
     $\widetilde{M_d}$ is the Macaulay matrix of
     \Cref{thm:computeWithMacaulayMatrix} with respect to
     $G^h$. Moreover, the matrix $M'_D$ is full-rank and in row echelon
     form.
   \end{lemma}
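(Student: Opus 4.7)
The plan is to first pin down the leading monomial of every row of $M'_D$, from which both the row-echelon/full-rank claim and the row-space identity follow easily.

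\textbf{Leading monomials.} For $\monhom{s}{D} \in \mathcal{N}$ with associated polynomial $g \in G$, the quotient $\monhom{t}{d_t} := \monhom{s}{D}/\LM_\smh(g)$ is a well-defined monomial and, by \Cref{def:division}, satisfies $\spDeg(\monhom{t}{d_t}) + \spDeg(\LM_\smh(g)) = \spDeg(\monhom{s}{D})$. Hence, by \Cref{rmk:leadingMonomialAndDivision} (which rests on \Cref{thm:compatMultHom}), the leading monomial of the row $\monhom{t}{d_t} \cdot g \in \mathcal{R}$ equals $\monhom{s}{D}$. Comparing homogeneous degrees gives $d_t = D - \degHom(g)$, so every row of $M'_D$ lies in $\K[S_M^h]_D$. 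Since distinct elements of $\mathcal{N}$ yield distinct leading monomials, the rows of $\mathcal{R}$ have pairwise distinct leading monomials, and in particular $|\mathcal{R}| = |\mathcal{N}|$.

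\textbf{Row echelon form and full rank.} The columns of $M'_D$ are sorted in decreasing order with respect to $\smh$, and each row has its leftmost non-zero entry exactly at the column indexed by its leading monomial. Ordering the rows of $M'_D$ by decreasing leading monomial therefore yields a matrix in row echelon form with $|\mathcal{R}|$ pivots on $|\mathcal{R}|$ distinct columns, so $M'_D$ is full-rank.

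\textbf{Equality of row spaces.} By construction, the row space of $M'_D$ is contained in $I^h \cap \K[S_M^h]_D$, and since $G$ generates $I^h$, the rows of $M_D$ span $I^h \cap \K[S_M^h]_D$ in its entirety. For the converse inclusion for $M'_D$, I would proceed by well-founded induction on $\LM_\smh$: given a non-zero $f \in I^h \cap \K[S_M^h]_D$, since $G$ is a sGB there is $g \in G$ with $\divs{\LM_\smh(g)}{\LM_\smh(f)}$, so $\LM_\smh(f) \in \mathcal{N}$ and some row $r \in \mathcal{R}$ satisfies $\LM_\smh(r) = \LM_\smh(f)$. Subtracting the appropriate scalar multiple of $r$ from $f$ produces an element of $I^h \cap \K[S_M^h]_D$ with strictly smaller leading monomial, closing the induction. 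Hence $M'_D$ and $M_D$ share the same row space, so their reduced row echelon forms coincide row by row, which proves $\rows(\widetilde{M'_D}) = \rows(\widetilde{M_D})$.

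The only technically delicate point is the first step: without the sparse-degree additivity built into \Cref{def:division}, the product $\monhom{t}{d_t} \cdot g$ could a priori have a leading monomial strictly greater than $\monhom{t}{d_t} \cdot \LM_\smh(g)$, which would destroy the clean pivot structure. Once that compatibility is secured via \Cref{thm:compatMultHom}, the remainder of the argument is a standard Macaulay-style linear algebra computation.
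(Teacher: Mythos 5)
Your proposal is correct and follows essentially the same route as the paper: the pivot structure comes from the fact that each row $\frac{\monhom{s}{D}}{\LM_\smh(g)}\cdot g$ has leading monomial exactly $\monhom{s}{D}$ (\Cref{rmk:leadingMonomialAndDivision}), giving distinct pivots and full rank, and the row-space identity with $\M_D$ follows because any element of $[I^h]_D$ reduces to zero against these rows, since a non-zero remainder would have a leading monomial not divisible by any $\LM_\smh(g)$, contradicting the sGB property. The only cosmetic difference is that you phrase this last step as a well-founded induction on the leading monomial, whereas the paper phrases it as a reduction-then-contradiction argument; the content is identical.
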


   \begin{proof}
     By construction, we are skipping the polynomials whose leading
     monomials already appear in $M'_D$. Hence, each row has a
     different leading monomial and so, the matrix $M'_D$ is
     full-rank.
     If we add to $M'_D$ a new homogeneous polynomial of degree $D$
     belonging to the ideal $I^h$, then it must be linear dependent with
     the polynomials in $\rows(M'_D)$. If not, after reducing the
     polynomial by the previous rows, we discovered a new polynomial
     in the ideal $I^h$ with a leading monomial which is not divisible
     by $G^h$. But this is not possible because $G^h$ is a sparse
     \groebner basis.
   \end{proof}

   \begin{lemma}[Sparse F5 criterion] \label{thm:sparse-f5}
     Let $G^h$ be a sparse \groebner basis of the
     homogeneous ideal $I^h$ wrt $\smh$ and let $\M'_D$ be
     the Macaulay matrix of \Cref{thm:givenGBformMacMatrix} of degree
     $D$.
     Let $d \in \N$ and consider the set 
     $
       \b = \{
       \monhom{s}{D - d} \in \K[S_M^h]_{D-d}  \,:\, 
       \nexists \, g \in G^h \text{ s.t. }  \divs{\LM_\smh(g)}{\monhom{s}{D - d}}
       \} .
     $
     \noindent
     Let $f \in \K[S_M^h]_d$; consider the Macaulay matrix
     $\M^*_D$ obtained after appending to $\M'_D$ rows indexed by
     $\{\monhom{s}{D - d} \cdot f  \,:\, \monhom{s}{D - d} \in \b\}.$

     Let $\widetilde{\M^*_D}$ be the matrix obtained after applying
     Gaussian elimination to $\M^*_D$.
     Then, $\rows(\widetilde{\M^*_D}) = \rows(\widetilde{\M_D})$,
     where $\widetilde{\M_D}$ is the Macaulay matrix of
     \Cref{thm:computeWithMacaulayMatrix} for the ideal
     $\langle G^h, f \rangle$.
     Moreover, if $f$ is not a zero-divisor in $\K[S_M^h] / I^h$, then
     $\M^*_D$ is full-rank.
   \end{lemma}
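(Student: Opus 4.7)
The plan is to prove the two claims separately. Both $\widetilde{\M^*_D}$ and $\widetilde{\M_D}$ are in reduced row echelon form with columns indexed by the same ordered set of monomials, so equality of their non-zero rows is equivalent to equality of their $\K$-row spans. Hence the first claim reduces to showing $\mathrm{span}(\M^*_D) = \mathrm{span}(\M_D)$ as subspaces of $\K[S_M^h]_D$.

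The inclusion $\mathrm{span}(\M^*_D) \subseteq \mathrm{span}(\M_D)$ is immediate: every row of $\M^*_D$ is either a multiple of some $g \in G^h$ (from $\M'_D$) or of the form $\monhom{s}{D-d} \cdot f$, both of which appear among the rows of $\M_D$. For the reverse inclusion, rows of $\M_D$ of the first kind lie in $I^h \cap \K[S_M^h]_D$, which coincides with $\mathrm{span}(\M'_D) \subseteq \mathrm{span}(\M^*_D)$ by \Cref{thm:givenGBformMacMatrix}. The content is to absorb each row $\monhom{s}{D-d} \cdot f$ with $\monhom{s}{D-d} \notin \b$ into $\mathrm{span}(\M^*_D)$. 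I will argue by induction on the well-order $\smh$ restricted to $\K[S_M^h]_{D-d}$. Since $\monhom{s}{D-d} \notin \b$, pick $g \in G^h$ and $\monhom{t}{e}$ with $\LM_\smh(g) \cdot \monhom{t}{e} = \monhom{s}{D-d}$ and sparse degrees adding, by the definition of $\divs{\cdot}{\cdot}$. Then
\[ \monhom{s}{D-d} \cdot f \;=\; \monhom{t}{e}\, g \cdot f \;-\; \monhom{t}{e}\,(g - \LM_\smh(g)) \cdot f , \]
where the first summand belongs to $I^h \cap \K[S_M^h]_D = \mathrm{span}(\M'_D)$. By \Cref{thm:compatMultHom}, applied using precisely the sparse-degree additivity just noted, every monomial of $\monhom{t}{e}\,(g - \LM_\smh(g))$ is $\smh$-strictly smaller than $\monhom{s}{D-d}$, so the second summand expands as a $\K$-linear combination of products $\monhom{s'}{D-d} \cdot f$ with $\monhom{s'}{D-d} \smh \monhom{s}{D-d}$, and the inductive hypothesis closes the argument.

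For the full-rank claim, suppose a linear dependence $\xi + h' f = 0$ among the rows of $\M^*_D$, where $\xi$ collects the $\M'_D$-part (so $\xi \in I^h \cap \K[S_M^h]_D$) and $h' = \sum_i c_i\, \monhom{s_i}{D-d}$ with distinct $\monhom{s_i}{D-d} \in \b$. Then $h' f = -\xi \in I^h$, and since $f$ is not a zero-divisor in $\K[S_M^h]/I^h$, we obtain $h' \in I^h$. If $h' \neq 0$, its $\smh$-leading monomial is one of the $\monhom{s_i}{D-d}$ with $c_i \neq 0$; by the sparse \groebner basis property of $G^h$, some $\LM_\smh(g)$ would have to divide it, contradicting membership in $\b$. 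Hence $h' = 0$, every $c_i$ vanishes, then $\xi = 0$, and finally the coefficients of the $\M'_D$-rows vanish by the full-rank assertion of \Cref{thm:givenGBformMacMatrix}.

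The main obstacle lies in the inductive step of Part~1: applying \Cref{thm:compatMultHom} demands exactly the sparse-degree equality $\spDeg(\LM_\smh(g)) + \spDeg(\monhom{t}{e}) = \spDeg(\monhom{s}{D-d})$, which is supplied by the division relation $\divs{\LM_\smh(g)}{\monhom{s}{D-d}}$. Once this compatibility is granted, both parts reduce to bookkeeping on top of \Cref{thm:givenGBformMacMatrix} and the defining property of a sparse \groebner basis.
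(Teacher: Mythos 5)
Your proposal is correct and takes essentially the same route as the paper: both parts rest on \Cref{thm:givenGBformMacMatrix}, the sparse division relation together with \Cref{thm:compatMultHom}, and the non-zero-divisor plus sGB argument for full rank. The only difference is presentational: the paper absorbs each omitted row $\monhom{s}{D-d}\cdot f$ in one shot via the remainder of the division of $\monhom{s}{D-d}$ by $G^h$ (which is supported on $\b$ by \Cref{thm:divisionConverges}), whereas you unfold that same reduction as an explicit induction along the well-order $\smh$.
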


   \begin{proof}
     Let $\monhom{s}{D - d}$ be a monomial such that there is
     a $g \in G^h$ such that
     $\divs{\LM_\smh(g)}{ \monhom{s}{D - d} }$.
     Consider
     $p := \frac{\monhom{s}{D - d}}{\LM_\smh(g)} \cdot g$. By
     \Cref{rmk:leadingMonomialAndDivision},
     $\LM_\smh(p) = \monhom{s}{D - d}$. Consider
     $p_{red} := LT_\smh(h) + q$, where $q$ is the remainder of the
     division of $p - LT_\smh(p)$ by $G^h$. It holds $p_{red} \in I^h$.
     Also all the monomials in the support of  $q$ are not divisible by
     the leading monomials of $G^h$
     (\Cref{thm:divisionConverges}). Then, using the rows of $\M^*_D$
     we can form the polynomial $f \cdot q$. If we add the row
     corresponding to $f \cdot \monhom{s}{D - d}$, we can
     reduce this polynomial to zero as
     $f \cdot \monhom{s}{D - d} + f \cdot q = f \cdot p_{red}
     \in I^h$.
     If $f$ is not a zero-divisor in $\K[S^h_M] / I^h$, then
     $g \cdot f \in I^h$, implies $g \in I^h$ and so
     $LM_{\smh}(g) \in G^h$. Hence, we skip every row reducing to zero
     involving $f$.
   \end{proof}
   
   \begin{lemma}
     If $\bar{f}_1,\dots,\bar{f}_k \in \K[S_M]$ is a regular sequence, then for each
     $i \leq k $, $\hom(\bar{f}_i)$ is not a zero-divisor of
     $\K[S_M^h] / \hom(\langle \bar{f}_1,\dots,\bar{f}_{i-1} \rangle)$.
   \end{lemma}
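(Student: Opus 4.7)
The plan is to transfer the regularity of $(\bar{f}_1, \dots, \bar{f}_k)$ in $\K[S_M]$ back to $\K[S_M^h]$ via the dehomogenization map $\dehom$. Since $\hom(\bar{f}_i)$ is homogeneous and $\hom(\langle \bar{f}_1, \dots, \bar{f}_{i-1}\rangle)$ is a homogeneous ideal, it is enough to prove that for every homogeneous $g \in \K[S_M^h]$ satisfying $\hom(\bar{f}_i) \cdot g \in \hom(\langle \bar{f}_1, \dots, \bar{f}_{i-1}\rangle)$, one already has $g \in \hom(\langle \bar{f}_1, \dots, \bar{f}_{i-1}\rangle)$.

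First, I would record the identity $\dehom(\hom(J)) = J$ for every ideal $J \subset \K[S_M]$: one inclusion uses $\dehom \circ \hom = \mathrm{Id}$ on $\K[S_M]$, while the other uses that $\dehom$ is a ring epimorphism, so that $\dehom$ sends any $\K[S_M^h]$-linear combination of generators of $\hom(J)$ back into $J$. Applying $\dehom$ to the assumed containment and using $\dehom(\hom(\bar{f}_i)) = \bar{f}_i$ yields
\[
  \bar{f}_i \cdot \dehom(g) \,\in\, \langle \bar{f}_1, \dots, \bar{f}_{i-1}\rangle.
\]
The regularity of $(\bar{f}_1,\dots,\bar{f}_k)$ then forces $\dehom(g) \in \langle \bar{f}_1, \dots, \bar{f}_{i-1}\rangle$.

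The final step is to lift this relation back to $\K[S_M^h]$. Set $g' := \hom(\dehom(g))$; by the very definition of $\hom$ applied to an ideal, $g' \in \hom(\langle \bar{f}_1, \dots, \bar{f}_{i-1}\rangle)$. Writing $d_g := \degHom(g)$, one has $\degHom(g') = \degAff(\dehom(g)) = \spDeg(g) \leq d_g$. Both $g$ and $\monhom{0}{d_g - \degHom(g')} \cdot g'$ are homogeneous of degree $d_g$ and share the same dehomogenization, namely $\dehom(g)$. Invoking the injectivity of $\dehom$ restricted to $\K[S_M^h]_{d_g}$ (the same injectivity already used in the regularity-transfer lemma of \Cref{sec:mixed-regularity}), I conclude that the two elements coincide, so $g = \monhom{0}{d_g - \degHom(g')} \cdot g' \in \hom(\langle \bar{f}_1, \dots, \bar{f}_{i-1}\rangle)$.

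I do not anticipate a serious obstacle; the only delicate point is the degree mismatch between $g$ and $\hom(\dehom(g))$ caused by the fact that $\hom$ is not a homomorphism, which is cured by multiplying with the appropriate power of $\monhom{0}{1}$ and then appealing to injectivity of $\dehom$ in each fixed degree.
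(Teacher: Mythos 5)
Your proof is correct and follows essentially the same route as the paper: dehomogenize the divisibility relation, invoke regularity of the affine sequence $\bar f_1,\dots,\bar f_k$, and translate membership in $\hom(\langle \bar f_1,\dots,\bar f_{i-1}\rangle)$ back and forth via $\dehom$. The only difference is presentational: the paper argues by contradiction and compresses your careful lifting step (that a homogeneous $g$ with $\dehom(g)\in\langle \bar f_1,\dots,\bar f_{i-1}\rangle$ satisfies $g=\monhom{0}{d_g-\degHom(g')}\cdot\hom(\dehom(g))\in\hom(\langle \bar f_1,\dots,\bar f_{i-1}\rangle)$, using injectivity of $\dehom$ in each fixed degree) into the phrase ``by definition of the homogenization of an ideal''.
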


   \begin{proof}
     If $\hom(\bar{f}_i)$ is a zero-divisor of
     $\K[S_M^h] / \hom(\langle \bar{f}_1,\dots,\bar{f}_{i-1} \rangle)$, there is a
     $g \in \K[S_M^h]$ such that
     $g \not\in \hom(\langle \bar{f}_1,\dots,\bar{f}_{i-1} \rangle)$ and
     $g \cdot \hom(\bar{f}_i) \in \hom(\langle \bar{f}_1,\dots,\bar{f}_{i-1}
     \rangle)$. By definition of the homogenization of an ideal,
     $\dehom(g) \not\in \langle \bar{f}_1,\dots,\bar{f}_{i-1} \rangle$ but, as
     $\dehom$ is a homomorphism,
     $\dehom(g) \cdot \bar{f}_i \in \langle \bar{f}_1,\dots,\bar{f}_{i-1}
     \rangle$. So, $\bar{f}_1,\dots,\bar{f}_i$ is not a regular sequence.
   \end{proof}

   Hence, given the witness degrees of each $I_i^h$, we have the
   algorithm \Cref{alg:matrixF5} to compute iteratively a sparse \groebner
   basis.

   As in the standard case, we can define the reduced sGB and adapt
   \cite[Prop.~2.7.6]{cox1992ideals} to prove their finiteness and
   uniqueness.

      \begin{algorithm}
     \caption{$\texttt{M}^2$: Mixed sparse Matrix-F5 with respect to $\sm$}
     \begin{algorithmic}
     \label{alg:matrixF5}
     \REQUIRE \parbox[t]{180px}{$\bar{f}_1,\dots,\bar{f}_k \in \K[S_M]$
       and $d^{wit}_1,\dots,d^{wit}_k$ such that $d^{wit}_i$ is the
       witness degree of $I^h_i$.}
%
         \FOR{$i = 1$ to $k$}
           \STATE $G_i \leftarrow \emptyset$
           \FOR{$d = 1$ to $d^{wit}_i$}
           \STATE $\M^i_d \leftarrow$ \parbox[t]{300px}{Macaulay matrix
             with columns indexed by the monomials in $\K[S_M^h]_d$
             in decreasing order by $\smh$}
             \FOR{$\monhom{s}{d} \in \K[S_M^h]_d$}
               \IF{$\exists g \in G^h_{i-1} : \divs{\LM_{\smh}(g)}{\monhom{s}{d}}$}
                 \STATE Add to $\M^i_d$ the polynomial $\frac{\monhom{s}{d}}{\LM_{\smh}(g)} \cdot g$
               \ENDIF
             \ENDFOR
             \FOR{$\monhom{s}{d-\degAff(\bar{f}_i)} \in \K[S_M^h]_{d-\degAff(\bar{f}_i)}$}
               \IF{$\nexists g \in G^h_{i-1}$ such that $\divs{\LM_{\smh}(g)}{\LM_{\smh}(\hom(\bar{f}_i))}$}
                 \STATE Add to $\M^i_d$ the polynomial $\monhom{s}{d-\degAff(\bar{f}_i)} \cdot \hom(\bar{f}_i)$
               \ENDIF
             \ENDFOR
             \STATE $\widetilde{\M^i_d} \leftarrow$ Gaussian elimination of $\M^i_d$
             \STATE $G_i \leftarrow G_i \cup \{ \bar{h} \in \dehom(\rows(\widetilde{\M^i_d})) : 
             \nexists\, \bar{g} \in G_i \land \divs{\LM_{\sm}(\bar{g})}{\LM_{\sm}(\bar{h})} \}$

           \ENDFOR
           \STATE $G^h_{i} \leftarrow \hom(G_{i})$ %
           \ENDFOR

           \RETURN $G_k$
     \end{algorithmic}
   \end{algorithm}

   \section{Multihomogeneous systems}

   We consider an algorithm for solving 0-dimensional square
   multihomogeneous systems with no solutions at infinity.

   \noindent
   \textbf{Notation.}
   \label{sec:Multihom-notation}
   Let $n_1,\dots n_r \in \N$, $N := \sum_i n_i$, and
   $\bm{n} := (n_1 \! \dots \! n_r)\!\in\!\N^r$.
   For $1 \leq i \leq r$, let
   $\bm{x_i}$ be the set of variables $\{x_{i,0},\dots,x_{i,n_i}\}$.
   Let $\K[\bm{x}] := \bigotimes_{i=1}^r \K[\bm{x}_i]$ be the
   multihomogeneous $\K$-algebra multigraded by $\Z^r$, such that for
   all $\bm{d} := (d_1,\dots,d_r) \in \Z^r$, we have
   $\K[\bm{x}]_{\bm{d}} := \bigotimes_{i=1}^r \K[\bm{x}_i]_{d_i}$.
   Given a $\K[\bm{x}]$-module $\Modu$, we consider $[\Modu]_{\bm{d}}$ as the
   graded part of $\Modu$ of multidegree $\bm{d}$.
   Given two multidegrees $\bm{d}$ and $\bar{\bm{d}}$, we say that
   $\bm{d} \geq \bar{\bm{d}}$ if the inequality holds component-wise.
   We consider the multiprojective space
   $\P := \Pr^{n_1} \times \cdots \times \Pr^{n_r}$.

   Let $\bm{\bar{1}} = (1,\dots,1) \in \Z^r$ be the multidegree 
   corresponding to multilinear polynomials in $\K[\bm{x}]$.
   Let $B = \cap_{i = 1}^r \langle x_{i,0},\dots,x_{i,n_i} \rangle$ be
   the ideal generated by all the polynomials in
   $\K[\bm{x}]_{\bm{\bar{1}}}$.

   Consider multihomogeneous polynomials
   $f_1,\dots,f_k \in \K[\bm{x}]$ and denote their multidegrees by
   $\deg(f_1),\dots,\deg(f_k) \in \Nz^r$.
   Let $V_\P(f_1,\dots,f_k)$ be the zero set of $f_1,\dots,f_k$ over
   $\P$. If the dimension of $V_\P(f_1,\dots,f_k)$ over $\P$ is
   $N - k$, then the polynomials $f_1,\dots,f_k$ form a regular
   sequence at each point of $\Pr^1 \times \cdots \times \Pr^r$.
   That is, for each prime ideal $\mathfrak{p}$,  such that $\mathfrak{p} \not\subset B$,
   $(f_1,\dots,f_k)$ form a regular sequence over
   $\K[\bm{x}]_{\mathfrak{p}}$, the localization of $\K[\bm{x}]$ at
   $\mathfrak{p}$.
   In this case, we say that $(f_1,\dots,f_k)$ is a \emph{regular sequence
   outside $B$}.
   This kind of sequence is related to the filter regular
   sequence~\cite[Sec.~2]{trung1998castelnuovo} and the sequence of
   ``almost'' nonzero divisors~\cite[Sec.~3]{maclagan_multigraded_2004},
   \cite[Sec.~2]{sidman2006multigraded}.
 
   Let $\mathcal{K}_\bullet(f_1,\dots,f_k \, ; \, \K[\bm{x}])$ be the
   Koszul complex of $f_1,\dots,f_k$ over $\K[\bm{x}]$. Let
   $H_i(\mathcal{K}_\bullet(f_1,\dots,f_k \, ; \, \K[\bm{x}]))$ be the
   $i$-th Koszul homology module. We also write this homology module as
   $H_i^k$. 

   Let $\bm{x}_h := \prod_{i=1}^r x_{i,0} \in \K[\bm{x}]_{\bm{\bar{1}}}$.
%
   We say that a multihomogeneous system $(f_1,\dots,f_N)$ has
   \emph{no solutions} at infinity if the system
   $(f_1,\dots,f_N, \bm{x}_h)$ has no solutions over $\P$.
   We dehomogenize a multihomogeneous polynomial by replacing each
   variable $x_{i,0}$ with $1$. Let $\K[\bar{\bm{x}}]$ be the
   $\K$-algebra obtained by the dehomogenization of
   $\K[\bm{x}]$. Given $f \in \K[\bm{x}]$, we consider
   $\bar{f} \in \K[\bar{\bm{x}}]$, its dehomogenization.

   \begin{remark}
     There is a (multigraded) isomorphism between the
     multihomogeneous $\K$-algebra $\K[\bm{x}]$ and the polytopal
     algebra $\K[S_{M_1,\dots,M_r}^h]$, where
     $M_i$ are cross products of simplex polytopes.
   \end{remark}
 
   \subsection{Multigraded regularity}
   \label{sec:Cast-Mum-regularity}

   Based on~\citet{maclagan2003uniform,maclagan_multigraded_2004},
   \citet{botbol_castelnuovo_2017} define the multigraded
   Castelnuovo-Mumford regularity over $\K[\bm{x}]$ in terms of the
   vanishing of the \emph{local cohomology} modules with respect to
   $B$.  For an introduction to local cohomology, we refer to~\cite{brodmann_local_2013}.  In the following we present some
   results from~\cite[Chp.~6]{botbol_implicitization_2011}, that we
   need in our setting, see also~\cite{awane2005formes}.
   
   Given a module $\Modu$, $H_B^j(\Modu)$ is the $j$-th local
   cohomology module at $B$ and 
   $\mathtt{sp}(\Modu) := \{\bm{d} \in \Z^r : [\Modu]_{\bm{d}} \neq 0\}$
   is the set of
   multidegrees where the module is not zero.
   In
   \cite{botbol_castelnuovo_2017,botbol_implicitization_2011},
   $\mathtt{sp}(M)$ is called the support of $M$.
   
   Consider $\alpha \subset \{1,\dots,r\}$. We define the $Q_\alpha$
   as the convex region of $\R^r$ given by the vectors
   $(v_1,\dots,v_r) \in \R^r$ so that for every $i \leq r$, 
   \[
   \begin{cases}
     v_i \leq - n_i - 1 & \text{, if } i \in \alpha \\
     v_i \geq 0 & \text{, otherwise} .
   \end{cases}   
   \]
   
   Consider the multiset 
   $\bm{\Sigma}_i^k := \{ \sum_{j \in I} \deg(f_j) : I \subset
   \{1 \dots k\}, \#I \!=\!i\}$ containing the sums of the degrees of $i$
   (different) polynomials from the set $\{f_1,\dots,f_k \}$. 
   Given $v \in \R^r$, the displacement of $Q_\alpha$ by $v$ is 
   $Q_\alpha + v := \{w \in \R^r : w - v \in Q_\alpha\}$.
   Let $N_\alpha := \sum_{i \in \alpha} n_i$.

   \begin{lemma}[{\cite[Lem.~6.4.7]{botbol_implicitization_2011}, ,
       \cite[Prop.~4.2]{awane2005formes}}] \label{thm:regularityR}
     If
      $\mu \not\in \bigcup_{\substack{\alpha \in \{1,\dots,r\} \\ N_\alpha + 1 = l}}
      Q_\alpha$, then $(H_B^l(\K[\bm{x}]))_\mu = 0$.  Equivalently, 
     $$\mathtt{sp}(H_B^l(\K[\bm{x}])) \subset \bigcup_{\substack{\alpha \in \{1,\dots,r\} \\ N_\alpha + 1 = l \\ \alpha \neq \emptyset}} Q_\alpha.$$
   \end{lemma}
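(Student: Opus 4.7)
I would reduce the statement to the sheaf cohomology of line bundles on $\P = \Pr^{n_1} \times \cdots \times \Pr^{n_r}$ and then invoke K\"unneth together with Serre's classical computation on each factor $\Pr^{n_i}$.

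First, using that $B$ is (up to radical) the irrelevant ideal cutting out $\P$ inside $\mathrm{Spec}(\K[\bm{x}])$, the standard multigraded correspondence between local and sheaf cohomology yields, for every $l \geq 2$ and multidegree $\mu$, an isomorphism
\[
H_B^l(\K[\bm{x}])_\mu \;\cong\; H^{l-1}(\P, \mathcal{O}_\P(\mu)).
\]
The corner cases $l \in \{0,1\}$ are handled separately via the four-term exact sequence
\[
0 \to H_B^0(\K[\bm{x}]) \to \K[\bm{x}] \to \bigoplus_\mu H^0(\P, \mathcal{O}_\P(\mu)) \to H_B^1(\K[\bm{x}]) \to 0,
\]
using that $\K[\bm{x}]$ is a domain with $B \neq 0$ (so $H_B^0 = 0$) and that the middle arrow is a multigraded isomorphism (so $H_B^1 = 0$).

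Next, K\"unneth decomposes
\[
H^{l-1}(\P, \mathcal{O}_\P(\mu)) \;\cong\; \bigoplus_{p_1 + \cdots + p_r = l-1}\ \bigotimes_{i=1}^{r} H^{p_i}(\Pr^{n_i}, \mathcal{O}(\mu_i)).
\]
Serre's computation on $\Pr^{n_i}$ makes each tensor factor vanish unless either $p_i = 0$ with $\mu_i \geq 0$, or $p_i = n_i$ with $\mu_i \leq -n_i - 1$. Hence every non-vanishing K\"unneth summand is encoded by the subset $\alpha := \{ i : p_i = n_i \} \subset \{1, \dots, r\}$: the partition constraint $\sum_i p_i = l - 1$ becomes $N_\alpha = l - 1$, and the sign constraints on the $\mu_i$'s place $\mu$ precisely in $Q_\alpha$.

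Combining these two steps, if $\mu$ lies outside $\bigcup_{\alpha : N_\alpha + 1 = l} Q_\alpha$, every summand in the K\"unneth decomposition vanishes at $\mu$, so $(H_B^l(\K[\bm{x}]))_\mu = 0$, and the equivalent support reformulation follows (the case $\alpha = \emptyset$ would force $l = 1$, already handled above, which is why the union can be restricted to $\alpha \neq \emptyset$). The main potential obstacle is setting up the local-cohomology/sheaf-cohomology dictionary cleanly in the multigraded setting; once this is in place, the rest is K\"unneth plus the classical cohomology of $\mathcal{O}(d)$ on projective space.
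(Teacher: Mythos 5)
The paper does not prove this lemma itself; it quotes it from Botbol's thesis and Awane et al., and your argument is precisely the standard proof used there: the multigraded local--sheaf cohomology comparison $H_B^l(\K[\bm{x}])_\mu \cong H^{l-1}(\P,\mathcal{O}_\P(\mu))$ for $l\ge 2$ (plus the four-term sequence for $l\le 1$), followed by K\"unneth and Serre's computation on each $\Pr^{n_i}$, which is equivalent to the decomposition $H_B^l(\K[\bm{x}])\cong\bigoplus_{N_\alpha+1=l}\bigl(\bigotimes_{i\in\alpha}H^{n_i+1}_{\langle\bm{x}_i\rangle}(\K[\bm{x}_i])\bigr)\otimes\bigl(\bigotimes_{i\notin\alpha}\K[\bm{x}_i]\bigr)$ appearing in those references. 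The only caveat is that your treatment of $l=1$ (the claim $H_B^1=0$) implicitly uses $n_i\ge 1$ for all $i$, the standing convention here; with that understood, the proof is correct and essentially identical to the cited one.
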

   
   \begin{proposition}[{\cite[Remark
       6.4.10]{botbol_implicitization_2011},
       \cite[Cor.~4.3]{awane2005formes}}] \label{thm:bound_reg}
     If $(f_1,\dots,f_k)$ form a regular sequence outside $B$, for
     every $i,j$,
     \begin{align} \label{eq:localCohom}
     \mathtt{sp}(H^i_B(H_j^k)) \subset
      \bigcup_{\substack{\alpha \subset \{1,\dots,k\} \\ N_\alpha + 1 + j - i \leq k \\ \alpha \neq \emptyset}}
       \bigcup_{v \in \bm{\Sigma}^k_{N_\alpha+1+j-i}} Q_\alpha + v.
     \end{align}
   \end{proposition}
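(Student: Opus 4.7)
The plan is to proceed by induction on $k$, using the long exact sequences induced by the Koszul complex decomposition to reduce from $k$ polynomials to $k-1$, with \Cref{thm:regularityR} as the base case.

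When $k = 0$, the only nontrivial Koszul homology is $H_0^0 = \K[\bm{x}]$, and $\bm{\Sigma}^0_m$ equals $\{0\}$ for $m = 0$ and is empty otherwise. The constraint $N_\alpha + 1 + j - i \leq k = 0$ combined with $j = 0$ forces $N_\alpha + 1 = i$ and $v = 0$, so the claim reduces to \Cref{thm:regularityR}.

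For the inductive step, the tensor decomposition $\mathcal{K}_\bullet(f_1, \dots, f_k \, ; \, \K[\bm{x}]) \cong \mathcal{K}_\bullet(f_1, \dots, f_{k-1} \, ; \, \K[\bm{x}]) \otimes_{\K[\bm{x}]} \mathcal{K}_\bullet(f_k \, ; \, \K[\bm{x}])$ produces the long exact sequence of Koszul homologies
$$\cdots \to H_j^{k-1}(-\deg(f_k)) \xrightarrow{f_k} H_j^{k-1} \to H_j^k \to H_{j-1}^{k-1}(-\deg(f_k)) \xrightarrow{f_k} H_{j-1}^{k-1} \to \cdots.$$
Extract the short exact sequence
$$0 \to \operatorname{coker}\bigl(f_k\colon H_j^{k-1}(-\deg(f_k)) \to H_j^{k-1}\bigr) \to H_j^k \to \ker\bigl(f_k\colon H_{j-1}^{k-1}(-\deg(f_k)) \to H_{j-1}^{k-1}\bigr) \to 0$$
and apply $H_B^i$. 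The resulting long exact sequence in local cohomology bounds $\mathtt{sp}(H_B^i(H_j^k))$ by the supports of $H_B^\bullet$ of the kernel and cokernel; each of these is further bounded in terms of $H_B^\bullet(H_\bullet^{k-1})$ and $H_B^\bullet(H_\bullet^{k-1}(-\deg(f_k)))$ via two more long exact sequences coming from the kernel-image and image-cokernel short exact sequences that factor the multiplication map $f_k$.

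By the inductive hypothesis, contributions from $H_B^{i'}(H_\bullet^{k-1})$ correspond to subsets $\alpha \subset \{1,\dots,k-1\}$, while those from $H_B^{i'}(H_\bullet^{k-1}(-\deg(f_k)))$ correspond, after shifting by $\deg(f_k)$, to subsets $\alpha \cup \{k\} \subset \{1,\dots,k\}$. These two families assemble into the claimed bound via the combinatorial identity
$$\bm{\Sigma}_m^k \;=\; \bm{\Sigma}_m^{k-1} \;\cup\; \bigl(\bm{\Sigma}_{m-1}^{k-1} + \deg(f_k)\bigr),$$
which partitions size-$m$ subsets of $\{1,\dots,k\}$ by whether they contain $k$.

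The main obstacle will be the careful bookkeeping of index shifts through the chain of long exact sequences: some contributions come from $H_B^{i+1}$ of image and kernel terms, and these higher-$i$ shifts must still respect the bound $N_\alpha + 1 + j - i \leq k$, which admits larger $\alpha$ as $i$ grows beyond $N_\alpha + 1$. Absorbing these shifts requires treating the inductive hypothesis as a simultaneous statement for all relevant pairs $(i', j')$ at level $k-1$ and verifying that the unions of shifted regions $Q_\alpha + v$ combine exactly as the claim asserts.
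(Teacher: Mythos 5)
Your route (induction on $k$ via the long exact sequence of Koszul homologies obtained from tensoring with $\mathcal{K}_\bullet(f_k)$) is genuinely different from the paper's, which proves the bound in one step: regularity outside $B$ gives cohomological dimension $0$ for $H_j^k$ with respect to $B$, so by the \v{C}ech--Koszul spectral sequence (the cited results from Botbol's thesis) $\Supp(H_B^0(H_j^k))$ is contained in the union of the supports of $H_B^i$ of the free terms $\mathcal{K}^k_{i+j}$, and \Cref{thm:regularityR} concludes. An inductive proof is plausible, but your write-up has a genuine gap exactly at the point you defer to ``careful bookkeeping'': bounding $\mathtt{sp}(H_B^i(\ker))$ and $\mathtt{sp}(H_B^i(\operatorname{coker}))$ of multiplication by $f_k$ ``via two more long exact sequences''. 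That purely formal chase does not close. From $0 \to \ker \to H_{j-1}^{k-1}(-\deg f_k) \to \operatorname{im} \to 0$ you only get $\mathtt{sp}(H_B^{t+1}(\ker)) \subset \mathtt{sp}(H_B^{t}(\operatorname{im})) \cup \mathtt{sp}(H_B^{t+1}(H_{j-1}^{k-1}(-\deg f_k)))$ together with $\mathtt{sp}(H_B^{t}(\operatorname{im})) \subset \mathtt{sp}(H_B^{t}(H_{j-1}^{k-1}(-\deg f_k))) \cup \mathtt{sp}(H_B^{t+1}(\ker))$, and the image/cokernel sequence produces the same kind of mutual reference; the estimates keep citing each other and never reduce to the inductive data $H_B^\bullet(H_\bullet^{k-1})$ alone. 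This is not a presentational issue: in general the local cohomology supports of a kernel or cokernel are not controlled by those of source and target (every module is a cokernel of a map of free modules), so some input beyond homological formalism is indispensable.

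The missing input is the hypothesis itself, which your inductive step never invokes. Since $(f_1,\dots,f_k)$ is a regular sequence outside $B$, the modules $H_j^{k-1}$ for $j \geq 1$ are $B$-torsion (\Cref{thm:localCohomOfKoszul}), and $\ker\bigl(f_k \colon H_0^{k-1}(-\deg f_k) \to H_0^{k-1}\bigr)$ is $B$-torsion as well; a $B$-torsion module $T$ satisfies $H_B^{\geq 1}(T)=0$ and, when $T$ is a submodule of $M$, $T \subset H_B^0(M)$. These vanishings are what break the circularity: for instance, in the $j=0$ case $H_B^i(\operatorname{im} f_k)$ becomes a quotient of $H_B^i(H_0^{k-1}(-\deg f_k))$ because $H_B^{i+1}(\ker)=0$, and then $\mathtt{sp}(H_B^i(H_0^{k})) \subset \mathtt{sp}(H_B^i(H_0^{k-1})) \cup \bigl(\mathtt{sp}(H_B^{i+1}(H_0^{k-1})) + \deg(f_k)\bigr)$, after which your identity $\bm{\Sigma}_m^k = \bm{\Sigma}_m^{k-1} \cup (\bm{\Sigma}_{m-1}^{k-1} + \deg(f_k))$ does close the induction. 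Until that torsion argument (or an equivalent use of the hypothesis) is inserted, the proof is incomplete at its crucial step. One further bookkeeping slip: $\alpha$ indexes blocks of variables, i.e.\ subsets of $\{1,\dots,r\}$ (cf.\ \Cref{thm:regularityR}); the twist by $\deg(f_k)$ changes the shift $v$ and the cardinality index of $\bm{\Sigma}$, not $\alpha$, so describing the shifted contributions as ``subsets $\alpha \cup \{k\}$'' is not the correct accounting.
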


  \begin{proof}
    As we assume that $f_1,\dots,f_k$ form a regular sequence outside
    $B$, we have that $H^w_B(H_j^k) = 0$ for all $w > 0$. Hence, the
    \emph{cohomological dimension} of $H_j^k$ with respect to $B$ is
    $0$. Therefore, by \cite[Rmk.~6.2.5 and
    Thm.~6.2.4]{botbol_implicitization_2011},
    $\Supp(H^0_B(H_j^k)) \subset \bigcup\limits_{i \in \Z}
    \Supp(H^i_B(\mathcal{K}^k_{i+j}))$. By definition
    $\mathcal{K}^k_{i+j} = 0$, for $i+j > k$ and
    $\mathcal{K}^k_{i+j} = \bigoplus_{v \in \bm{\Sigma}^k_{i+j}}
    \K[\bm{x}](-v)$, where $\K[\bm{x}](-v)$ is the twist (shift)
    of $\K[\bm{x}]$ by $-v$.
    Hence,
    {\small
      $$\mathtt{sp}(H^0_B(H_j^k)) \subset \bigcup\limits_{i \in \Z}
      \mathtt{sp}(H^i_B(\mathcal{K}^k_{i+j})) =
      \bigcup\limits_{\substack{i \in \Z \\ i+j \leq k}}
      \bigcup\limits_{v \in \bm{\Sigma}^k_{i+j}}
      \mathtt{sp}(H^i_B(\K[\bm{x}](-v)))$$
    }
      By \Cref{thm:regularityR},
    $\mathtt{sp}(H_B^i(\K[\bm{x}](-v))) \subset
    \bigcup\limits_{\substack{\alpha \in \{1,\dots,r\} \\ N_\alpha + 1
        = i \\ \alpha \neq \emptyset}} Q_\alpha + v$.
    The proposition follows by a change of indices.
  \end{proof}

  \begin{proposition} \label{thm:localCohomOfKoszul}
    If $(f_1,\dots,f_k)$ form a regular sequence outside $B$, then
    for $i > 0$, $H_B^i(H^k_j) = 0$ and
    for $j > 0$, it holds $H_B^0(H^k_j) = H^k_j$.
  \end{proposition}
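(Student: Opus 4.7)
The plan is to reduce both assertions to the single statement that $H^k_j$ is a $B$-torsion module for every $j > 0$, and then invoke two general facts about local cohomology.

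First, I would translate the hypothesis via localization. For every prime $\mathfrak{p}$ with $B \not\subset \mathfrak{p}$, the sequence $(f_1,\dots,f_k)$ is regular in $\K[\bm{x}]_{\mathfrak{p}}$ by assumption. Since localization is exact, it commutes with the formation of Koszul complexes and their homology, so
\[
   (H^k_j)_{\mathfrak{p}} \;\cong\; H_j\!\bigl(\mathcal{K}_\bullet(f_1,\dots,f_k\,;\,\K[\bm{x}]_{\mathfrak{p}})\bigr).
\]
For a regular sequence the Koszul complex is acyclic in positive homological degrees, hence $(H^k_j)_{\mathfrak{p}} = 0$ for every $j > 0$ and every prime $\mathfrak{p}$ not containing $B$. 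This gives $\mathrm{Supp}(H^k_j) \subset V(B)$ for $j > 0$.

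Next I would upgrade this support statement to a torsion statement. The module $H^k_j$ is a subquotient of the finite free $\K[\bm{x}]$-module $\mathcal{K}^k_j = \bigoplus_{v \in \bm{\Sigma}^k_j} \K[\bm{x}](-v)$, and $\K[\bm{x}]$ is Noetherian, so $H^k_j$ is finitely generated. A finitely generated module whose support lies in $V(B)$ is annihilated by some power of $B$, hence is $B$-torsion; equivalently, $\Gamma_B(H^k_j) = H^k_j$. By definition of the $0$-th local cohomology, $H^0_B(H^k_j) = \Gamma_B(H^k_j) = H^k_j$, which is the second assertion.

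Finally, for the vanishing of higher local cohomology, I would invoke the standard fact that $H^i_B$ annihilates $B$-torsion modules for $i > 0$ (see, e.g., Brodmann--Sharp on local cohomology): since $\Gamma_B$ is idempotent and the category of $B$-torsion modules is stable under injective envelopes inside the category of $\K[\bm{x}]$-modules, a $B$-torsion module admits an injective resolution by $B$-torsion injectives, on which $\Gamma_B$ acts as the identity, so the derived functors vanish in positive degrees. Applying this to $M = H^k_j$ with $j > 0$ yields $H^i_B(H^k_j) = 0$ for $i > 0$, completing the proof. The only nontrivial ingredient beyond bookkeeping is the general local-cohomology vanishing statement, but it is standard and can simply be cited.
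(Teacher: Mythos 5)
Your proof is correct, but it follows a genuinely different route from the paper's: the paper disposes of the proposition in one line by invoking the spectral sequence of the \v{C}ech--Koszul double complex and citing \cite[Sec.~4]{awane2005formes}, i.e.\ the same machinery that underlies \Cref{thm:bound_reg}. Your argument is more elementary and self-contained: localization at primes not containing $B$ (which is exactly the paper's hypothesis of regularity outside $B$) commutes with Koszul homology and shows $\mathrm{Supp}(H^k_j)\subset V(B)$ for $j>0$; since $H^k_j$ is a subquotient of a finite free module over the Noetherian ring $\K[\bm{x}]$, this upgrades to $B^n\cdot H^k_j=0$, so $H^k_j$ is $B$-torsion; then $H^0_B(H^k_j)=\Gamma_B(H^k_j)=H^k_j$, and $H^i_B(H^k_j)=0$ for $i>0$ by the standard vanishing of higher local cohomology on torsion modules (Brodmann--Sharp, which the paper already cites as \cite{brodmann_local_2013}) --- no spectral sequence needed, and the argument works verbatim in the multigraded category. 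What the paper's citation buys is uniformity with \Cref{thm:bound_reg}, whose proof runs through the same double complex; what your route buys is a short proof whose only black box is a textbook fact. One clarification worth recording: your argument proves both assertions for $j>0$, and that is the correct (and the only used) reading of the statement --- for $j=0$ the first clause fails in general (e.g.\ $k=1$, $f_1=x_{1,0}$ on $\Pr^{1}$ gives $H^1_B(H^1_0)\neq 0$), while \Cref{thm:noRedToZeroMultihom} only needs the case $j\geq 1$.
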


  The proposition follows from considering the spectral sequence of
  the double complex given by the Koszul complex and the \v{C}ech complex
  of $f_1,\dots,f_k$ over $B$, when $f_1,\dots,f_k$ is a regular
  sequence outside $B$, \cite[Sec.~4]{awane2005formes}.
  
  \begin{corollary}[Multihomogeneous Macaulay bound] \label{thm:multihomMacBound}
    Let $f_1,\dots,f_{N+1}$ be regular sequence outside $B$ and 
    $\bm{D_k} := \left( \sum_{i = 1}^k \deg(f_i) \right) - \bm{n}$.
    If $\bm{d} \geq {\bm{D_k}}$, then $\forall i,j, k$,
    $[H_B^j(H^k_i)]_{\bm{d}} = 0$.
  \end{corollary}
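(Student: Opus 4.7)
The plan is to invoke \Cref{thm:bound_reg} to bound the support of $H_B^j(H^k_i)$ by a finite union of displaced cones $Q_\alpha + v$, and then verify by a coordinate-wise check that any $\bm{d} \geq \bm{D_k}$ lies outside every such cone.

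Concretely, I would first apply \Cref{thm:bound_reg} (after matching the roles of the upper/lower indices to the corollary's convention) to obtain
\[
\Supp(H_B^j(H^k_i)) \subset \bigcup_{\alpha, v} (Q_\alpha + v),
\]
where $\alpha$ ranges over nonempty subsets with $N_\alpha + 1 + i - j \leq k$, and $v \in \bm{\Sigma}^k_{N_\alpha + 1 + i - j}$ is a partial sum $v = \sum_{l \in I} \deg(f_l)$ for some $I \subset \{1,\dots,k\}$ of the prescribed cardinality. If this union is vacuous (no admissible $\alpha$, or $\bm{\Sigma}^k_{N_\alpha+1+i-j}$ is empty because the size is negative or exceeds $k$), then $H_B^j(H^k_i) = 0$ outright and there is nothing to prove.

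Otherwise, I would fix an admissible pair $(\alpha, v)$, pick any coordinate $t \in \alpha$, and compare two inequalities. Membership $\bm{d} \in Q_\alpha + v$ would force $d_t \leq v_t - n_t - 1$, whereas the hypothesis $\bm{d} \geq \bm{D_k}$ yields $d_t \geq \sum_{l=1}^k \deg(f_l)_t - n_t$. Since $I \subset \{1,\dots,k\}$ and each $\deg(f_l)_t \geq 0$, we have $v_t \leq \sum_{l=1}^k \deg(f_l)_t$, so the latter bound dominates $v_t - n_t$ and contradicts the former. Hence $\bm{d} \notin Q_\alpha + v$ for every admissible pair, giving $[H_B^j(H^k_i)]_{\bm{d}} = 0$.

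The only subtlety I anticipate is the index bookkeeping: \Cref{thm:bound_reg} is stated as $H^i_B(H_j^k)$ while the corollary writes $H_B^j(H^k_i)$, so one has to swap $i \leftrightarrow j$ carefully when transcribing the support bound, and to confirm that the vacuous cases genuinely fall under \Cref{thm:bound_reg}'s convention. Beyond that, the coordinate-wise comparison is a direct consequence of the non-negativity of the multidegrees of the $f_l$ together with $\alpha \neq \emptyset$, and I do not foresee any substantive obstacle.
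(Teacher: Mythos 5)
Your proposal is correct and follows essentially the same route as the paper: apply \Cref{thm:bound_reg} to bound $\Supp(H_B^j(H^k_i))$ by the union of displaced cones $Q_\alpha + v$, then for any $t \in \alpha$ compare the constraint $d_t \leq v_t - n_t - 1$ with $d_t \geq (\bm{D_k})_t$ and $v_t \leq \sum_{l=1}^k \deg(f_l)_t$ to exclude $\bm{d}$ from every such cone. The index swap you flag is exactly the bookkeeping the paper performs, so there is no substantive difference.
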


  \begin{proof}
    We use \Cref{thm:bound_reg}. Fix $i$ and $j$ in
    \Cref{eq:localCohom}, and consider
    $\alpha \subset \{1, \dots, k\}$ such that $N_\alpha+1+j-i \leq k$,
    $\#\alpha \neq \emptyset$, and
    $v \in \bm{\Sigma}^k_{N_\alpha+1+j-i}$. If $t \in \alpha$, then the
    $t$-th coordinate of any element in $Q_\alpha + v$ has to be
    $\leq -n_t - 1 + v_t$, where $v_t$ is the $t$-th coordinate of
    $v$. As all the multidegrees $\deg(f_1),\dots,\deg(f_k)$ are
    non-negative, $v_t \leq \sum_{i=1}^k deg(f_i)_t$. So,
    $-n_t - 1 + v_t < -n_t + \sum_{i=1}^k deg(f_i)_t = ({\bm{D_k}})_t
    \leq \bm{d}_t$. Hence, $\bm{d} \not\in Q_\alpha + v$. By
    \Cref{thm:bound_reg}, $[H_B^0(H^k_i)]_{\bm{d}} = 0$.
  \end{proof}

  The bound ${\bm{D_k}}$ is not tight, e.g. see \cite[Sec.~4.4]{awane2005formes}.

  Like with homogeneous polynomials, we define the
  multigraded Hilbert function, $HF$,  of a $\K$-module $\Modu$ as the function
  that maps the multidegrees $\bm{d} \in \Z^r$ to
  $HF(\Modu,\bm{d}) = \dim_{\K}([\Modu]_{\bm{d}})$. When $\bm{d}$ is,
  component-wise, big enough, then $HF(\Modu,\bm{d})$ equals a
  polynomial $P_\Modu \in \Q[y_1,\dots,y_r]$ evaluated at $\bm{d}$
  \cite[Prop.~2.8]{maclagan2003uniform}; the
  Hilbert polynomial.
  If all the local cohomologies of $\Modu$ at a multidegree $\bm{d}$
  vanish,
  that is
  for all $i$, $[H_B^i(\Modu)]_{\bm{d}} = 0$,
  then, for this $\bm{d}$, the  Hilbert
  function and polynomial agree, $HF(\Modu,\bm{d}) = P_\Modu(\bm{d})$
  \cite[Prop.~2.14]{maclagan2003uniform}.
  
  \begin{corollary} \label{thm:dimensionMinusRank}
    Let $\bm{{d}} \geq D_K$, component-wise. If $k = N$, then the
    dimension of
    $[ \K[\bm{x}] / \langle f_1,\dots,f_N \rangle]_{\bm{{d}}}$ is
    the number of solutions, counting multiplicities, of the system 
    $(f_1,\dots,f_N)$ over $\P$.
    When $k = N + 1$,
    $\K[\bm{x}]_{\bm{{d}}} = [\langle f_1,\dots,f_{N+1}
    \rangle]_{\bm{{d}}}$.
  \end{corollary}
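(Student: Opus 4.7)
My plan is to reduce both parts to the vanishing bound \Cref{thm:multihomMacBound} combined with the standard fact~\cite[Prop.~2.14]{maclagan2003uniform} that the multigraded Hilbert function of a module coincides with its Hilbert polynomial at any multidegree $\bm{d}$ where all of its local cohomology modules vanish. Concretely, applied with $i = 0$ to $H_0^k = \K[\bm{x}] / \langle f_1, \dots, f_k \rangle$, \Cref{thm:multihomMacBound} gives $[H_B^j(H_0^k)]_{\bm{d}} = 0$ for every $j$ as soon as $\bm{d} \geq \bm{D}_k$, and therefore $\dim_\K [H_0^k]_{\bm{d}} = P_{H_0^k}(\bm{d})$. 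Everything then reduces to identifying the Hilbert polynomial $P_{H_0^k}$ in the two regimes.

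For $k = N$, the hypothesis that $(f_1, \dots, f_N)$ is a regular sequence outside $B$, together with $N = \dim \P$, forces $V_\P(f_1, \dots, f_N)$ to be a 0-dimensional subscheme of $\P$; the multigraded Hilbert polynomial of such a subscheme is a constant, equal to the $\K$-dimension of its coordinate ring, i.e., the number of solutions counted with multiplicity. For $k = N + 1$ I would run the same argument with $H_0^{N+1}$: since $(f_1, \dots, f_{N+1})$ is a regular sequence outside $B$ but $N + 1 > \dim \P$, a dimension count forces $V_\P(f_1, \dots, f_{N+1})$ to be empty, so $P_{H_0^{N+1}} \equiv 0$; combined with $HF(H_0^{N+1}, \bm{d}) = P_{H_0^{N+1}}(\bm{d})$ for $\bm{d} \geq \bm{D}_{N+1}$ this gives $[H_0^{N+1}]_{\bm{d}} = 0$, equivalently $\K[\bm{x}]_{\bm{d}} = [\langle f_1, \dots, f_{N+1} \rangle]_{\bm{d}}$.

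The one place I expect a bit of care is the identification of the multigraded Hilbert polynomial of a 0-dimensional subscheme of $\Pr^{n_1} \times \cdots \times \Pr^{n_r}$ with its degree. In the single-graded projective case this is textbook, but in the multigraded setting I would either quote it from~\cite{maclagan2003uniform,botbol_castelnuovo_2017} or argue directly by noting that on a 0-dimensional subscheme $V$ every line bundle $\mathcal{O}_\P(\bm{d})\big|_V$ is trivial, so $H^0(V, \mathcal{O}_V(\bm{d})) \cong H^0(V, \mathcal{O}_V)$ is independent of $\bm{d}$ and the Hilbert polynomial equals the constant $\dim_\K H^0(V, \mathcal{O}_V) = \sum_{p \in V} \dim_\K \mathcal{O}_{V, p}$, which is the number of solutions counted with multiplicity.
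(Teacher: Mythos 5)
Your proposal is correct and follows essentially the same route the paper intends (it gives no separate proof, relying on the discussion immediately preceding the corollary): apply \Cref{thm:multihomMacBound} with $i=0$ to get vanishing of all local cohomology of $\K[\bm{x}]/\langle f_1,\dots,f_k\rangle$ in degrees $\bm{d}\geq\bm{D_k}$, invoke \cite[Prop.~2.14]{maclagan2003uniform} to equate Hilbert function and Hilbert polynomial there, and identify that polynomial as the constant number of solutions with multiplicity when $k=N$ and as zero when $k=N+1$. Your extra care about identifying the multigraded Hilbert polynomial of a $0$-dimensional subscheme with its degree is a reasonable and correct supplement to what the paper leaves implicit.
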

  
  \subsection{Computing graded parts of the ideals}
  \label{sec:0-dimensional-square}
  Let $(f_1,\dots,f_{k})$ be multihomogeneous system over $\P$.
  \Cref{alg:matrixMultihom} computes a set of generators of the vector
  space $[\langle f_1,\dots,f_k \rangle]_{\bm{d}}$.  Moreover, if
  $(f_1,\dots,f_k)$ form a regular sequence outside $B$, and
  ${\bm{d}} \geq {\bm{D_k}}$, then it performs no reduction to zero.

  \begin{algorithm}[]
     \caption{$\texttt{M}_3\texttt{H}(\{f_1,\dots,f_{k}\}, \bm{d}, <)$}
     \begin{algorithmic}
     \label{alg:matrixMultihom}
     \REQUIRE $f_1,\dots,f_k \in \K[\bm{x}]$, degree
       $\bm{d}$ and $<$ a monomial order
       %

     \STATE $\mathfrak{L}$ $\leftarrow$ $\emptyset$.

     \IF{$k = 1$}
         \STATE $\M^k_{\bm{d}} \leftarrow$ \parbox[t]{300px}{Macaulay matrix
           with columns indexed by the monomials in $\K[\bm{x}]_{\bm{d}}$
           in decreasing order wrt~$<$}
         
     \ELSE

     \STATE $\M^k_{\bm{d}} \leftarrow$
     \parbox[t]{175px}{$\texttt{M}_3\texttt{H}(\{f_1,\dots,f_{k-1}\}, \bm{d}, <)$}

     \STATE
     $\mathfrak{L}$ $\leftarrow$ \parbox[t]{300px}{Leading
       monomials of the Gaussian elimination of
       $\texttt{M}_3\texttt{H}(\{f_1,\dots,f_{k-1}\}, \bm{d}-\deg(f_k), <)$}

     \ENDIF

     \FOR{$\bm{x}^\beta \in \K[\bm{x}]_{\bm{d}-\deg(f_k)}$}
        \IF{$\bm{x}^\beta \not\in \mathfrak{L}$}
           \STATE Add to $\M^k_{\bm{d}}$ the polynomial $\bm{x}^\beta \cdot f_k$
        \ENDIF
     \ENDFOR
        
     \RETURN $\M^k_{\bm{d}}$
     \end{algorithmic}
   \end{algorithm}

  \begin{theorem} \label{thm:algMultiHomComputesGenerators}
    Let $(f_1,\dots,f_{k})$ be a multihomogeneous
    system. \Cref{alg:matrixMultihom} computes a matrix such that the
    polynomials in its rows form a set of generators of the vector
    space $[\langle f_1,\dots,f_k \rangle]_{\bm{d}}$,
    $\forall {\bm{d}} \in \Z^r$.
  \end{theorem}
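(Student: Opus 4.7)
The plan is to prove the statement by induction on $k$, with $\bm{d} \in \Z^r$ arbitrary.

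For the base case $k=1$, every homogeneous element of $[\langle f_1 \rangle]_{\bm{d}}$ has the form $g \cdot f_1$ with $g \in \K[\bm{x}]_{\bm{d}-\deg(f_1)}$. Since $g$ expands as a $\K$-linear combination of monomials $\bm{x}^\beta$ of multidegree $\bm{d}-\deg(f_1)$, and the algorithm adds precisely the rows $\bm{x}^\beta \cdot f_1$ for all such $\bm{x}^\beta$ (the condition $\bm{x}^\beta \not\in \mathfrak{L} = \emptyset$ is vacuous), the rows of $\M^1_{\bm{d}}$ span the required vector space.

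For the inductive step, assume the statement holds for $k-1$. Any element $F \in [\langle f_1,\dots,f_k \rangle]_{\bm{d}}$ decomposes as $F = h + g \cdot f_k$ with $h \in [\langle f_1,\dots,f_{k-1} \rangle]_{\bm{d}}$ and $g \in \K[\bm{x}]_{\bm{d}-\deg(f_k)}$. By the inductive hypothesis applied at multidegree $\bm{d}$, the polynomial $h$ is a $\K$-linear combination of rows of the sub-matrix returned by the recursive call, which are also rows of $\M^k_{\bm{d}}$. It remains to express $g \cdot f_k$ in terms of the rows $\bm{x}^\beta \cdot f_k$ appended with $\bm{x}^\beta \not\in \mathfrak{L}$.

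The key step is the elimination of the monomials of $g$ that lie in $\mathfrak{L}$. Applying the inductive hypothesis at multidegree $\bm{d}-\deg(f_k)$, the matrix $\texttt{M}_3\texttt{H}(\{f_1,\dots,f_{k-1}\},\bm{d}-\deg(f_k),<)$ generates $[\langle f_1,\dots,f_{k-1} \rangle]_{\bm{d}-\deg(f_k)}$. After Gaussian elimination, for every $\bm{x}^\beta \in \mathfrak{L}$ there exists a polynomial in this ideal with leading monomial $\bm{x}^\beta$, that is, we may write $\bm{x}^\beta = e_\beta + r_\beta$ where $e_\beta \in [\langle f_1,\dots,f_{k-1} \rangle]_{\bm{d}-\deg(f_k)}$ and $r_\beta$ is a linear combination of monomials strictly smaller than $\bm{x}^\beta$ under $<$. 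Multiplying by $f_k$ yields $\bm{x}^\beta f_k = e_\beta f_k + r_\beta f_k$, where $e_\beta f_k$ belongs to $[\langle f_1,\dots,f_{k-1} \rangle]_{\bm{d}}$ and is therefore representable via the rows coming from the recursive call at $\bm{d}$. By well-foundedness of $<$ restricted to the finite set of monomials in $\K[\bm{x}]_{\bm{d}-\deg(f_k)}$, we iterate this rewriting on the terms of $r_\beta$ in $\mathfrak{L}$ until only monomials outside $\mathfrak{L}$ remain; these contribute exactly the appended rows $\bm{x}^{\beta'}\cdot f_k$ with $\bm{x}^{\beta'}\notin\mathfrak{L}$.

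The main obstacle is ensuring that this iterative rewriting terminates and does not produce monomials outside the expected multidegree; the former is handled by the strict decrease under the monomial order on a finite set, while the latter follows from the fact that all manipulations take place inside the fixed graded component $[\K[\bm{x}]]_{\bm{d}-\deg(f_k)}$ (respectively $[\K[\bm{x}]]_{\bm{d}}$ after multiplication by $f_k$). Combining the two parts of the decomposition shows that the rows of $\M^k_{\bm{d}}$ generate $[\langle f_1,\dots,f_k\rangle]_{\bm{d}}$, completing the induction. Note that no assumption of regularity is needed for this generation statement; regularity enters only later to guarantee the absence of reductions to zero.
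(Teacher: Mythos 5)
Your proof is correct and is essentially the argument the paper has in mind: the paper omits the proof, pointing to \Cref{thm:givenGBformMacMatrix,thm:sparse-f5}, and your double induction (on $k$, and on the order $<$ within the finite graded piece $\K[\bm{x}]_{\bm{d}-\deg(f_k)}$) is exactly the unfolding of that redundancy argument --- each skipped row $\bm{x}^\beta\cdot f_k$ with $\bm{x}^\beta\in\mathfrak{L}$ is rewritten using an ideal element with leading monomial $\bm{x}^\beta$ supplied by the recursive call at degree $\bm{d}-\deg(f_k)$, plus strictly smaller terms. The only point left implicit (and it is trivial) is that every row of $\M^k_{\bm{d}}$ is itself an element of $[\langle f_1,\dots,f_k\rangle]_{\bm{d}}$, so the row span is contained in, and not merely contains, this graded component.
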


  We omit the proof as it is similar to
  \Cref{thm:givenGBformMacMatrix,thm:sparse-f5}.

  \begin{remark} \label{thm:vanishingOfH1}
    Following the definition of the Koszul complex, \linebreak
    $[H_i^k]_{\bm{d}}=0$ implies that, given any syzygy 
    $\sum_i g_i \cdot f_i = 0$ such that
    $deg(g_i f_i) = \bm{d}$, then $\forall j$,
    $g_j \in [\langle f_1,\dots,f_{j-1},f_{j+1},\dots,f_k
    \rangle]_{\bm{d} - \deg(f_j)}.$
  \end{remark}
  
  \begin{lemma} \label{thm:h1impliesLastFullRank}
    If $[H_1^k]_{\bm{d}} = 0$, then every polynomial $\bm{x}^\beta\!\cdot\!f_k$
    in $\M^k_{\bm{d}}$ is linear independent to the (polynomials corresponding to) other rows.
  \end{lemma}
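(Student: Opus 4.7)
The plan is to suppose for contradiction that a nontrivial linear combination of the rows $\bm{x}^\beta\cdot f_k$ lies in the span of the remaining rows of $\M^k_{\bm{d}}$, and then use the vanishing of $[H_1^k]_{\bm{d}}$ together with the construction of $\mathfrak{L}$ to reach a contradiction.

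First I would unravel the structure of $\M^k_{\bm{d}}$. By the recursive call and \Cref{thm:algMultiHomComputesGenerators}, the rows inherited from $\texttt{M}_3\texttt{H}(\{f_1,\dots,f_{k-1}\}, \bm{d}, <)$ span the vector space $V_{\bm{d}} := [\langle f_1,\dots,f_{k-1}\rangle]_{\bm{d}}$. The additional rows are the polynomials $\bm{x}^\beta\cdot f_k$ with $\bm{x}^\beta\in\K[\bm{x}]_{\bm{d}-\deg(f_k)}\setminus\mathfrak{L}$, where $\mathfrak{L}$ is the set of leading monomials of a reduced row echelon basis of $V_{\bm{d}-\deg(f_k)} := [\langle f_1,\dots,f_{k-1}\rangle]_{\bm{d}-\deg(f_k)}$.

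The heart of the argument is to show that if $g := \sum_\beta c_\beta\bm{x}^\beta$ satisfies $g\cdot f_k \in V_{\bm{d}}$, then $g = 0$. Since $\langle f_1,\dots,f_{k-1}\rangle$ is multigraded and $g\cdot f_k$ is multihomogeneous of multidegree $\bm{d}$, one may write $g\cdot f_k = \sum_{i=1}^{k-1} g_i\cdot f_i$ with each $g_i$ multihomogeneous of multidegree $\bm{d}-\deg(f_i)$. Setting $g_k := -g$ yields a Koszul syzygy at multidegree $\bm{d}$; invoking \Cref{thm:vanishingOfH1} with $j=k$ and using the hypothesis $[H_1^k]_{\bm{d}}=0$ gives $g\in V_{\bm{d}-\deg(f_k)}$.

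To conclude, I would appeal to the definition of $\mathfrak{L}$. Since reduced Gaussian elimination produces a basis of $V_{\bm{d}-\deg(f_k)}$ whose pivots are exactly the elements of $\mathfrak{L}$ and whose non-pivot monomials lie outside $\mathfrak{L}$, the complementary monomials $\K[\bm{x}]_{\bm{d}-\deg(f_k)}\setminus\mathfrak{L}$ project to a $\K$-basis of $\K[\bm{x}]_{\bm{d}-\deg(f_k)}/V_{\bm{d}-\deg(f_k)}$. As $g$ is supported only on these complementary monomials and simultaneously lies in $V_{\bm{d}-\deg(f_k)}$, every $c_\beta$ must vanish, and hence $g=0$. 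The main subtlety is precisely this last step: one must verify that ``leading monomials'' refers to those of the \emph{reduced} echelon form, for otherwise the complementary monomials need not span the quotient, and the reduction from $g\in V_{\bm{d}-\deg(f_k)}$ to $g=0$ would fail.
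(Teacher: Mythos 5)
Your proof is correct and follows essentially the same route as the paper: a linear dependence among the rows yields a Koszul syzygy in multidegree $\bm{d}$, the hypothesis $[H_1^k]_{\bm{d}}=0$ together with \Cref{thm:vanishingOfH1} places the cofactor of $f_k$ in $[\langle f_1,\dots,f_{k-1}\rangle]_{\bm{d}-\deg(f_k)}$, and the construction of $\mathfrak{L}$ then forces that cofactor to vanish. The only minor remark is that your final ``subtlety'' is not really one: the paper concludes by observing that the leading monomial of any nonzero element of $[\langle f_1,\dots,f_{k-1}\rangle]_{\bm{d}-\deg(f_k)}$ lies in $\mathfrak{L}$, which holds for any row echelon form (reduced or not), so the complementary monomials are independent modulo that space in either case.
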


  \begin{proof}
    If there is a polynomial of the form $\bm{x}^\beta \cdot f_k$ in
    $\M^{k}_{\bm{d}}$ that is linearly dependent with the other rows
    of the matrix, then there is a syzygy of the system
    $(f_1,\dots,f_k)$ involving $f_k$. That is, there are
    multihomogeneous polynomials $g_1,\dots,g_k$ so that
    $\sum_i g_i \, f_i = 0$, for every $\bm{x}^\sigma$ in the support
    of $g_i$ it holds $\bm{x}^\sigma \cdot f_i \in \rows(\M^{k}_{\bm{d}})$,
    and $\bm{x}^\beta$ belongs to the support of $g_k$. As $H_1^k$
    vanishes at degree ${\bm{d}}$, by \Cref{thm:vanishingOfH1},
    $g_k \in [\langle f_1,\dots,f_k
    \rangle]_{{\bm{d}}-\deg(f_k)}$. But, by construction,
    $LM(g_k) \cdot f_k$ does not belong to
    $\rows(\M^{k}_{\bm{d}})$. Hence, this syzygy can not be formed
    with the rows of $\M^k_{\bm{d}}$.
  \end{proof}
  
  \begin{lemma}
     If $[H_1^s]_d = 0$, for all $s \leq k$, then $M^k_d$ is full-rank.
  \end{lemma}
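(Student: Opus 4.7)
The plan is to prove this by induction on $k$, with \Cref{thm:h1impliesLastFullRank} doing the heavy lifting at each inductive step. The recursive structure of $\texttt{M}_3\texttt{H}$ naturally pairs with an induction on $k$: by construction, the rows of $\M^k_{\bm{d}}$ split into the rows of $\M^{k-1}_{\bm{d}}$ (built recursively) together with a family of additional rows of the form $\bm{x}^\beta \cdot f_k$.

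For the base case $k = 1$, the matrix $\M^1_{\bm{d}}$ contains only rows of the form $\bm{x}^\beta \cdot f_1$ (the set $\mathfrak{L}$ is empty in this branch). The hypothesis $[H_1^1]_{\bm{d}} = 0$ activates \Cref{thm:h1impliesLastFullRank} for $k=1$, which asserts that each such row is linearly independent from the remaining rows. Applying this to every row in turn gives joint linear independence of the family, so $\M^1_{\bm{d}}$ is full-rank.

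For the inductive step, assume the statement holds up to $k-1$: since the hypothesis $[H_1^s]_{\bm{d}} = 0$ for all $s \leq k$ certainly implies it for all $s \leq k-1$, the induction hypothesis gives that $\M^{k-1}_{\bm{d}}$ is full-rank. Suppose there is a linear dependence among the rows of $\M^k_{\bm{d}}$, written
\[
  \sum_i a_i \, r_i + \sum_j b_j \, \bm{x}^{\beta_j} \cdot f_k = 0,
\]
where the $r_i$ are the rows inherited from $\M^{k-1}_{\bm{d}}$. If some $b_j$ were nonzero, the corresponding row $\bm{x}^{\beta_j} \cdot f_k$ would lie in the span of the remaining rows of $\M^k_{\bm{d}}$, directly contradicting \Cref{thm:h1impliesLastFullRank} applied at level $k$ (whose hypothesis $[H_1^k]_{\bm{d}} = 0$ is given). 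Hence all $b_j = 0$, leaving $\sum_i a_i \, r_i = 0$, which forces all $a_i = 0$ by the induction hypothesis. Therefore $\M^k_{\bm{d}}$ is full-rank.

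The only delicate point is translating the "pointwise" statement of \Cref{thm:h1impliesLastFullRank} (each new row is independent of the others) into joint linear independence of the whole matrix. This is not really an obstacle: the standard trick of isolating a nonzero coefficient in a hypothetical dependence and inverting it into an expression for a single row reduces joint dependence to the pointwise statement, cleanly combining with the inductive hypothesis on $\M^{k-1}_{\bm{d}}$.
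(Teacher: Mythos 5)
Your proof is correct and follows essentially the same route as the paper: induction on $k$, with the vanishing of $[H_1^k]_{\bm{d}}$ ruling out any dependence involving the rows $\bm{x}^\beta \cdot f_k$ and the inductive hypothesis handling the rows inherited from $\M^{k-1}_{\bm{d}}$. The only cosmetic difference is that you invoke \Cref{thm:h1impliesLastFullRank} as a black box (also for the base case $k=1$, which the paper instead dismisses as trivial since $\langle f_1\rangle$ is principal), whereas the paper re-runs the underlying syzygy argument inline; the content is the same.
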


   \begin{proof}
     We proceed by induction on $k$. The case $k = 1$ is trivial, as
     $\langle f_1 \rangle$ is a principal ideal. If $M^{k}_d$ is not
     full-rank, we have a syzygy involving $f_k$, because $M^{k-1}_d$
     is full-rank by inductive hypothesis. Hence, there are
     multihomogeneous polynomials $g_1,\dots,g_k$ such that
     $\sum_i g_i \, f_i = 0$ and we can form each $g_i$ with the rows
     of $M^{k}_d$. As $H_1^k$ vanishes at degree $d$, then
     $g_k \in [\langle f_1,\dots,f_k \rangle]_{d-\deg(f_k)}$. Hence,
     the $LM(g_k) \cdots f_k$ does not belong to the $Rows(M^{k}_d)$,
     so we can not have this syzygy.
   \end{proof}
  
  \begin{corollary} \label{thm:noRedToZeroMultihom}
    If $(f_1,\dots,f_k)$ is a regular sequence outside $B$, then for
    ${\bm{d}} \geq {\bm{D_k}}$, all the matrices appearing in
    \Cref{alg:matrixMultihom} are full-rank.
  \end{corollary}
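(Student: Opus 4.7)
The plan is to reduce to the preceding lemma, which guarantees that $\M^s_{\bm{d}'}$ is full-rank whenever $[H_1^{s'}]_{\bm{d}'} = 0$ for every $s' \leq s$. The whole proof then becomes a matter of establishing this Koszul-homology vanishing at every matrix constructed during the recursion.

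First I would unroll the recursion of \Cref{alg:matrixMultihom} to pin down exactly which matrices are built. A straightforward induction on the call stack shows that each matrix produced has the form $\M^s_{\bm{d}'}$ with $s \leq k$ and $\bm{d}' = \bm{d} - \sum_{i \in I} \deg(f_i)$ for some subset $I \subseteq \{s+1,\dots,k\}$, because each recursive call either decreases the top index by one or first subtracts $\deg(f_k)$ from $\bm{d}$ (the call used to compute the set $\mathfrak{L}$ of leading monomials).

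Next I would verify the degree bound $\bm{d}' \geq \bm{D_s}$. Using $\bm{d} \geq \bm{D_k}$ together with $I \subseteq \{s+1,\dots,k\}$, a direct componentwise estimate gives
\[
\bm{d}' \;=\; \bm{d} - \sum_{i\in I}\deg(f_i) \;\geq\; \bm{D_k} - \sum_{i=s+1}^{k}\deg(f_i) \;=\; \sum_{i=1}^{s}\deg(f_i) - \bm{n} \;=\; \bm{D_s}.
\]
Since any prefix of a regular sequence outside $B$ is itself regular outside $B$, \Cref{thm:localCohomOfKoszul} applies to $(f_1,\dots,f_s)$ and collapses $H^s_1$ to $H^0_B(H^s_1)$; combined with \Cref{thm:multihomMacBound} applied to this prefix at the bound $\bm{D_s}$, this yields $[H^s_1]_{\bm{d}'} = 0$. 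Running the same argument with $s$ replaced by any $s' \leq s$, and observing that $\bm{d}' \geq \bm{D_s} \geq \bm{D_{s'}}$, produces $[H^{s'}_1]_{\bm{d}'} = 0$ for all $s' \leq s$, after which the preceding lemma closes the proof.

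I do not anticipate a serious obstacle: the heavy lifting (the regularity assumption and the multihomogeneous Macaulay bound) is already packaged in \Cref{thm:multihomMacBound,thm:localCohomOfKoszul}, so the only delicate point is the combinatorial bookkeeping that tracks which pairs $(s,\bm{d}')$ actually occur in the recursion and checks, on both branches of each recursive call, that the bound $\bm{d}' \geq \bm{D_s}$ is preserved.
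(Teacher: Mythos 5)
Your proof is correct and takes essentially the same route as the paper: the same ingredients (the multihomogeneous Macaulay bound \Cref{thm:multihomMacBound}, the vanishing of positive local cohomology of the Koszul homology from \Cref{thm:localCohomOfKoszul} for sequences regular outside $B$, and the fact that $[H_1]_{\bm{d}'}=0$ forces the rows added at each step to be independent) are combined in the same way. The only difference is organizational: the paper runs an induction on $k$, disposing of the two recursive calls via $\bm{d} \geq \bm{D_{k-1}}$ and $\bm{d}-\deg(f_k) \geq \bm{D_{k-1}}$, while you unroll the recursion tree explicitly and apply the preceding fixed-degree lemma at every node; your subset bookkeeping $\bm{d}' = \bm{d} - \sum_{i\in I}\deg(f_i) \geq \bm{D_s}$ is exactly what that induction automates.
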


  \begin{proof}
    We proceed by induction on $k$. When $k=1$, the ideal is principal
    and so the theorem holds.
    In step $k$, note that ${\bm{d}} \geq {\bm{D_k}}$ implies
    ${\bm{d}} \geq {\bm{d}} - deg(f_k) \geq {\bm{D_k}} - \deg(f_k) =
    \bm{D_{k-1}}$. Hence, we have no reduction to zero in the
    recursive calls.
    As ${\bm{d}} \geq {\bm{D_k}}$, by \Cref{thm:localCohomOfKoszul},
    $H_B^0(H_i^k) = H_i^k$, and by \Cref{thm:multihomMacBound},
    $[H_i^k]_{\bm{d}} = 0$. Hence, by
    \Cref{thm:h1impliesLastFullRank}, $\M^k_{\bm{d}}$ has not
    reduction to zero involving $\bm{x}^\beta \cdot f_k$. As, by
    induction, $\texttt{M}_3\texttt{H}(\{f_1,\dots,f_{k-1}\}, {\bm{d}}, <)$
    is full-rank, $\M^k_{\bm{d}}$ is full-rank.
  \end{proof}

  \subsection{Solving zero-dimensional systems}
  Our solving strategy is to dehomogenize the system and to compute
  the multiplication maps for the affine variables.  Then we can apply
  FGLM to compute a Gr\"{o}bner basis or  compute the
  eigenvalues/eigenvectors of the multiplication maps.

  Let $(f_1,\dots,f_N)$ be a $0$-dimensional system over $\P$ with no
  solutions at infinity.
  If we do not know if the system has no solutions at infinity,
  we can ensure it by
  performing a generic linear change of coordinates preserving the
  multihomogeneous structure, e.g. see \cite[Pg.~121]{cox2006using}.
  We use \Cref{alg:matrixMultihom} to construct a monomial basis and
  the multiplication maps over
  $\K[\bar{\bm{x}}] / \langle \bar{f_1},\dots,\bar{f_N} \rangle$.
  Following \Cref{alg:matrixMultihom}, let
  $\mathfrak{L}$ be the set of leading monomials of the polynomials
  in $[\langle f_1,\dots,f_N \rangle]_{{\bm{D_N}}}$, with respect to $<$.
  Let $\mathfrak{b}$ be a list of monomials in
  $\K[\bm{x}]_{{\bm{D_k}}}$ not in $\mathfrak{L}$, sorted by $<$.
  Consider
  $\bm{D_{N+1}} := \bm{{\bm{D_N}}} + \bm{\bar{1}}$.

  \begin{definition}
    For a multilinear polynomial $f_0 \in \K[\bm{x}]_{\bm{\bar{1}}}$,
    let $\widecheck{\M}^{f_0}$ be the Macaulay matrix that we obtain
    after we permute the columns of \linebreak
    $\texttt{M}_3\texttt{H}(\{f_1,\dots,f_{N},f_{0}\}, \bm{D_{N+1}}, <)$ so
    that the columns indexed by the monomials
    $\{ \bm{x}_h \cdot \bm{x}^\beta : \bm{x}^\beta \in \b \}$ are the
    last ones.
    Let $\widecheck{\M}^{f_0}$ be 
    $\bigl[\begin{smallmatrix} M_{1,1}^{f_0} & M_{1,2}^{f_0} \\
      M_{2,1}^{f_0} & M_{2,2}^{f_0} \end{smallmatrix} \bigr]$, where
    the monomials indexing the columns of
    $\bigl[\begin{smallmatrix} M_{1,2}^{f_0} \\
      M_{2,2}^{f_0} \end{smallmatrix} \bigr]$ are the monomials in
    $\{ \bm{x}_h \cdot \bm{x}^\beta : \bm{x}^\beta \in \b \}$, and the
    polynomials in the rows of
    $\bigl[\begin{smallmatrix} M_{2,1}^{f_0} &
      M_{2,2}^{f_0} \end{smallmatrix} \bigr]$ are of the form
    $\{\bm{x}^\beta \cdot f_0 : \bm{x}^\beta \in \b\}$.
  \end{definition}

   Observe that, the matrix
    $\bigl[\begin{smallmatrix} M_{1,1}^{f_0} &
      M_{1,2}^{f_0} \end{smallmatrix} \bigr]$ is a permutation of  \linebreak
    $\texttt{M}_3\texttt{H}(\{f_1,\dots,f_{N} \}, \bm{D_{N+1}}, <)$, and the
    polynomials in its rows do not involve $f_0$, so we can forget the
    superscripts.
  
  \begin{remark} \label{thm:noSolMfullrank} By
    \Cref{thm:dimensionMinusRank}, if $(f_1,\dots,f_N)$ is
    $0$-dimensional, and $f_0$ does not vanish on
    $V_\P(f_1,\dots,f_N)$, then $\widecheck{\M}^{f_0}$ is invertible.
  \end{remark}
  
  \begin{theorem} \label{thm:monomialBasis}
    Let $\bar{\b}$ be the dehomogenization of the monomials in $\b$.
    If the system $f_1,\dots,f_n$ has no solutions at infinity, then
    $\bar{\b}$ forms a monomial basis for
    $\K[\bar{\bm{x}}] / \langle \bar{f_1}, \dots, \bar{f_N} \rangle$.
  \end{theorem}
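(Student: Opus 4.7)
The plan is to prove that $\bar{\b}$ has the correct cardinality and is linearly independent modulo $\langle \bar{f}_1,\dots,\bar{f}_N\rangle$; the spanning property then follows from a dimension count. For the cardinality, I first observe that the dehomogenization map is injective on each graded piece $\K[\bm{x}]_{\bm{d}}$: given the multidegree $\bm{d}$, the exponents of the $x_{i,0}$ can be recovered from those of the remaining variables, so distinct multihomogeneous monomials of multidegree $\bm{d}$ dehomogenize to distinct elements of $\K[\bar{\bm{x}}]$. Hence $|\bar{\b}| = |\b|$, which by the construction of $\b$ as the complement of $\mathfrak{L}$ equals $\dim_\K [\K[\bm{x}]/\langle f_1,\dots,f_N\rangle]_{\bm{D_N}}$. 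By \Cref{thm:dimensionMinusRank} this is the number of solutions of $(f_1,\dots,f_N)$ in $\P$ counted with multiplicity, and since the system has no solutions at infinity every such solution lies in the affine chart, so this count coincides with $\dim_\K \K[\bar{\bm{x}}]/\langle \bar{f}_1,\dots,\bar{f}_N\rangle$.

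The crucial intermediate step is to establish that, for every $\bm{d} \geq \bm{D_N}$, multiplication by $\bm{x}_h$ induces an isomorphism $[\K[\bm{x}]/\langle f_1,\dots,f_N\rangle]_{\bm{d}} \to [\K[\bm{x}]/\langle f_1,\dots,f_N\rangle]_{\bm{d}+\bm{\bar{1}}}$. Both spaces have the same finite $\K$-dimension by \Cref{thm:dimensionMinusRank}, so surjectivity suffices. The no-solutions-at-infinity assumption ensures $(f_1,\dots,f_N,\bm{x}_h)$ is a regular sequence outside $B$; applying \Cref{thm:dimensionMinusRank} at $k=N+1$ then yields $[\langle f_1,\dots,f_N,\bm{x}_h\rangle]_{\bm{d}+\bm{\bar{1}}} = \K[\bm{x}]_{\bm{d}+\bm{\bar{1}}}$ whenever $\bm{d}+\bm{\bar{1}} \geq \bm{D_{N+1}}$, so every class in the codomain is represented by some $\bm{x}_h \cdot h_0$ modulo $\langle f_1,\dots,f_N\rangle$, giving the isomorphism.

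For linear independence, suppose $\sum_{\bm{x}^\alpha \in \b} c_\alpha \, \bar{\bm{x}}^{\bar\alpha} = \sum_i \bar h_i \bar f_i$ in $\K[\bar{\bm{x}}]$. Multihomogenize each $\bar h_i$ to $h_i^{\ast}$ and multiply each summand by appropriate powers of the $x_{i,0}$ so that all terms share a common multidegree; by enlarging further we may take this multidegree to be of the form $\bm{D_N}+K\bm{\bar{1}}$ for some $K \geq 0$. Setting $g := \sum_{\bm{x}^\alpha \in \b} c_\alpha \bm{x}^\alpha \in \K[\bm{x}]_{\bm{D_N}}$, the injectivity of dehomogenization on $\K[\bm{x}]_{\bm{D_N}+K\bm{\bar{1}}}$ forces the lifted identity $\bm{x}_h^K \, g = \sum_i \tilde h_i f_i$ to hold literally in $\K[\bm{x}]$, so $\bm{x}_h^K g \in \langle f_1,\dots,f_N\rangle$. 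Iterating the $\bm{x}_h$-isomorphism $K$ times, the class of $g$ already vanishes in $[\K[\bm{x}]/\langle f_1,\dots,f_N\rangle]_{\bm{D_N}}$, and since $\b$ represents a basis of this quotient, all $c_\alpha$ vanish. The main obstacle I anticipate is precisely the multidegree bookkeeping in this step: arranging the common multidegree to be a pure $\bm{\bar{1}}$-translate of $\bm{D_N}$ is what lets us invoke the $\bm{x}_h$-isomorphism iteratively instead of having to prove analogous injectivity statements for each individual variable $x_{i,0}$.
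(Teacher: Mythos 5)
Your proof is correct, and it takes a genuinely different route from the paper's at the key step. Both arguments reduce the theorem to linear independence via the dimension count of \Cref{thm:dimensionMinusRank}, and both lift a vanishing affine combination to a relation $\bm{x}_h^{K}\, g \in \langle f_1,\dots,f_N\rangle$ with $g = \sum_\alpha c_\alpha \bm{x}^\alpha$ supported on $\b$. The difference is how the powers of $\bm{x}_h$ are removed. The paper takes a minimal exponent $\omega$, rules out $\omega = 1$ by the invertibility of the matrix $\widecheck{\M}^{\bm{x}_h}$ (\Cref{thm:noSolMfullrank}), and performs the descent through the vanishing of the first Koszul homology of $(f_1,\dots,f_N,\bm{x}_h)$ in degrees $\geq \bm{D_{N+1}}$ (\Cref{thm:multihomMacBound} combined with \Cref{thm:vanishingOfH1}). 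You instead prove that multiplication by $\bm{x}_h$ induces an isomorphism $[\K[\bm{x}]/\langle f_1,\dots,f_N\rangle]_{\bm{d}} \to [\K[\bm{x}]/\langle f_1,\dots,f_N\rangle]_{\bm{d}+\bm{\bar{1}}}$ for $\bm{d}\geq \bm{D_N}$, getting surjectivity from \Cref{thm:dimensionMinusRank} at $k=N+1$ (which is where ``no solutions at infinity'' enters, exactly as in the paper) and injectivity from equality of dimensions at $k=N$, and then iterate the injectivity $K$ times. This handles all powers of $\bm{x}_h$ uniformly, avoids any explicit mention of Koszul homology or of the Schur-complement matrix, and is arguably cleaner; the trade-off is that it leans on the identification of $\dim_\K \K[\bar{\bm{x}}]/\langle \bar{f}_1,\dots,\bar{f}_N\rangle$ with the number of projective solutions counted with multiplicity (a fact the paper also uses implicitly in its dimension count) and, like the paper, it implicitly needs the standing hypothesis that $(f_1,\dots,f_N)$ is $0$-dimensional, i.e.\ a regular sequence outside $B$, so that \Cref{thm:dimensionMinusRank} applies at $k=N$. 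Your bookkeeping of lifting the affine relation to a common multidegree $\bm{D_N}+K\bm{\bar{1}}$ and using injectivity of dehomogenization on a fixed graded piece is sound.
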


  \begin{proof}
    The set $\bar{\b}$ is a monomial basis if its elements are linear
    independent on
    $\K[\bar{\bm{x}}] / \langle \bar{f_1}, \dots, \bar{f_N} \rangle$
    and generate this quotient ring. By \Cref{thm:dimensionMinusRank},
    the dimension of the quotient ring, as a vector space, is the same
    as the number of elements in $\bar{\b}$, so we only need to prove
    the linear independence of the elements in $\bar{\b}$.
    Assume that there is a linear combination
    $\bar{p} := \sum_i c_i \bar{\b}_i$ 
    congruent to $0$ in
    $\K[\bar{\bm{x}}] / \langle \bar{f_1}, \dots, \bar{f_N} \rangle$.
    Then, similarly to \Cref{rmk:biggerDegSameDehom}, there is a
    $\omega \in \N$, such that
    $(\bm{x}_h)^\omega \cdot p \in \langle f_1, \dots, f_N \rangle$,
    where $p := \sum_i c_i \b_i$.
%
    By \Cref{thm:noSolMfullrank}, as the system has no solutions at
    infinity, $\widecheck{\M}^{\bm{x}_h}$ is invertible. The rows of
    $\widecheck{\M}^{\bm{x}_h}$ contain the set
    $\{\bm{x}_h \cdot \b_i\}_i$, so
%
    we can form $\bm{x}_h \cdot p$ by taking a linear combination of
    them. As the matrix is full-rank, this row is independent from the
    polynomials in $[\langle f_1,\dots,f_N \rangle]_{\bm{D_{N+1}}}$
    (\Cref{thm:algMultiHomComputesGenerators}), and then
    $\bm{x}_h \cdot p \not\in [\langle f_1,\dots,f_N
    \rangle]_{\bm{D_{N+1}}}$. Hence, $\omega > 1$.
    The multidegree of $(\bm{x}_h)^\omega \cdot p$ is
    $\bm{D_N} + \omega~\cdot~\bm{\bar{1}}$. As $\omega > 1$, 
    ${\bm{D_N}} + \omega~\cdot~\bm{\bar{1}} \geq \bm{D_{N+1}}$. By
    \Cref{thm:multihomMacBound},
    $[H_1(\mathcal{K}_\bullet(f_1,\dots,f_N, \bm{x}_h \, ; \,
    \K[\bm{x}]))]_{\bm{D_N} + \omega~\cdot~\bm{\bar{1}}} = 0$. Then, by
    \Cref{thm:vanishingOfH1},
    $(\bm{x}_h)^{\omega-1} \cdot p \in \langle f_1, \dots, f_N
    \rangle$. But, assuming minimality of $\omega$,
    $(\bm{x}_h)^{\omega-1} \cdot p \not\in \langle f_1, \dots, f_N
    \rangle$. So, $\bar{p}$ does not exist.
  \end{proof}

  \begin{remark}
    If the system $(f_1,\dots,f_N, \bm{x}_h)$ has no solutions over
    $\P$, by \Cref{thm:noSolMfullrank}, the matrix $M^{\bm{x}_h}$ is
    invertible. As $M_{2,1}^{\bm{x}_h}$ is zero, and $M_{2,2}^{\bm{x}_h}$
    is the identity, the matrix
    $M^{\bm{x}_h}$ is invertible.
  \end{remark}


  \begin{definition}
    When {\small$(f_1\dots f_N)$} has no solutions at infinity,~we
    define {\small $(M_{2,2}^{f_0})^c := M_{2,2}^{f_0} - M_{2,1}^{f_0} \cdot
    M_{1,1}^{-1} \cdot M_{1,2}$}, the {\small\emph{Schur complement}}
    of~{\small $M_{2,2}^{f_0}$}.
  \end{definition}

  \begin{theorem}
    If the system $(f_1,\dots,f_N)$ has no solutions at infinity, then
    the matrix $(M_{2,2}^{f_0})^c$ is the multiplication map of
    $\bar{f_0}$ over
    $\K[\bar{\bm{x}}] / \langle \bar{f_1}, \dots, \bar{f_N} \rangle$,
    with respect to the basis $\bar{\b}$.
  \end{theorem}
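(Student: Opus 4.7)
The plan is to interpret the Schur complement as the outcome of a block row elimination applied to $\widecheck{\M}^{f_0}$: for each $\bm{x}^\beta \in \b$, this elimination rewrites the row $\bm{x}^\beta \cdot f_0$, modulo $[\langle f_1,\dots,f_N\rangle]_{\bm{D_{N+1}}}$, as a linear combination of the basis-column monomials $\{\bm{x}_h \cdot \bm{x}^\gamma : \bm{x}^\gamma \in \b\}$ whose coefficients form the $\beta$-th row of $(M_{2,2}^{f_0})^c$. Dehomogenizing this identity will then yield the claimed multiplication map.

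First I would check that $M_{1,1}$ is invertible, which is necessary just to make sense of $(M_{2,2}^{f_0})^c$. By \Cref{thm:algMultiHomComputesGenerators}, the rows of $\bigl[\begin{smallmatrix} M_{1,1} & M_{1,2} \end{smallmatrix}\bigr]$ span $[\langle f_1,\dots,f_N\rangle]_{\bm{D_{N+1}}}$, and \Cref{thm:dimensionMinusRank} gives $\dim_\K [\K[\bm{x}]/\langle f_1,\dots,f_N\rangle]_{\bm{D_{N+1}}} = |\b|$, so $M_{1,1}$ is square. Any left null vector of $M_{1,1}$ would produce a nonzero element of $[\langle f_1,\dots,f_N\rangle]_{\bm{D_{N+1}}}$ supported only on the basis columns $\{\bm{x}_h \cdot \bm{x}^\gamma\}$; after applying $\dehom$ this would yield a nontrivial linear relation among the elements of $\bar{\b}$ in $\K[\bar{\bm{x}}]/\langle \bar{f_1},\dots,\bar{f_N}\rangle$, contradicting \Cref{thm:monomialBasis}. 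Hence $M_{1,1}$ is invertible.

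Next I would apply the standard block elimination: left-multiplying $\widecheck{\M}^{f_0}$ by $\bigl[\begin{smallmatrix} I & 0 \\ -M_{2,1}^{f_0} M_{1,1}^{-1} & I \end{smallmatrix}\bigr]$ replaces the lower block by $\bigl[\begin{smallmatrix} 0 & (M_{2,2}^{f_0})^c \end{smallmatrix}\bigr]$. Reading the new $\beta$-th row as a polynomial $q_\beta \in \K[\bm{x}]_{\bm{D_{N+1}}}$, we have $q_\beta = \bm{x}^\beta \cdot f_0 - h_\beta$ for some $h_\beta \in [\langle f_1,\dots,f_N\rangle]_{\bm{D_{N+1}}}$, and since the first block of the new row vanishes, $q_\beta = \sum_{\bm{x}^\gamma \in \b} [(M_{2,2}^{f_0})^c]_{\beta,\gamma}\, \bm{x}_h \cdot \bm{x}^\gamma$.

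Finally I would dehomogenize. Using $\dehom(\bm{x}_h) = 1$, $\dehom(\bm{x}^\beta \cdot f_0) = \dehom(\bm{x}^\beta) \cdot \bar{f_0}$, and $\dehom(h_\beta) \in \langle \bar{f_1},\dots,\bar{f_N}\rangle$, the equality $q_\beta = \bm{x}^\beta \cdot f_0 - h_\beta$ becomes
$$ \dehom(\bm{x}^\beta) \cdot \bar{f_0} \equiv \sum_{\bm{x}^\gamma \in \b} [(M_{2,2}^{f_0})^c]_{\beta,\gamma} \cdot \dehom(\bm{x}^\gamma) \pmod{\langle \bar{f_1},\dots,\bar{f_N}\rangle}, $$
which is exactly the matrix of multiplication by $\bar{f_0}$ in the basis $\bar{\b}$. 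The only delicate step is the invertibility of $M_{1,1}$, which rests on the combination of the multihomogeneous Macaulay bound (\Cref{thm:multihomMacBound,thm:dimensionMinusRank}) with the fact that $\bar{\b}$ is a basis of the affine quotient (\Cref{thm:monomialBasis}); everything else is the Schur identity together with bookkeeping through $\dehom$.
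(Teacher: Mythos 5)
Your proposal is correct and follows essentially the same route as the paper: interpret the Schur complement as block elimination to obtain the congruence $\bm{x}^\beta \cdot f_0 \equiv \bm{x}_h \sum_\gamma [(M_{2,2}^{f_0})^c]_{\beta,\gamma}\,\bm{x}^\gamma \bmod \langle f_1,\dots,f_N\rangle$, then dehomogenize and invoke \Cref{thm:monomialBasis}; you simply spell out the elimination step that the paper leaves implicit. The only cosmetic difference is that you justify the invertibility of $M_{1,1}$ via the dimension count and the monomial basis, whereas the paper gets it from the remark preceding the definition of the Schur complement by taking $f_0 = \bm{x}_h$ in \Cref{thm:noSolMfullrank}.
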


  \begin{proof}
    By \Cref{thm:monomialBasis}, $\bar{\b}$ is a monomial
    basis of
    $\K[\bar{\bm{x}}] / \langle \bar{f_1}, \dots, \bar{f_N}
    \rangle$. Hence, for every $i$, 
    $\b_i \cdot f_0 \equiv \bm{x}_h \sum_j (M_{2,2}^{f_0})^c_{i,j}
    \b_j \mod \langle f_1,\dots,f_N \rangle$. If we dehomogenize,
    $\bar{\b}_i \cdot \bar{f_0} \equiv \sum_j (M_{2,2}^{f_0})^c_{i,j}
    \bar{\b}_j \mod \langle \bar{f_1},\dots,\bar{f}_N \rangle$.
  \end{proof}


  \paragraph*{Acknowledgments:}
    We thank Laurent Bus\'e, Marc Chardin, and Joaqu\'in Rodrigues
    Jacinto for the helpful discussions and references.
    We thanks the
    anonymous reviewers for their detailed comments and suggestions.
    The authors are partially supported by ANR JCJC GALOP
    (ANR-17-CE40-0009) and the PGMO grant GAMMA.

  \bibliographystyle{alpha}
  \bibliography{sparsegb}

\begin{thebibliography}{38}
\providecommand{\natexlab}[1]{#1}
\providecommand{\url}[1]{\texttt{#1}}
\expandafter\ifx\csname urlstyle\endcsname\relax
  \providecommand{\doi}[1]{doi: #1}\else
  \providecommand{\doi}{doi: \begingroup \urlstyle{rm}\Url}\fi

\bibitem[Awane et~al.(2005)Awane, Chkiriba, and Goze]{awane2005formes}
A.~Awane, A.~Chkiriba, and M.~Goze.
\newblock Formes d'inertie et complexe de koszul associ{\'e}s {\`a} des
  polyn{\^o}mes plurihomogenes.
\newblock \emph{Revista Matematica Complutense}, 18\penalty0 (1):\penalty0
  243--260, 2005.

\bibitem[Bayer and Mumford(1993)]{bayer1993can}
D.~Bayer and D.~Mumford.
\newblock What can be computed in algebraic geometry?
\newblock In \emph{Proc. Comp. Alg. Geom. and Commut. Algebra}, pages 1--48.
  Cambridge Univ. Press, 1993.

\bibitem[Bender et~al.(2018)Bender, Faug\`ere, Mantzaflaris, and
  Tsigaridas]{bender_2bilinear_2018}
M.~R. Bender, J.-C. Faug\`ere, A.~Mantzaflaris, and E.~Tsigaridas.
\newblock Bilinear systems with two supports: Koszul resultant matrices,
  eigenvalues, and eigenvectors.
\newblock In \emph{Proc. ACM ISSAC}. ACM, ACM, 2018.

\bibitem[Bihan and Sottile(2011)]{bihan2011fewnomial}
F.~Bihan and F.~Sottile.
\newblock Fewnomial bounds for completely mixed polynomial systems.
\newblock \emph{Advances in Geometry}, 11\penalty0 (3):\penalty0 541--556,
  2011.

\bibitem[Botbol(2011)]{botbol_implicitization_2011}
N.~Botbol.
\newblock \emph{Implicitization of rational maps}.
\newblock PhD thesis, UPMC, Sept. 2011.

\bibitem[Botbol and Chardin(2017)]{botbol_castelnuovo_2017}
N.~Botbol and M.~Chardin.
\newblock Castelnuovo {Mumford} regularity with respect to multigraded ideals.
\newblock \emph{Journal of Algebra}, 474:\penalty0 361--392, Mar. 2017.
\newblock ISSN 0021-8693.

\bibitem[Bourbaki(2007)]{bourbaki2007algebre}
N.~Bourbaki.
\newblock \emph{Alg{\`e}bre: chapitre 10. Alg{\`e}bre homologique}, volume~9.
\newblock Springer, 2007.

\bibitem[Brodmann and Sharp(2013)]{brodmann_local_2013}
M.~P. Brodmann and R.~Y. Sharp.
\newblock \emph{Local {Cohomology}: {An} {Algebraic} {Introduction} with
  {Geometric} {Applications}}.
\newblock Cambridge University Press, 2013.
\newblock ISBN 978-0-521-51363-0.

\bibitem[Chardin(2007)]{chardin2007some}
M.~Chardin.
\newblock Some results and questions on castelnuovo-mumford regularity.
\newblock \emph{Lecture Notes in Pure and Applied Mathematics}, 254:\penalty0
  1, 2007.

\bibitem[Cox et~al.(1992)Cox, Little, and O'shea]{cox1992ideals}
D.~Cox, J.~Little, and D.~O'shea.
\newblock \emph{Ideals, varieties, and algorithms}.
\newblock Springer, 1992.

\bibitem[Cox et~al.(2006)Cox, Little, and O'shea]{cox2006using}
D.~Cox, J.~Little, and D.~O'shea.
\newblock \emph{Using algebraic geometry}.
\newblock Springer, 2006.

\bibitem[Cox et~al.(2011)Cox, Little, and Schenck]{cox2011toric}
D.~Cox, J.~Little, and H.~Schenck.
\newblock \emph{Toric varieties}.
\newblock AMS, 2011.

\bibitem[D'Andrea(2002)]{d2002macaulay}
C.~D'Andrea.
\newblock Macaulay style formulas for sparse resultants.
\newblock \emph{Transactions of the American Mathematical Society},
  354\penalty0 (7):\penalty0 2595--2629, 2002.

\bibitem[D'Andrea et~al.(2013)D'Andrea, Krick, and Sombra]{d2011heights}
C.~D'Andrea, T.~Krick, and M.~Sombra.
\newblock Heights of varieties in multiprojective spaces and arithmetic
  nullstellensatze.
\newblock \emph{{Annales Scientifiques de l'\'Ecole Normale Sup\'erieure}},
  46:\penalty0 549--627, 2013.

\bibitem[Eder and Faug{\`e}re(2017)]{eder2014survey}
C.~Eder and J.-C. Faug{\`e}re.
\newblock A survey on signature-based algorithms for computing gr{\"o}bner
  bases.
\newblock \emph{Journal of Symbolic Computation}, 80:\penalty0 719 -- 784,
  2017.

\bibitem[Eisenbud(2013)]{eisenbud2013commutative}
D.~Eisenbud.
\newblock \emph{Commutative Algebra: with a view toward algebraic geometry},
  volume 150.
\newblock Springer, 2013.

\bibitem[El~Din and Schost(2017)]{el2017bit}
M.~S. El~Din and {\'E}.~Schost.
\newblock Bit complexity for multi-homogeneous polynomial system solving -
  application to polynomial minimization.
\newblock \emph{Journal of Symbolic Computation}, 2017.

\bibitem[Emiris(1996)]{emiris1996complexity}
I.~Z. Emiris.
\newblock On the complexity of sparse elimination.
\newblock \emph{Journal of Complexity}, 12\penalty0 (2):\penalty0 134--166,
  1996.

\bibitem[Faug{\`e}re(2002)]{faugere2002F5}
J.~C. Faug{\`e}re.
\newblock A new efficient algorithm for computing gr\"{o}bner bases without
  reduction to zero (f5).
\newblock In \emph{Proc. ACM ISSAC}, pages 75--83. ACM, 2002.
\newblock ISBN 1-58113-484-3.

\bibitem[Faug{\`e}re et~al.(1993)Faug{\`e}re, Gianni, Lazard, and
  Mora]{faugere1993efficient}
J.-C. Faug{\`e}re, P.~Gianni, D.~Lazard, and T.~Mora.
\newblock Efficient computation of zero-dimensional gr{\"o}bner bases by change
  of ordering.
\newblock \emph{Journal of Symbolic Computation}, 16\penalty0 (4):\penalty0
  329--344, 1993.

\bibitem[Faug{\`e}re et~al.(2011)Faug{\`e}re, El~Din, and
  Spaenlehauer]{faugere2011grobner}
J.-C. Faug{\`e}re, M.~S. El~Din, and P.-J. Spaenlehauer.
\newblock Gr{\"o}bner bases of bihomogeneous ideals generated by polynomials of
  bidegree (1, 1): Algorithms and complexity.
\newblock \emph{Journal of Symbolic Computation}, 46\penalty0 (4):\penalty0
  406--437, 2011.

\bibitem[Faug{\`e}re et~al.(2014)Faug{\`e}re, Spaenlehauer, and
  Svartz]{faugere2014sparse}
J.-C. Faug{\`e}re, P.-J. Spaenlehauer, and J.~Svartz.
\newblock Sparse gr{\"o}bner bases: the unmixed case.
\newblock In \emph{Proc. ACM ISSAC}, pages 178--185. ACM, 2014.

\bibitem[Gelfand et~al.(2008)Gelfand, Kapranov, and
  Zelevinsky]{gelfand2008discriminants}
I.~M. Gelfand, M.~Kapranov, and A.~Zelevinsky.
\newblock \emph{Discriminants, resultants, and multidimensional determinants}.
\newblock Springer, 2008.

\bibitem[Giusti et~al.(2001)Giusti, Lecerf, and Salvy]{giusti2001grobner}
M.~Giusti, G.~Lecerf, and B.~Salvy.
\newblock A gr{\"o}bner free alternative for polynomial system solving.
\newblock \emph{Journal of complexity}, 17\penalty0 (1):\penalty0 154--211,
  2001.

\bibitem[H\`{a} and Van~Tuyl(2004)]{ha_regularity_2004}
H.~T. H\`{a} and A.~Van~Tuyl.
\newblock The regularity of points in multi-projective spaces.
\newblock \emph{Journal of Pure and Applied Algebra}, 187\penalty0
  (1-3):\penalty0 153--167, Mar. 2004.
\newblock ISSN 00224049.

\bibitem[Hauenstein and Rodriguez(2015)]{hauenstein2015multiprojective}
J.~D. Hauenstein and J.~I. Rodriguez.
\newblock Multiprojective witness sets and a trace test.
\newblock \emph{arXiv preprint arXiv:1507.07069}, 2015.

\bibitem[Herrero et~al.(2013)Herrero, Jeronimo, and Sabia]{herrero2013affine}
M.~I. Herrero, G.~Jeronimo, and J.~Sabia.
\newblock Affine solution sets of sparse polynomial systems.
\newblock \emph{Journal of Symbolic Computation}, 51:\penalty0 34--54, 2013.

\bibitem[Lazard(1983)]{lazard1983grobner}
D.~Lazard.
\newblock Gr\"{o}bner-bases, gaussian elimination and resolution of systems of
  algebraic equations.
\newblock In \emph{Proc. EUROCAL}, pages 146--156. Springer-Verlag, 1983.
\newblock ISBN 3-540-12868-9.

\bibitem[Li(1997)]{li1997numerical}
T.-Y. Li.
\newblock Numerical solution of multivariate polynomial systems by homotopy
  continuation methods.
\newblock \emph{Acta numerica}, 6:\penalty0 399--436, 1997.

\bibitem[Maclagan and Smith(2004)]{maclagan_multigraded_2004}
D.~Maclagan and G.~G. Smith.
\newblock Multigraded {Castelnuovo}-{Mumford} {Regularity}.
\newblock \emph{J. Reine Angew. Math.}, 2004\penalty0 (571), Jan. 2004.

\bibitem[Maclagan and Smith(2005)]{maclagan2003uniform}
D.~Maclagan and G.~G. Smith.
\newblock Uniform bounds on multigraded regularity.
\newblock \emph{J. Algebraic Geom.}, 14:\penalty0 137--164, 2005.
\newblock arXiv preprint math/0305215.

\bibitem[Miller and Sturmfels(2004)]{miller2004combinatorial}
E.~Miller and B.~Sturmfels.
\newblock \emph{Combinatorial commutative algebra}.
\newblock Springer, 2004.

\bibitem[Sidman and Van~Tuyl(2006)]{sidman2006multigraded}
J.~Sidman and A.~Van~Tuyl.
\newblock Multigraded regularity: syzygies and fat points.
\newblock \emph{Beitr{\"a}ge Algebra Geom}, 47\penalty0 (1):\penalty0 67--87,
  2006.

\bibitem[Sturmfels(1994)]{sturmfels1994newton}
B.~Sturmfels.
\newblock On the newton polytope of the resultant.
\newblock \emph{Journal of Algebraic Combinatorics}, 3\penalty0 (2):\penalty0
  207--236, 1994.

\bibitem[Sturmfels and Zelevinsky(1994)]{sturmfels1994multigraded}
B.~Sturmfels and A.~Zelevinsky.
\newblock Multigraded resultants of sylvester type.
\newblock \emph{Journal of Algebra}, 163\penalty0 (1):\penalty0 115--127, 1994.

\bibitem[Trung(1998)]{trung1998castelnuovo}
N.~Trung.
\newblock The castelnuovo regularity of the rees algebra and the associated
  graded ring.
\newblock \emph{Transactions of the American Mathematical Society},
  350\penalty0 (7):\penalty0 2813--2832, 1998.

\bibitem[Van~der Waerden(1978)]{van1978varieties}
B.~Van~der Waerden.
\newblock On varieties in multiple-projective spaces.
\newblock In \emph{Indagationes Mathematicae (Proceedings)}, volume~81, pages
  303--312. Elsevier, 1978.

\bibitem[Weyman and Zelevinsky(1994)]{weyman1994multigraded}
J.~Weyman and A.~Zelevinsky.
\newblock Multigraded formulae for multigraded resultants.
\newblock \emph{J. Algebr. Geom}, 3\penalty0 (4):\penalty0 569--597, 1994.

\end{thebibliography}

\end{document}